\newcommand{\bea}{\begin{eqnarray}}
\newcommand{\ena}{\end{eqnarray}}
\newcommand{\bean}{\begin{eqnarray*}}
\newcommand{\enan}{\end{eqnarray*}}
\newtheorem{theorem}{Theorem}
\newtheorem{lemma}{Lemma}
\newtheorem{proposition}{Proposition}
\newtheorem{remark}{Remark}
\newtheorem{definition}{Definition}
\newtheorem{corollary}{Corollary}
\newcommand*{\mathcolor}{}
\def\mathcolor#1#{\mathcoloraux{#1}}
\newcommand*{\mathcoloraux}[3]{%
  \protect\leavevmode
  \begingroup
    \color#1{#2}#3%
  \endgroup
}
\let\pdfoutput=\undefined\fi
\chardef\@x10\chardef\@xv60
\def\tcitime{
\def\@time{%
  \@minute\time\@hour\@minute\divide\@hour\@xv
  \ifnum\@hour<\@x 0\fi\the\@hour:%
  \multiply\@hour\@xv\advance\@minute-\@hour
  \ifnum\@minute<\@x 0\fi\the\@minute
  }}%
\def\x@hyperref#1#2#3{%
   \catcode`\~ = 12
   \catcode`\$ = 12
   \catcode`\_ = 12
   \catcode`\# = 12
   \catcode`\& = 12
   \catcode`\% = 12
   \y@hyperref{#1}{#2}{#3}%
}
\def\y@hyperref#1#2#3#4{%
   #2\ref{#4}#3
   \catcode`\~ = 13
   \catcode`\$ = 3
   \catcode`\_ = 8
   \catcode`\# = 6
   \catcode`\& = 4
   \catcode`\% = 14
}
\def\QCTOpt[#1]#2{%
  \def\QCTOptB{#1}
  \def\QCTOptA{#2}
}
\def\QCTNOpt#1{%
  \def\QCTOptA{#1}
  \let\QCTOptB\empty
}
\def\Qct{%
  \@ifnextchar[{%
    \QCTOpt}{\QCTNOpt}
}
\def\QCBOpt[#1]#2{%
  \def\QCBOptB{#1}%
  \def\QCBOptA{#2}%
}
\def\QCBNOpt#1{%
  \def\QCBOptA{#1}%
  \let\QCBOptB\empty
}
\def\Qcb{%
  \@ifnextchar[{%
    \QCBOpt}{\QCBNOpt}%
}
\def\PrepCapArgs{%
  \ifx\QCBOptA\empty
    \ifx\QCTOptA\empty
      {}%
    \else
      \ifx\QCTOptB\empty
        {\QCTOptA}%
      \else
        [\QCTOptB]{\QCTOptA}%
      \fi
    \fi
  \else
    \ifx\QCBOptA\empty
      {}%
    \else
      \ifx\QCBOptB\empty
        {\QCBOptA}%
      \else
        [\QCBOptB]{\QCBOptA}%
      \fi
    \fi
  \fi
}
\def\GRAPHICSPS#1{%
 \ifcase\GRAPHICSTYPE
   \special{ps: #1}%
 \or
   \special{language "PS", include "#1"}%
 \fi
}%
\def\graffile#1#2#3#4{%
    \bgroup
	   \@inlabelfalse
       \leavevmode
       \@ifundefined{bbl@deactivate}{\def~{\string~}}{\activesoff}%
        \raise -#4 \BOXTHEFRAME{%
           \hbox to #2{\raise #3\hbox to #2{\null #1\hfil}}}%
    \egroup
}%
\def\draftbox#1#2#3#4{%
 \leavevmode\raise -#4 \hbox{%
  \frame{\rlap{\protect\tiny #1}\hbox to #2%
   {\vrule height#3 width\z@ depth\z@\hfil}%
  }%
 }%
}%
\let\nographics=\@msidraft
\newif\ifwasdraft
\def\GRAPHIC#1#2#3#4#5{%
   \ifnum\@msidraft=\@ne\draftbox{#2}{#3}{#4}{#5}%
   \else\graffile{#1}{#3}{#4}{#5}%
   \fi
}
\def\addtoLaTeXparams#1{%
    \edef\LaTeXparams{\LaTeXparams #1}}%
\newif\ifBoxFrame \BoxFramefalse
\newif\ifOverFrame \OverFramefalse
\newif\ifUnderFrame \UnderFramefalse
\def\BOXTHEFRAME#1{%
   \hbox{%
      \ifBoxFrame
         \frame{#1}%
      \else
         {#1}%
      \fi
   }%
}
\def\doFRAMEparams#1{\BoxFramefalse\OverFramefalse\UnderFramefalse\readFRAMEparams#1\end}%
\def\readFRAMEparams#1{%
 \ifx#1\end%
  \let\next=\relax
  \else
  \ifx#1i\dispkind=\z@\fi
  \ifx#1d\dispkind=\@ne\fi
  \ifx#1f\dispkind=\tw@\fi
  \ifx#1t\addtoLaTeXparams{t}\fi
  \ifx#1b\addtoLaTeXparams{b}\fi
  \ifx#1p\addtoLaTeXparams{p}\fi
  \ifx#1h\addtoLaTeXparams{h}\fi
  \ifx#1X\BoxFrametrue\fi
  \ifx#1O\OverFrametrue\fi
  \ifx#1U\UnderFrametrue\fi
  \ifx#1w
    \ifnum\@msidraft=1\wasdrafttrue\else\wasdraftfalse\fi
    \@msidraft=\@ne
  \fi
  \let\next=\readFRAMEparams
  \fi
 \next
 }%
\def\IFRAME#1#2#3#4#5#6{%
      \bgroup
      \let\QCTOptA\empty
      \let\QCTOptB\empty
      \let\QCBOptA\empty
      \let\QCBOptB\empty
      #6%
      \parindent=0pt
      \leftskip=0pt
      \rightskip=0pt
      \setbox0=\hbox{\QCBOptA}%
      \@tempdima=#1\relax
      \ifOverFrame
          \typeout{This is not implemented yet}%
          \show\HELP
      \else
         \ifdim\wd0>\@tempdima
            \advance\@tempdima by \@tempdima
            \ifdim\wd0 >\@tempdima
               \setbox1 =\vbox{%
                  \unskip\hbox to \@tempdima{\hfill\GRAPHIC{#5}{#4}{#1}{#2}{#3}\hfill}%
                  \unskip\hbox to \@tempdima{\parbox[b]{\@tempdima}{\QCBOptA}}%
               }%
               \wd1=\@tempdima
            \else
               \textwidth=\wd0
               \setbox1 =\vbox{%
                 \noindent\hbox to \wd0{\hfill\GRAPHIC{#5}{#4}{#1}{#2}{#3}\hfill}\\%
                 \noindent\hbox{\QCBOptA}%
               }%
               \wd1=\wd0
            \fi
         \else
            \ifdim\wd0>0pt
              \hsize=\@tempdima
              \setbox1=\vbox{%
                \unskip\GRAPHIC{#5}{#4}{#1}{#2}{0pt}%
                \break
                \unskip\hbox to \@tempdima{\hfill \QCBOptA\hfill}%
              }%
              \wd1=\@tempdima
           \else
              \hsize=\@tempdima
              \setbox1=\vbox{%
                \unskip\GRAPHIC{#5}{#4}{#1}{#2}{0pt}%
              }%
              \wd1=\@tempdima
           \fi
         \fi
         \@tempdimb=\ht1
         \advance\@tempdimb by -#2
         \advance\@tempdimb by #3
         \leavevmode
         \raise -\@tempdimb \hbox{\box1}%
      \fi
      \egroup%
}%
\def\DFRAME#1#2#3#4#5{%
  \vspace\topsep
  \hfil\break
  \bgroup
     \leftskip\@flushglue
	 \rightskip\@flushglue
	 \parindent\z@
	 \parfillskip\z@skip
     \let\QCTOptA\empty
     \let\QCTOptB\empty
     \let\QCBOptA\empty
     \let\QCBOptB\empty
	 \vbox\bgroup
        \ifOverFrame 
           #5\QCTOptA\par
        \fi
        \GRAPHIC{#4}{#3}{#1}{#2}{\z@}%
        \ifUnderFrame 
           \break#5\QCBOptA
        \fi
	 \egroup
  \egroup
  \vspace\topsep
  \break
}%
\def\FFRAME#1#2#3#4#5#6#7{%
  \@ifundefined{floatstyle}
    {
     \begin{figure}[#1]%
    }
    {
	 \ifx#1h
      \begin{figure}[H]%
	 \else
      \begin{figure}[#1]%
	 \fi
	}
  \let\QCTOptA\empty
  \let\QCTOptB\empty
  \let\QCBOptA\empty
  \let\QCBOptB\empty
  \ifOverFrame
    #4
    \ifx\QCTOptA\empty
    \else
      \ifx\QCTOptB\empty
        \caption{\QCTOptA}%
      \else
        \caption[\QCTOptB]{\QCTOptA}%
      \fi
    \fi
    \ifUnderFrame\else
      \label{#5}%
    \fi
  \else
    \UnderFrametrue%
  \fi
  \begin{center}\GRAPHIC{#7}{#6}{#2}{#3}{\z@}\end{center}%
  \ifUnderFrame
    #4
    \ifx\QCBOptA\empty
      \caption{}%
    \else
      \ifx\QCBOptB\empty
        \caption{\QCBOptA}%
      \else
        \caption[\QCBOptB]{\QCBOptA}%
      \fi
    \fi
    \label{#5}%
  \fi
  \end{figure}%
 }%
\def\makeactives{
  \catcode`\"=\active
  \catcode`\;=\active
  \catcode`\:=\active
  \catcode`\'=\active
  \catcode`\~=\active
}
   \gdef\activesoff{%
      \def"{\string"}%
      \def;{\string;}%
      \def:{\string:}%
      \def'{\string'}%
      \def~{\string~}%
    }
\def\FRAME#1#2#3#4#5#6#7#8{%
 \bgroup
 \ifnum\@msidraft=\@ne
   \wasdrafttrue
 \else
   \wasdraftfalse%
 \fi
 \def\LaTeXparams{}%
 \dispkind=\z@
 \def\LaTeXparams{}%
 \doFRAMEparams{#1}%
 \ifnum\dispkind=\z@\IFRAME{#2}{#3}{#4}{#7}{#8}{#5}\else
  \ifnum\dispkind=\@ne\DFRAME{#2}{#3}{#7}{#8}{#5}\else
   \ifnum\dispkind=\tw@
    \edef\@tempa{\noexpand\FFRAME{\LaTeXparams}}%
    \@tempa{#2}{#3}{#5}{#6}{#7}{#8}%
    \fi
   \fi
  \fi
  \ifwasdraft\@msidraft=1\else\@msidraft=0\fi{}%
  \egroup
 }%
\def\TEXUX#1{"texux"}
\def\func#1{\mathop{\rm #1}\nolimits}%
\long\def\QQQ#1#2{%
     \long\expandafter\def\csname#1\endcsname{#2}}%
\long\def\QQA#1#2{}%
\def\QTR#1#2{{\csname#1\endcsname {#2}}}%
\def\EXPAND#1[#2]#3{}%
\def\NOEXPAND#1[#2]#3{}%
\def\LaTeXparent#1{}%
\def\ChildStyles#1{}%
\def\ChildDefaults#1{}%
\def\QTagDef#1#2#3{}%
  \providecommand{\UNICODE}[2][]{\protect\rule{.1in}{.1in}}
  \providecommand{\U}[1]{\protect\rule{.1in}{.1in}}
\def\QQfnmark#1{\footnotemark}
 \def\abstract{%
  \if@twocolumn
   \section*{Abstract (Not appropriate in this style!)}%
   \else \small 
   \begin{center}{\bf Abstract\vspace{-.5em}\vspace{\z@}}\end{center}%
   \quotation 
   \fi
  }%
   \def\registered{\relax\ifmmode{}\r@gistered
                    \else$\m@th\r@gistered$\fi}%
 \def\r@gistered{^{\ooalign
  {\hfil\raise.07ex\hbox{$\scriptstyle\rm\text{R}$}\hfil\crcr
  \mathhexbox20D}}}}{}%
\newdimen\theight
\def\newfmtname{LaTeX2e}
  \DeclareOldFontCommand{\rm}{\normalfont\rmfamily}{\mathrm}
  \DeclareOldFontCommand{\sf}{\normalfont\sffamily}{\mathsf}
  \DeclareOldFontCommand{\tt}{\normalfont\ttfamily}{\mathtt}
  \DeclareOldFontCommand{\bf}{\normalfont\bfseries}{\mathbf}
  \DeclareOldFontCommand{\it}{\normalfont\itshape}{\mathit}
  \DeclareOldFontCommand{\sl}{\normalfont\slshape}{\@nomath\sl}
  \DeclareOldFontCommand{\sc}{\normalfont\scshape}{\@nomath\sc}
\def\alpha{{\Greekmath 010B}}%
\def\beta{{\Greekmath 010C}}%
\def\gamma{{\Greekmath 010D}}%
\def\delta{{\Greekmath 010E}}%
\def\epsilon{{\Greekmath 010F}}%
\def\zeta{{\Greekmath 0110}}%
\def\eta{{\Greekmath 0111}}%
\def\theta{{\Greekmath 0112}}%
\def\iota{{\Greekmath 0113}}%
\def\kappa{{\Greekmath 0114}}%
\def\lambda{{\Greekmath 0115}}%
\def\mu{{\Greekmath 0116}}%
\def\nu{{\Greekmath 0117}}%
\def\xi{{\Greekmath 0118}}%
\def\pi{{\Greekmath 0119}}%
\def\rho{{\Greekmath 011A}}%
\def\sigma{{\Greekmath 011B}}%
\def\tau{{\Greekmath 011C}}%
\def\upsilon{{\Greekmath 011D}}%
\def\phi{{\Greekmath 011E}}%
\def\chi{{\Greekmath 011F}}%
\def\psi{{\Greekmath 0120}}%
\def\omega{{\Greekmath 0121}}%
\def\varepsilon{{\Greekmath 0122}}%
\def\vartheta{{\Greekmath 0123}}%
\def\varpi{{\Greekmath 0124}}%
\def\varrho{{\Greekmath 0125}}%
\def\varsigma{{\Greekmath 0126}}%
\def\varphi{{\Greekmath 0127}}%
\def\nabla{{\Greekmath 0272}}
\def\FindBoldGroup{%
   {\setbox0=\hbox{$\mathbf{x\global\edef\theboldgroup{\the\mathgroup}}$}}%
}
\def\Greekmath#1#2#3#4{%
    \if@compatibility
        \ifnum\mathgroup=\symbold
           \mathchoice{\mbox{\boldmath$\displaystyle\mathchar"#1#2#3#4$}}%
                      {\mbox{\boldmath$\textstyle\mathchar"#1#2#3#4$}}%
                      {\mbox{\boldmath$\scriptstyle\mathchar"#1#2#3#4$}}%
                      {\mbox{\boldmath$\scriptscriptstyle\mathchar"#1#2#3#4$}}%
        \else
           \mathchar"#1#2#3#4%
        \fi 
    \else 
        \FindBoldGroup
        \ifnum\mathgroup=\theboldgroup 
           \mathchoice{\mbox{\boldmath$\displaystyle\mathchar"#1#2#3#4$}}%
                      {\mbox{\boldmath$\textstyle\mathchar"#1#2#3#4$}}%
                      {\mbox{\boldmath$\scriptstyle\mathchar"#1#2#3#4$}}%
                      {\mbox{\boldmath$\scriptscriptstyle\mathchar"#1#2#3#4$}}%
        \else
           \mathchar"#1#2#3#4%
        \fi     	    
	  \fi}
\newif\ifGreekBold  \GreekBoldfalse
\let\SAVEPBF=\pbf
\def\pbf{\GreekBoldtrue\SAVEPBF}%
  \newcounter{equationnumber}  
  \def\mathletters{%
     \addtocounter{equation}{1}
     \edef\@currentlabel{\theequation}%
     \setcounter{equationnumber}{\c@equation}
     \setcounter{equation}{0}%
     \edef\theequation{\@currentlabel\noexpand\alph{equation}}%
  }
    \def\BibTeX{{\rm B\kern-.05em{\sc i\kern-.025em b}\kern-.08em
                 T\kern-.1667em\lower.7ex\hbox{E}\kern-.125emX}}}{}%
\def\AmS{{\protect\usefont{OMS}{cmsy}{m}{n}%
                A\kern-.1667em\lower.5ex\hbox{M}\kern-.125emS}}}{}%
\def\@@eqncr{\let\@tempa\relax
    \ifcase\@eqcnt \def\@tempa{& & &}\or \def\@tempa{& &}%
      \else \def\@tempa{&}\fi
     \@tempa
     \if@eqnsw
        \iftag@
           \@taggnum
        \else
           \@eqnnum\stepcounter{equation}%
        \fi
     \fi
     \global\tag@false
     \global\@eqnswtrue
     \global\@eqcnt\z@\cr}
\def\TCItag{\@ifnextchar*{\@TCItagstar}{\@TCItag}}
\def\@TCItag#1{%
    \global\tag@true
    \global\def\@taggnum{(#1)}%
    \global\def\@currentlabel{#1}}
\def\@TCItagstar*#1{%
    \global\tag@true
    \global\def\@taggnum{#1}%
    \global\def\@currentlabel{#1}}
\def\tint{\msi@int\textstyle\int}%
\def\tiint{\msi@int\textstyle\iint}%
\def\tiiint{\msi@int\textstyle\iiint}%
\def\tiiiint{\msi@int\textstyle\iiiint}%
\def\tidotsint{\msi@int\textstyle\idotsint}%
\def\toint{\msi@int\textstyle\oint}%
\newtoks\temptoksa
\newtoks\temptoksb
\newtoks\temptoksc
\def\msi@int#1#2{%
 \def\@temp{{#1#2\the\temptoksc_{\the\temptoksa}^{\the\temptoksb}}}%
 \futurelet\@nextcs
 \@int
}
\def\@int{%
   \ifx\@nextcs\limits
      \typeout{Found limits}%
      \temptoksc={\limits}%
	  \let\@next\@intgobble%
   \else\ifx\@nextcs\nolimits
      \typeout{Found nolimits}%
      \temptoksc={\nolimits}%
	  \let\@next\@intgobble%
   \else
      \typeout{Did not find limits or no limits}%
      \temptoksc={}%
      \let\@next\msi@limits%
   \fi\fi
   \@next   
}%
\def\@intgobble#1{%
   \typeout{arg is #1}%
   \msi@limits
}
\def\msi@limits{%
   \temptoksa={}%
   \temptoksb={}%
   \@ifnextchar_{\@limitsa}{\@limitsb}%
}
\def\@limitsa_#1{%
   \temptoksa={#1}%
   \@ifnextchar^{\@limitsc}{\@temp}%
}
\def\@limitsb{%
   \@ifnextchar^{\@limitsc}{\@temp}%
}
\def\@limitsc^#1{%
   \temptoksb={#1}%
   \@ifnextchar_{\@limitsd}{\@temp}%
}
\def\@limitsd_#1{%
   \temptoksa={#1}%
   \@temp
}
\def\dint{\msi@int\displaystyle\int}%
\def\diint{\msi@int\displaystyle\iint}%
\def\diiint{\msi@int\displaystyle\iiint}%
\def\diiiint{\msi@int\displaystyle\iiiint}%
\def\didotsint{\msi@int\displaystyle\idotsint}%
\def\doint{\msi@int\displaystyle\oint}%
\def\dprod{\mathop{\displaystyle \prod }}%
\def\ExitTCILatex{\makeatother }
\if@compatibility\message{amsmath already loaded}\fi\aftergroup\ExitTCILatex}
\if@compatibility\message{amstex already loaded}\fi\aftergroup\ExitTCILatex}
\if@compatibility\message{amsgen already loaded}\fi\aftergroup\ExitTCILatex}
\let\DOTSI\relax
\def\RIfM@{\relax\ifmmode}%
\def\FN@{\futurelet\next}%
\def\iint{\DOTSI\intno@\tw@\FN@\ints@}%
\def\iiint{\DOTSI\intno@\thr@@\FN@\ints@}%
\def\iiiint{\DOTSI\intno@4 \FN@\ints@}%
\def\idotsint{\DOTSI\intno@\z@\FN@\ints@}%
\def\ints@{\findlimits@\ints@@}%
\newif\iflimtoken@
\newif\iflimits@
\def\findlimits@{\limtoken@true\ifx\next\limits\limits@true
 \else\ifx\next\nolimits\limits@false\else
 \limtoken@false\ifx\ilimits@\nolimits\limits@false\else
 \ifinner\limits@false\else\limits@true\fi\fi\fi\fi}%
\def\multint@{\int\ifnum\intno@=\z@\intdots@                          
 \else\intkern@\fi                                                    
 \ifnum\intno@>\tw@\int\intkern@\fi                                   
 \ifnum\intno@>\thr@@\int\intkern@\fi                                 
 \int}
\def\multintlimits@{\intop\ifnum\intno@=\z@\intdots@\else\intkern@\fi
 \ifnum\intno@>\tw@\intop\intkern@\fi
 \ifnum\intno@>\thr@@\intop\intkern@\fi\intop}%
\def\intic@{%
    \mathchoice{\hskip.5em}{\hskip.4em}{\hskip.4em}{\hskip.4em}}%
\def\negintic@{\mathchoice
 {\hskip-.5em}{\hskip-.4em}{\hskip-.4em}{\hskip-.4em}}%
\def\ints@@{\iflimtoken@                                              
 \def\ints@@@{\iflimits@\negintic@
   \mathop{\intic@\multintlimits@}\limits                             
  \else\multint@\nolimits\fi                                          
  \eat@}
 \else                                                                
 \def\ints@@@{\iflimits@\negintic@
  \mathop{\intic@\multintlimits@}\limits\else
  \multint@\nolimits\fi}\fi\ints@@@}%
\def\intkern@{\mathchoice{\!\!\!}{\!\!}{\!\!}{\!\!}}%
\def\plaincdots@{\mathinner{\cdotp\cdotp\cdotp}}%
\def\intdots@{\mathchoice{\plaincdots@}%
 {{\cdotp}\mkern1.5mu{\cdotp}\mkern1.5mu{\cdotp}}%
 {{\cdotp}\mkern1mu{\cdotp}\mkern1mu{\cdotp}}%
 {{\cdotp}\mkern1mu{\cdotp}\mkern1mu{\cdotp}}}%
\def\RIfM@{\relax\protect\ifmmode}
\def\text{\RIfM@\expandafter\text@\else\expandafter\mbox\fi}
\let\nfss@text\text
\def\text@#1{\mathchoice
   {\textdef@\displaystyle\f@size{#1}}%
   {\textdef@\textstyle\tf@size{\firstchoice@false #1}}%
   {\textdef@\textstyle\sf@size{\firstchoice@false #1}}%
   {\textdef@\textstyle \ssf@size{\firstchoice@false #1}}%
   \glb@settings}
\def\textdef@#1#2#3{\hbox{{%
                    \everymath{#1}%
                    \let\f@size#2\selectfont
                    #3}}}
\newif\iffirstchoice@
\def\Let@{\relax\iffalse{\fi\let\\=\cr\iffalse}\fi}%
\def\vspace@{\def\vspace##1{\crcr\noalign{\vskip##1\relax}}}%
\def\multilimits@{\bgroup\vspace@\Let@
 \baselineskip\fontdimen10 \scriptfont\tw@
 \advance\baselineskip\fontdimen12 \scriptfont\tw@
 \lineskip\thr@@\fontdimen8 \scriptfont\thr@@
 \lineskiplimit\lineskip
 \vbox\bgroup\ialign\bgroup\hfil$\m@th\scriptstyle{##}$\hfil\crcr}%
\def\Sb{_\multilimits@}%
\def\endSb{\crcr\egroup\egroup\egroup}%
\def\Sp{^\multilimits@}%
\newdimen\ex@
\def\rightarrowfill@#1{$#1\m@th\mathord-\mkern-6mu\cleaders
 \hbox{$#1\mkern-2mu\mathord-\mkern-2mu$}\hfill
 \mkern-6mu\mathord\rightarrow$}%
\def\leftarrowfill@#1{$#1\m@th\mathord\leftarrow\mkern-6mu\cleaders
 \hbox{$#1\mkern-2mu\mathord-\mkern-2mu$}\hfill\mkern-6mu\mathord-$}%
\def\leftrightarrowfill@#1{$#1\m@th\mathord\leftarrow
\mkern-6mu\cleaders
 \hbox{$#1\mkern-2mu\mathord-\mkern-2mu$}\hfill
 \mkern-6mu\mathord\rightarrow$}%
\def\overrightarrow{\mathpalette\overrightarrow@}%
\def\overrightarrow@#1#2{\vbox{\ialign{##\crcr\rightarrowfill@#1\crcr
 \noalign{\kern-\ex@\nointerlineskip}$\m@th\hfil#1#2\hfil$\crcr}}}%
\def\overleftarrow{\mathpalette\overleftarrow@}%
\def\overleftarrow@#1#2{\vbox{\ialign{##\crcr\leftarrowfill@#1\crcr
 \noalign{\kern-\ex@\nointerlineskip}$\m@th\hfil#1#2\hfil$\crcr}}}%
\def\overleftrightarrow{\mathpalette\overleftrightarrow@}%
\def\overleftrightarrow@#1#2{\vbox{\ialign{##\crcr
   \leftrightarrowfill@#1\crcr
 \noalign{\kern-\ex@\nointerlineskip}$\m@th\hfil#1#2\hfil$\crcr}}}%
\def\underrightarrow{\mathpalette\underrightarrow@}%
\def\underrightarrow@#1#2{\vtop{\ialign{##\crcr$\m@th\hfil#1#2\hfil
  $\crcr\noalign{\nointerlineskip}\rightarrowfill@#1\crcr}}}%
\def\underleftarrow{\mathpalette\underleftarrow@}%
\def\underleftarrow@#1#2{\vtop{\ialign{##\crcr$\m@th\hfil#1#2\hfil
  $\crcr\noalign{\nointerlineskip}\leftarrowfill@#1\crcr}}}%
\def\underleftrightarrow{\mathpalette\underleftrightarrow@}%
\def\underleftrightarrow@#1#2{\vtop{\ialign{##\crcr$\m@th
  \hfil#1#2\hfil$\crcr
 \noalign{\nointerlineskip}\leftrightarrowfill@#1\crcr}}}%
\def\qopnamewl@#1{\mathop{\operator@font#1}\nlimits@}
\let\nlimits@\displaylimits
\def\setboxz@h{\setbox\z@\hbox}
\def\varlim@#1#2{\mathop{\vtop{\ialign{##\crcr
 \hfil$#1\m@th\operator@font lim$\hfil\crcr
 \noalign{\nointerlineskip}#2#1\crcr
 \noalign{\nointerlineskip\kern-\ex@}\crcr}}}}
 \def\rightarrowfill@#1{\m@th\setboxz@h{$#1-$}\ht\z@\z@
  $#1\copy\z@\mkern-6mu\cleaders
  \hbox{$#1\mkern-2mu\box\z@\mkern-2mu$}\hfill
  \mkern-6mu\mathord\rightarrow$}
\def\leftarrowfill@#1{\m@th\setboxz@h{$#1-$}\ht\z@\z@
  $#1\mathord\leftarrow\mkern-6mu\cleaders
  \hbox{$#1\mkern-2mu\copy\z@\mkern-2mu$}\hfill
  \mkern-6mu\box\z@$}
\def\projlim{\qopnamewl@{proj\,lim}}
\def\injlim{\qopnamewl@{inj\,lim}}
\def\varinjlim{\mathpalette\varlim@\rightarrowfill@}
\def\varprojlim{\mathpalette\varlim@\leftarrowfill@}
\def\varliminf{\mathpalette\varliminf@{}}
\def\varliminf@#1{\mathop{\underline{\vrule\@depth.2\ex@\@width\z@
   \hbox{$#1\m@th\operator@font lim$}}}}
\def\varlimsup{\mathpalette\varlimsup@{}}
\def\varlimsup@#1{\mathop{\overline
  {\hbox{$#1\m@th\operator@font lim$}}}}
\def\align{\@verbatim \frenchspacing\@vobeyspaces \@alignverbatim
You are using the "align" environment in a style in which it is not defined.}
\let\csname endalign*\endcsname =\endtrivlist
\def\alignat{\@verbatim \frenchspacing\@vobeyspaces \@alignatverbatim
You are using the "alignat" environment in a style in which it is not defined.}
\let\csname endalignat*\endcsname =\endtrivlist
\def\xalignat{\@verbatim \frenchspacing\@vobeyspaces \@xalignatverbatim
You are using the "xalignat" environment in a style in which it is not defined.}
\let\csname endxalignat*\endcsname =\endtrivlist
\def\gather{\@verbatim \frenchspacing\@vobeyspaces \@gatherverbatim
You are using the "gather" environment in a style in which it is not defined.}
\let\csname endgather*\endcsname =\endtrivlist
\def\multiline{\@verbatim \frenchspacing\@vobeyspaces \@multilineverbatim
You are using the "multiline" environment in a style in which it is not defined.}
\let\csname endmultiline*\endcsname =\endtrivlist
\def\arrax{\@verbatim \frenchspacing\@vobeyspaces \@arraxverbatim
You are using a type of "array" construct that is only allowed in AmS-LaTeX.}
\def\tabulax{\@verbatim \frenchspacing\@vobeyspaces \@tabulaxverbatim
You are using a type of "tabular" construct that is only allowed in AmS-LaTeX.}
\let\csname endarrax*\endcsname =\endtrivlist
\let\csname endtabulax*\endcsname =\endtrivlist
 \def\endequation{%
     \ifmmode\ifinner 
      \iftag@
        \addtocounter{equation}{-1} 
        $\hfil
           \displaywidth\linewidth\@taggnum\egroup \endtrivlist
        \global\tag@false
        \global\@ignoretrue   
      \else
        $\hfil
           \displaywidth\linewidth\@eqnnum\egroup \endtrivlist
        \global\tag@false
        \global\@ignoretrue 
      \fi
     \else   
      \iftag@
        \addtocounter{equation}{-1} 
        \eqno \hbox{\@taggnum}
        \global\tag@false%
        $$\global\@ignoretrue
      \else
        \eqno \hbox{\@eqnnum}
        $$\global\@ignoretrue
      \fi
     \fi\fi
 } 
 \newif\iftag@ \tag@false
 \def\TCItag{\@ifnextchar*{\@TCItagstar}{\@TCItag}}
 \def\@TCItag#1{%
     \global\tag@true
     \global\def\@taggnum{(#1)}%
     \global\def\@currentlabel{#1}}
 \def\@TCItagstar*#1{%
     \global\tag@true
     \global\def\@taggnum{#1}%
     \global\def\@currentlabel{#1}}
     \def\tag{\@ifnextchar*{\@tagstar}{\@tag}}
     \def\@tag#1{%
         \global\tag@true
         \global\def\@taggnum{(#1)}}
     \def\@tagstar*#1{%
         \global\tag@true
         \global\def\@taggnum{#1}}
\begin{document}

\title{A necessary condition for strong hyperbolicity of general first order systems.}
\author{Fernando Abalos${}^{1}$}
\email{jfera18@gmail.com}

\affiliation{${}^{1}$ Facultad de Matem\'atica, Astronom\'\i{}a y F\'\i{}sica, Universidad Nacional de C\'ordoba and IFEG-CONICET, Ciudad Universitaria, X5016LAE C\'ordoba, Argentina }

\begin{abstract}

We study strong hyperbolicity of first order partial differential equations
for systems with differential constraints. In these cases, the number of
equations is larger than the unknown fields, therefore, the standard Kreiss
necessary and sufficient conditions of strong hyperbolicity do not directly
apply. To deal with this problem one introduces a new tensor, called a
reduction, which selects a subset of equations with the aim of using them as
evolution equations for the unknown. If that tensor leads to a strongly
hyperbolic system we call it a hyperbolizer. There might exist many of them
or none.

A question arises on whether a given system admits any hyperbolization at
all. To sort-out this issue, we look for a condition on the system, such
that, if it is satisfied, there is no hyperbolic reduction. To that purpose
we look at the singular value decomposition of the whole system and study
certain one parameter families ($\varepsilon $) of perturbations of the
principal symbol. We look for the perturbed singular values around the
vanishing ones and show that if they behave as $O\left( \varepsilon
^{l}\right) $, with $l\geq 2$, then there does not exist any hyperbolizer.
In addition, we further notice that the validity or failure of this
condition can be established in a simple and invariant way.

Finally we apply the theory to examples in physics, such as Force-Free
Electrodynamics in Euler potentials form and charged fluids with finite
conductivity. We find that they do not admit any hyperbolization.

\end{abstract}


\maketitle


 

\section{INTRODUCTION.\label{cap2}}

Following \cite{geroch1996partial}, we consider a first order system of
partial differential equations on a fiber bundle $b$ (real or complex) with
base manifold $M$ (real) of dimension $n$ 
\begin{equation}
\mathfrak{N}_{~\alpha }^{Aa}\left( x,\phi \right) \nabla _{a}\phi ^{\alpha
}=J^{A}\left( x,\phi \right)  \label{sht_1}
\end{equation}

Here $M$ is the space-time and $x$ are points of it. We call $X_{x}$ the
fiber of $b$ at point $x$ and its dimension $u$. A cross section $\phi $ is
a map from open sets of $M$ to $b$, i.e. $\phi :U\rightarrow b$, they are
the unknown fields. Here $\mathfrak{N}_{~\alpha }^{Aa}$ and $J^{A}$ are
giving fields on $b,$ called the principal symbol and the current of the
theory respectively. That is, they do not depend of the derivative of $\phi $%
, but can depend on $\phi $ and $x.$ The multi-tensorial index $A$ belongs
to a new vector space $E_{x}$ that indicates the space of equations. We call
the dimension of this space $e$, and from now on we shall assume it is equal
or greater than the dimension of $X_{x}$ i.e.,, $e=\dim "A"\geq \dim "\alpha
"=u.$

In many examples of physical interest, system (\ref{sht_1}) can be splitted
into evolution and constraint equations. The first ones define an initial
value problem, namely, they are a set of equations, such that, given data $%
\phi _{0}^{\alpha }=\left. \phi ^{\alpha }\right\vert _{S}$ over a specific
hypersurface $S$ of dimension $n-1$, they determine a unique solution in a
neighborhood of $S$. The second ones restrict the initial data and have to
be fulfilled during evolution. For a detailed discussion see Reula's work 
\cite{reula2004strongly}.

The choice of a coherent set of evolution equations is made in terms of a
new map, $h_{~A}^{\alpha }\left( x,\phi \right) :E_{x}\rightarrow X_{x}$
called a \textbf{reduction}. It takes a linear combination of the whole set
of equations (\ref{sht_1}) and reduces them to a set of dimension $u$, which
will be used for the initial value problem, 
\begin{equation}
h_{~A}^{\alpha }\mathfrak{N}_{~\gamma }^{Aa}\nabla _{a}\phi ^{\gamma
}=h_{~A}^{\alpha }J^{A}\left( x,\phi \right)   \label{eq_evol}
\end{equation}

We would like the above system to be well posed and stable under arbitrary
choices of sources (see \cite{geroch1996partial}, \cite{reula1998hyperbolic}%
, \cite{gustafsson1995time}, \cite{reula2004strongly}, \cite%
{kreiss2004initial}, \cite{hilditch2013introduction}, \cite%
{hadamard2014lectures}, \cite{kreiss2002some}), for that, we shall need for
the tensors $\mathfrak{N}_{~\gamma }^{Aa}$\ and $h_{~A}^{\alpha }\mathfrak{N}%
_{~\gamma }^{Aa}$ to satisfy certain properties which we display in the
following definitions.

Given $\omega _{a}\in T_{x}M^{\ast }$ consider the set of planes $S_{\omega
_{a}}^{%
\mathbb{C}
}=$\{$n_{a}\left( \lambda \right) :=-\lambda \omega _{a}+\beta _{a}$\ for $%
\lambda \in 
\mathbb{C}
$\ and all other covector $\beta _{a}\in T_{x}M^{\ast }$\ not proportional
to $\omega _{a}$\}\footnote{%
This set turns into a set of lines when $\lambda $ run over $%
\mathbb{R}
$ and we call it $S_{\omega _{a}}$.}.\textbf{\ }So, following the covariant
formulation of Reula \cite{reula2004strongly} and Geroch \cite%
{geroch1996partial}, we need to study the kernel of $\mathfrak{N}_{~\gamma
}^{Aa}n\left( \lambda \right) _{a}$ with $n\left( \lambda \right) _{a}\in
S_{\omega _{a}}^{%
\mathbb{C}
}$.

\begin{definition}
System (\ref{sht_1}) is \textbf{hyperbolic} at the point $\left( x,\phi
\right) $, if there exists $\omega _{a}\in T_{x}M^{\ast }$ such that for
each plane $n\left( \lambda \right) _{a}$ in $S_{\omega _{a}}^{%
\mathbb{C}
}$, the principal symbol $\mathfrak{N}_{~\gamma }^{Aa}n\left( \lambda
\right) _{a}$ can only have a non trivial kernel when $\lambda $ is real.
\end{definition}

An important concept for hyperbolic systems are their \textbf{characteristic
surfaces}, they are the set of all covectors $n_{a}\in T_{x}M^{\ast }$ such
that $\mathfrak{N}_{~\gamma }^{Aa}n_{a}$ has non trivial kernel.

The hyperbolicity condition is not sufficient for well posedness, we
strengthen it.

\begin{definition}
\label{def_strongly}$\footnote{%
That the well posedness property follows from studying the hyperbolicity can
be seen by considering a high frequency limit perturbation of a background
solution of (\ref{eq_evol}) as $\tilde{\phi}^{\alpha }=\phi ^{\alpha
}+\varepsilon \delta \phi ^{\alpha }e^{i\frac{f\left( x\right) }{\varepsilon 
}}$ with $\varepsilon $ approaching zero, and resulting in a equation for $%
\delta \phi ^{\alpha }$ and $n_{a}:=\nabla _{a}f$ (see \cite%
{friedrichs1954symmetric})}$ System (\ref{sht_1}) is \textbf{strongly
hyperbolic} at $\left( x,\phi \right) $ (some background solution) if there
exist a covector $\omega _{a}$ and a reduction $h_{~A}^{\alpha }\left(
x,\phi \right) $, such that:

$i)$ $A_{~\gamma }^{\alpha a}\omega _{a}:=h_{~A}^{\alpha }\mathfrak{N}%
_{~\gamma }^{Aa}\omega _{a}$ is invertible, and

$ii)$ For each $n\left( \lambda \right) _{a}\in S_{\omega _{a}}$,%
\begin{equation}
\dim \left( span\left\{ \underset{\lambda \in 
\mathbb{R}
}{\cup }Ker\left\{ A_{~\gamma }^{\alpha a}n\left( \lambda \right)
_{a}\right\} \right\} \right) =u  \label{ker_1}
\end{equation}
\end{definition}

When this happens we refer to reduction $h_{~A}^{\gamma }$ as a \textbf{%
hyperbolizer}. Notice that when the system is strongly hyperbolic, it is
hyperbolic.

Note that from $i,$ $h_{~A}^{\alpha }$ is surjective and there exists $%
\omega _{a}$ such that $\mathfrak{N}_{~\gamma }^{Aa}\omega _{a}$ has no
kernel. And because $A_{~\gamma }^{\alpha a}\omega _{a}$ is invertible, the
set $\left\{ \lambda _{i}\right\} $, such that $A_{~\gamma }^{\alpha
a}n\left( \lambda _{i}\right) _{a}=A_{~\gamma }^{\alpha a}\beta _{a}-\lambda
_{i}A_{~\gamma }^{\alpha a}\omega _{a}$ has kernel, are the eigenvalues of $%
\left( A_{~\beta }^{\alpha a}\omega _{a}\right) ^{-1}A_{~\gamma }^{\beta
a}\beta _{a}$ and they are functions of $\omega _{a}$ and $\beta _{a}.$
Condition $ii$ request that these must be real and $\left( A_{~\beta
}^{\alpha a}\omega _{a}\right) ^{-1}A_{~\gamma }^{\beta a}\beta _{a}$
diagonalizable for any $\beta _{a}$. These eigenvalues are given by the
roots of the polynomial equation $\det \left( h_{~A}^{\alpha }\mathfrak{N}%
_{~\beta }^{Aa}n\left( \lambda \right) _{a}\right) =0$ and the solution $%
n\left( \lambda \right) _{a}$ are called \textbf{characteristic surfaces of the
evolution equations}.

Therefore, an important question is: What are necessary and sufficient
conditions for the principal symbol $\mathfrak{N}_{~\gamma
}^{Aa}n_{a}:X_{x}\rightarrow E_{x}$ to admit a hyperbolizer?

We find a partial answer, namely an algebraic necessary condition (and
sufficient condition for the case without constraints), which is of
practical importance for ruling out theories as unphysical, when they do not
satisfy it.

In general the hyperbolizer depends also on $\beta _{a}$, in that case,
equation (\ref{eq_evol}) becomes a pseudo-differential expression and it is
necessary for it to be well posedness that the hyperbolizer is smooth also
in $\beta _{a}$ \footnote{%
Unlike the usual terminology, we exclude the smoothness condition from
definition \ref{def_strongly}.}. However, smoothness property shall not play
any role in what follows, but it is an issue that should be addressed at
some point of the development of the theory.

To find this condition we shall use the Singular Value Decomposition SVD (we
give a covariant formalism of SVD in appendix \ref{appendix}) of $\mathfrak{N%
}_{~\gamma }^{Aa}n_{a}$ in neighborhood of the characteristic surfaces, and
conclude that the way in which the singular values approach this surfaces
gives information about the size of the kernel.

\section{MAIN RESULTS\label{Main_results}.}

In this section we introduce our main results. Consider any fixed $\theta
\in \left[ 0,2\pi \right] $ and a line $n\left( \lambda \right) _{a}\in
S_{\omega _{a}}$ for some $\omega _{a}$. So we define the extended
two-parameter line $n_{\varepsilon ,\theta }\left( \lambda \right)
_{a}=-\varepsilon e^{i\theta }\omega _{a}+n_{a}\left( \lambda \right) $ with 
$\varepsilon $ real and $0\leq \left\vert \varepsilon \right\vert <<1.$ Then
the perturbed principal symbol results in 
\begin{equation}
\mathfrak{N}_{~\beta }^{Aa}n_{\varepsilon ,\theta }\left( \lambda \right)
_{a}=\left( -\lambda \mathfrak{N}_{~\beta }^{Aa}\omega _{a}+\mathfrak{N}%
_{~\beta }^{Aa}\beta _{a}\right) -\varepsilon \left( e^{i\theta }\mathfrak{N}%
_{~\beta }^{Aa}\omega _{a}\right) .  \label{eq_per_1}
\end{equation}

Moro et. al. \cite{moro2002first} and Soderstrom \cite%
{soderstrom1999perturbation} proved that the singular values of this
perturbed operator admit a Taylor expansion at least up to second order in $%
\left\vert \varepsilon \right\vert ,$\ also they showed explicit formulas to
calculate them (see theorem \ref{TTor} in appendix \ref{appendix_2}). We use
their results to prove a necessary condition for strong hyperbolicity.

Consider first the case that no constraints are present. This is $\dim
"A"=\dim "\alpha "$, and all equations should be considered as evolution
equations. We call it, the "Square" case, since the principal symbol $%
\mathfrak{N}_{~\gamma }^{Aa}n_{a}$, maps between spaces of equal dimensions,
and hence it is a square matrix.

In this case, any invertible reduction tensor $h_{~A}^{\alpha }$ that we
use, would keep the same kernels. Thus strong hyperbolicity is a sole
property of the principal symbol.

For this type of systems the Kreiss's Matrix Theorem (see theorem \ref%
{Kreiss_theorem}) \cite{gustafsson1995time}\cite{kreiss2004initial}\cite%
{kreiss1962stabilitatsdefinition} lists several necessary and sufficient
conditions for strong hyperbolicity. In subsection \ref{square_operators_1}
we shall prove the theorem below, which incorporates to the Kreiss's Matrix
Theorem a further necessary and sufficient condition.

\begin{theorem}
\label{Teo_no_jordan}System (\ref{sht_1}) with $\dim "A"=\dim "\alpha "$ is
strongly hyperbolic if and only if the following conditions are valid:

1- There exists $\omega _{a}$ such that the system is hyperbolic and $%
\mathfrak{N}_{~\gamma }^{Aa}\omega _{a}$ has no kernel.

2- For each line $n\left( \lambda \right) _{a}$ in $S_{\omega _{a}}$
consider any extended one $n_{\varepsilon ,\theta }\left( \lambda \right)
_{a}$ then the principal symbol $\mathfrak{N}_{~\beta }^{Aa}n_{\varepsilon
,\theta }\left( \lambda \right) _{a}$ has only singular values of orders $%
O\left( \left\vert \varepsilon \right\vert ^{0}\right) $ and $O\left(
\left\vert \varepsilon \right\vert ^{1}\right) $.
\end{theorem}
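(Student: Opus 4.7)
The plan is to reduce the square case to an analysis of the one-parameter shift $M(\beta )-(\lambda +\varepsilon e^{i\theta })I$ on $X_{x}$, and to identify the exponents appearing in the Moro--Soderstrom expansion of its singular values with the sizes of the Jordan blocks of $M(\beta )$. By the Kreiss Matrix Theorem (Theorem~\ref{Kreiss_theorem}), strong hyperbolicity in the square case is equivalent to the existence of $\omega _{a}$ such that $N_{~\gamma }^{A}:=\mathfrak{N}_{~\gamma }^{Aa}\omega _{a}$ is invertible and, for every $\beta _{a}$, the matrix $M_{~\gamma }^{\alpha }(\beta ):=(N^{-1})_{~A}^{\alpha }\mathfrak{N}_{~\gamma }^{Aa}\beta _{a}$ has only real eigenvalues and is diagonalizable. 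Condition~1 of the theorem supplies invertibility of $N$ together with the real-spectrum part; the substantive content is to show that condition~2 is equivalent to the diagonalizability of $M(\beta )$ for all $\beta $.

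\emph{Reducing the perturbation.} With $\omega _{a}$ as in condition~1, factor
\begin{equation*}
\mathfrak{N}_{~\gamma }^{Aa}\,n_{\varepsilon ,\theta }(\lambda )_{a}
 = N_{~\alpha }^{A}\bigl(M_{~\gamma }^{\alpha }(\beta )-(\lambda +\varepsilon e^{i\theta })\,\delta _{~\gamma }^{\alpha }\bigr).
\end{equation*}
Since $N$ is invertible, the Weyl-type inequality $\sigma _{\min }(N)\,\sigma _{i}(A)\le \sigma _{i}(NA)\le \sigma _{\max }(N)\,\sigma _{i}(A)$, applied with $A:=M(\beta )-(\lambda +\varepsilon e^{i\theta })I$, shows that the $\varepsilon $-order of each singular value of the left-hand side agrees with the $\varepsilon $-order of the corresponding singular value of the shifted matrix $M(\beta )-(\lambda +\varepsilon e^{i\theta })I$. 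Condition~2 is therefore equivalent to the same $O(|\varepsilon |^{0})/O(|\varepsilon |^{1})$ requirement on these shifted matrices, for every $\beta $, every real $\lambda $, and every $\theta \in [0,2\pi ]$.

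\emph{Reading off the Jordan structure.} For fixed $\beta $ and $\lambda \in \mathbb{R}$, two cases occur. If $\lambda $ is not an eigenvalue of $M(\beta )$, then $M(\beta )-(\lambda +\varepsilon e^{i\theta })I$ is invertible for all sufficiently small $|\varepsilon |$, so every singular value is $O(|\varepsilon |^{0})$. If $\lambda $ is an eigenvalue, the Moro--Soderstrom expansion (Theorem~\ref{TTor}), specialized to the identity-direction perturbation $-\varepsilon e^{i\theta }I$, yields: each Jordan block of $M(\beta )$ of size $k$ at $\lambda $ contributes exactly one vanishing singular value whose leading term is of order $|\varepsilon |^{k}$, the other $k-1$ singular values associated with the block remaining of order $O(|\varepsilon |^{0})$. (This is verified directly on $J_{k}(\lambda )-(\lambda +\varepsilon )I$ from the trace and determinant of its Gram matrix.) Hence all vanishing singular values are of order $O(|\varepsilon |^{1})$ if and only if every Jordan block of $M(\beta )$ at every real eigenvalue has size one, i.e.\ $M(\beta )$ is diagonalizable.

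\emph{Conclusion and main obstacle.} Ranging over all lines $n(\lambda )_{a}\in S_{\omega _{a}}$, all $\beta _{a}$, and all $\theta $, condition~2 becomes equivalent to the diagonalizability of $M(\beta )$ for every $\beta _{a}$. Combined with the real spectrum provided by condition~1 and the Kreiss Matrix Theorem, this is precisely strong hyperbolicity. The main obstacle is the Jordan-to-exponent identification in the previous step: while the one-block case is a direct calculation, the general case, where several blocks of different sizes coalesce at the same eigenvalue, requires the full force of the explicit Puiseux-type formulas of Moro--Soderstrom (Theorem~\ref{TTor}) to guarantee that the leading orders of the perturbed singular values are exactly the block sizes, and not some lower exponent arising from interference between blocks.
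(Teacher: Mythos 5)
There is a genuine gap at the decisive step, and you have in fact flagged it yourself as the ``main obstacle'' without closing it. Your argument hinges on the claim that, for $M(\beta )-(\lambda +\varepsilon e^{i\theta })I$, each Jordan block of size $k$ at $\lambda $ contributes exactly one vanishing singular value of order $|\varepsilon |^{k}$, with no interference between blocks that coalesce at the same eigenvalue. You defer the multi-block case to ``the full force of the explicit Puiseux-type formulas of Moro--Soderstrom,'' but Theorem~\ref{TTor} as available here only provides the expansion $\sigma =|\varepsilon |\xi +O(\varepsilon ^{2})$ together with the explicit first-order coefficient; it cannot by itself certify that a given singular value has order exactly $|\varepsilon |^{k}$ for $k\geq 2$, nor exclude lower exponents arising from block interaction. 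Moreover, the single-block computation you invoke decouples only in a basis adapted to the Jordan form with the Euclidean Hermitian forms in that basis; to transfer the resulting exponents back to the standard inner product on $X_{x}$ you would additionally need the invariance of the $\varepsilon $-orders under change of Hermitian forms (Lemma~\ref{Lemma_metricas}), which your proposal never invokes. This is exactly the route the paper takes for the stronger statement of Theorem~\ref{deg_eigen}, but it is not needed for the theorem at hand.

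The paper closes the gap with a much more economical device that your plan misses entirely: the product identity of Lemma~\ref{lemma_f1},
\begin{equation*}
\prod_{i=1}^{u}\sigma _{i}\left[ K-zB\right] =\Bigl( \prod_{i=1}^{u}\sigma
_{i}\left[ B\right] \Bigr) \,\left\vert \lambda _{1}-z\right\vert
^{r_{1}}\cdots \left\vert \lambda _{k}-z\right\vert ^{r_{k}},
\end{equation*}
combined with Lemma~\ref{Lemma_lam1} (exactly $q_{i}$ singular values vanish at $z=\lambda _{i}$, $q_{i}$ the geometric multiplicity) and with the mere existence of the Taylor expansion in $|\varepsilon |$ (so each vanishing singular value is at least $O(|\varepsilon |)$). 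Setting $z=\lambda _{i}+\varepsilon e^{i\theta }$ yields $|\varepsilon |^{r_{i}-q_{i}}=\xi _{1}\cdots \xi _{q_{i}}\,p+O(\varepsilon )$ with $p\neq 0$, from which ``all vanishing singular values exactly first order $\Longleftrightarrow q_{i}=r_{i}$ for all $i$ $\Longleftrightarrow $ diagonalizable'' follows immediately (Theorem~\ref{Matrix_theorem}, then Corollary~\ref{cor_fk} adds the reality of the spectrum from condition~1). No block-by-block analysis, no higher-order Puiseux data, and no interference issue ever arises. Your reduction to the shifted matrix via the singular-value inequality and your framing through the Kreiss Matrix Theorem~\ref{Kreiss_theorem} are both sound; the proof fails only where it matters most, and the determinant identity is the missing idea.
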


In general we consider systems that fulfill 1, and we refer to 2 as "the
condition for strong hyperbolicity".

We shall also give a couple of examples on how to apply these results: A
simple matrix case of $2\times 2$, in subsection \ref{Example_matrix}, and a
physical example, charged fluids with finite conductivity in subsection \ref%
{Example_fluids}. We shall show that conductivity case is only weakly
hyperbolic.

Consider now $\dim "A">\dim "\alpha "$. In this case, we want to find a
suitable subset of evolution equations. In general if we consider $n\left(
\lambda \right) _{b}\in S_{\omega _{a}}$ and count the dimension of the
kernel of \ $\mathfrak{N}_{~\gamma }^{Bb}n\left( \lambda \right) _{b}$ (the
physical propagation directions), over $\lambda \in 
\mathbb{R}
$, we find that this number is less than $u$. As a consequence we need to
introduce a hyperbolizer in order to increase the kernel and fulfill
condition (\ref{ker_1}).

We call it the "rectangular case" and we find only a necessary condition for
strong hyperbolicity:

\begin{theorem}
\label{sht_Theo_1} When $\dim "A">\dim "\alpha "$ in system \ref{sht_1},
conditions in Theorem \ref{Teo_no_jordan} are still necessary.
\end{theorem}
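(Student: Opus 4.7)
The plan is to leverage the square-case Theorem \ref{Teo_no_jordan} applied to the reduced square operator $A^{\alpha a}_{~\gamma} := h^{\alpha}_{~A}\mathfrak{N}^{Aa}_{~\gamma}$, and then transfer the two necessary conditions back to the original rectangular principal symbol by straightforward kernel and norm comparisons. Assume a hyperbolizer $h^{\alpha}_{~A}$ exists, so that $A^{\alpha a}_{~\gamma}$ is a strongly hyperbolic $u\times u$ operator, and hence satisfies conditions $1$ and $2$ of Theorem \ref{Teo_no_jordan} with some covector $\omega_a$.

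First, I would transfer condition $1$. Since $A^{\alpha a}_{~\gamma}\omega_a = h^{\alpha}_{~A}(\mathfrak{N}^{Aa}_{~\gamma}\omega_a)$ is invertible, any $v^{\gamma}\in\ker\mathfrak{N}^{Aa}_{~\gamma}\omega_a$ must vanish, so $\mathfrak{N}^{Aa}_{~\gamma}\omega_a$ has no kernel. Analogously, for any $n(\lambda)_a\in S^{\mathbb{C}}_{\omega_a}$, a non-trivial $v^{\gamma}\in\ker\mathfrak{N}^{Aa}_{~\gamma}n(\lambda)_a$ also belongs to $\ker A^{\alpha a}_{~\gamma}n(\lambda)_a$, which forces $\lambda\in\mathbb{R}$ by hyperbolicity of $A$. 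This yields condition $1$ for $\mathfrak{N}$ with the same $\omega_a$.

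Second, I would establish condition $2$ by contradiction. Suppose that for some $n(\lambda)_a\in S_{\omega_a}$ and some $\theta\in[0,2\pi]$, the perturbed symbol $\mathfrak{N}^{Aa}_{~\beta}n_{\varepsilon,\theta}(\lambda)_a$ admits a singular value $\sigma(\varepsilon)=O(|\varepsilon|^l)$ with $l\ge 2$, with unit right and left singular vectors $v^{\beta}(\varepsilon)$ and $u_A(\varepsilon)$. Then $\mathfrak{N}^{Aa}_{~\beta}n_{\varepsilon,\theta}(\lambda)_a v^{\beta}(\varepsilon)=\sigma(\varepsilon)\,u^A(\varepsilon)$, so contracting with $h^{\alpha}_{~A}$ and using submultiplicativity of the operator norm gives
\[
\bigl\|h^{\alpha}_{~A}\mathfrak{N}^{Aa}_{~\beta}n_{\varepsilon,\theta}(\lambda)_a v^{\beta}(\varepsilon)\bigr\|\le \|h\|_{\mathrm{op}}\,\sigma(\varepsilon)=O(|\varepsilon|^l).
\]
By the variational characterization of the smallest singular value, this forces the smallest singular value of the perturbed reduced operator $A^{\alpha a}_{~\beta}n_{\varepsilon,\theta}(\lambda)_a$ to also be $O(|\varepsilon|^l)$ with $l\ge 2$, contradicting condition $2$ of Theorem \ref{Teo_no_jordan} applied to the square strongly hyperbolic system $A$. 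Hence condition $2$ must hold for $\mathfrak{N}$ as well.

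The main delicate point is not algebraic but structural: the comparison between vanishing singular values of $\mathfrak{N}$ and those of $h\mathfrak{N}$ must be done coherently in $\varepsilon$, which relies on the Moro--S\"oderstr\"om perturbation result cited in the excerpt (Theorem \ref{TTor}) so that each vanishing singular value of $\mathfrak{N}^{Aa}_{~\beta}n_{\varepsilon,\theta}(\lambda)_a$ admits a Taylor-like expansion to at least second order and the associated singular vectors can be chosen to depend regularly on $\varepsilon$. Since we only need the \emph{existence} of one singular value of order $\ge 2$ in the reduced operator, selecting any single high-order branch of $\mathfrak{N}$ is enough and the possible mixing between multiple vanishing branches is not an obstacle. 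Finally, I would emphasize that only necessity is claimed: even when conditions $1$ and $2$ hold for $\mathfrak{N}$, actually producing an $h$ that renders $A$ strongly hyperbolic is a separate construction problem that the singular-value data alone cannot resolve in the rectangular case.
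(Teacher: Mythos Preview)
Your proposal is correct and follows essentially the same route as the paper. The paper packages the key step as Lemma~\ref{Lemma_des}, the ordered inequality $\sigma_i[h\circ T]\le \sigma_i[T]\max_j\sigma_j[h]$, and applies it to $T_{~\alpha}^{A}(\lambda+\varepsilon e^{i\theta})$ to force a singular value of $h_{~A}^{\alpha}T_{~\beta}^{A}$ to be $O(\varepsilon^m)$ with $m\ge l\ge 2$, then invokes the square-case theorem; your variational-characterization argument is exactly the special case of that inequality needed for the minimum singular value.

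One small simplification: your worry about choosing singular vectors $v^{\beta}(\varepsilon)$ ``regularly'' in $\varepsilon$ is unnecessary. The bound $\sigma_{\min}\bigl[A^{\alpha a}_{~\beta}n_{\varepsilon,\theta}(\lambda)_a\bigr]\le \|h\|\,\sigma(\varepsilon)$ is pointwise in $\varepsilon$ and needs no continuity of $v(\varepsilon)$; combined with the Taylor form $\sigma_j=\xi_j|\varepsilon|+O(\varepsilon^2)$ for the vanishing branches of the square operator (Theorem~\ref{TTor}), an upper bound of order $|\varepsilon|^2$ on the minimum forces some $\xi_j=0$, which is precisely the contradiction with condition~2 of Theorem~\ref{Teo_no_jordan}.
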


As we said before, this condition has practical importance since it can be
checked with a simple calculation (see theorem \ref{Conjetura_final}), thus
discarding as unphysical those systems that do not satisfy it. We prove this
theorem in subsection \ref{Rectangular_operators}, and present its application to a
physically motivated example, namely Force Free electrodynamics in Euler
potentials description, in subsection \ref{Euler_potenciales}. We shall show
that this system does not admit a hyperbolizer (it is weakly hyperbolic) for
any choice of reduction, and we emphasize how simple it is to show that one
of their singular values is order $O\left( \left\vert \varepsilon
\right\vert ^{l}\right) $ with $l\geq 2$, using theorem \ref{Conjetura_final}
in section \ref{chap_examples}.


\section{SINGULAR VALUE DECOMPOSITION, PERTURBATION THEORY AND
DIAGONALIZATION OF LINEAR OPERATOR. \label{singular_value_}}

In this section we shall use the Singular Value Decomposition (SVD) to find
conditions for Jordan diagonalization. Those will be used to prove Theorems %
\ref{Teo_no_jordan} and \ref{sht_Theo_1}, obtaining conditions for strong
hyperbolicity. In appendix \ref{appendix} we describe the SVD theory in
detail. We included it because our approach to the topic is a bit different
than the standard one, as presented in the literature.

In order to prove our main results we shall study the principal symbol $%
\mathfrak{N}_{~\gamma }^{Aa}\left( x,\phi \right) n\left( z\right) _{a}=-z%
\mathfrak{N}_{~\alpha }^{Ba}\omega _{a}+\mathfrak{N}_{~\alpha }^{Ba}\beta
_{a}$ with $n\left( z\right) _{a}\in S_{\omega _{a}}^{%
\mathbb{C}
}$ \footnote{%
Notice that we changed $\lambda $ for $z$ to remember that $z$ belong to $%
\mathbb{C}
.$} for some $\omega _{a}\in T_{x}^{\ast }M$, and perturbations as in
equation (\ref{eq_per_1}). We shall assume that there exists $\omega _{a}$
such that $\mathfrak{N}_{~\alpha }^{Bb}\omega _{b}$ has no kernel and show a
necessary (and sufficient in the square case) condition for the existence of
a reduction $h_{~A}^{\alpha }$ such that $\left( A_{~\mu }^{\alpha a}\omega
_{a}\right) ^{-1}A_{~\gamma }^{\mu a}\beta _{a}$ is diagonalizable, even
with complex eigenvalues; recall that \ $A_{~\gamma }^{\alpha
a}=h_{~A}^{\alpha }\mathfrak{N}_{~\gamma }^{Aa}$. In addition, if we also
request that the system is hyperbolic with this $\omega _{a}$, we would have
completed the above theorems.

In what follows, we present the notation that we will use through the paper.
We shall name square and rectangular operators to those that map spaces of
equal or different dimensions respectively. We call $K_{~\alpha }^{A}\left( x,\phi ,\beta _{a}\right) :=\mathfrak{N}%
_{~\alpha }^{Ba}\beta _{a}$ and $B_{~\alpha }^{A}\left( x,\phi ,\omega
_{a}\right) :=\mathfrak{N}_{~\alpha }^{Bb}\omega _{b}$. Notice that these
operators change with\textbf{\ }$x,\phi ,\beta _{a}$\textbf{\ }and\textbf{\ }%
$x,\phi ,\omega _{a}$ respectively\textbf{. }However,\textbf{\ }the
condition we are looking for are algebraic, so they hold at each particular
point, which we shall assert from now on. In addition, we define%
\begin{equation}
T_{~\alpha }^{A}\left( z\right) :=K_{~\alpha }^{A}-zB_{~\alpha }^{A}=\left( 
\mathfrak{N}_{~\beta }^{Bb}\beta _{a}\right) -z\left( \mathfrak{N}_{~\beta
}^{Bb}\omega _{b}\right) :X\rightarrow E  \label{T_z}
\end{equation}

Note that we have suppressed subindex $x$ in vectorial spaces $X_{x}$ and $%
E_{x}.$

These operators $K_{~\alpha }^{A},B_{~\alpha }^{A}:X\rightarrow E$ take
elements $\phi ^{\alpha }$ in the vector space $X,$ with $\dim \left(
X\right) =u,$ and give elements $l^{A}$ in the vector space $E$, with $\dim
\left( E\right) =e.$ Because we are interested in systems with constraints,
we shall consider operators with $\dim \left( E\right) \geq \dim \left(
X\right) .$ From now on Greek indices go to $1,...,u$ and capital Latin to $%
1,...,e$. We call $X^{\prime }$ and $E^{\prime }$ to the dual spaces of $X$
and $E$ and $\phi _{\alpha }$ and $l_{A}$ to their elements, respectively$.$
We call right kernel of $T_{~\alpha }^{A}$ to the vectors $\phi ^{\alpha }$
such that $T_{~\alpha }^{A}\phi ^{\alpha }=0;$ and we call left kernel to
the covector $l_{A}$ such that $l_{A}T_{~\alpha }^{A}=0.$ We refer to $%
T_{~1}^{A},$ $T_{~2}^{A},...,$ $T_{~u}^{A}$ as the columns of $T_{~\alpha
}^{A}$ and $T_{~\alpha }^{1},T_{~\alpha }^{2},...,T_{~\alpha }^{3}$ as the
rows of $T_{~\alpha }^{A}$. We call $\sigma _{i}\left[ T_{~\alpha
}^{A}\left( z\right) \right] $ to the singular values of $T_{~\alpha
}^{A}\left( z\right) $ for any choice of the Hermitian forms $G_{1AB}$ and $%
G_{2\alpha \beta }$. Finally we use a bar $\bar{T}_{~\alpha }^{A}$ to denote
the complex conjugate of $T_{~\alpha }^{A}.$

The key idea of this section is to perturb the operator (\ref{T_z}) with
another appropriate operator (as in equation (\ref{eq_per_1})), linear in a
real, small, parameter factor $\varepsilon $,\ and study how the singular
values change. For that, we follow \cite{soderstrom1999perturbation}, \cite%
{moro2002first}, \cite{ji1988sensitivity}, \cite{stewart1984second}. In
particular S\"{o}derstr\"{o}m \cite{soderstrom1999perturbation} and Moro et.
al. \cite{moro2002first} show that the singular values have Taylor expansion
in $\left\vert \varepsilon \right\vert $, at least up to order two and this
will be crucial for the following results. They also give closed form
expressions for the first order term, using left and right eigenvectors.

Roughly speaking our first two results are for square operators. We shall
show that an operator is Jordan diagonalizable if and only if, each of their
perturbed singular values are order $O\left( \left\vert \varepsilon
\right\vert ^{0}\right) $ or $O\left( \left\vert \varepsilon \right\vert
^{1}\right) $. Thus we add another equivalent condition to Kreiss's Matrix
Theorem. In addition, we shall extend this result and show that: a perturb
singular value is order $O\left( \left\vert \varepsilon \right\vert
^{l}\right) $ if and only if the operator has an $l-$Jordan block\footnote{%
We called $l-$Jordan block to the matrix $J_{l}\left( \lambda \right)
=\left( 
\begin{array}{cccc}
\lambda  & 1 & 0 & 0 \\ 
0 & ... & ... & 0 \\ 
0 & 0 & \lambda  & 1 \\ 
0 & 0 & 0 & \lambda 
\end{array}%
\right) \in 
\mathbb{C}
^{l\times l}$ with eigenvalue $\lambda .$}, associated to some eigenvalue,
in the Jordan decomposition.

These results lead us to obtain a necessary condition for strong
hyperbolicity on rectangular operators. This is a necessary condition for
the existence of a reduction from rectangular to square operators, such
that, the reduced one is diagonalizable. The conclusion is analogous to the
square case, if any singular value of the perturbed operator is order $%
O\left( \left\vert \varepsilon \right\vert ^{l}\right) $ with $l\geq 2$,
then the system can not be reduced to a diagonalizable operator i.e. strong
hyperbolicity only admits orders $O\left( \left\vert \varepsilon \right\vert
^{0}\right) $ and $O\left( \left\vert \varepsilon \right\vert ^{1}\right) $.
Moreover, if the singular values are order $O\left( \left\vert \varepsilon
\right\vert ^{l}\right) $ then any reduction leads to operators with $l-$%
Jordan blocks or larger.

\subsection{Square operators. \label{square_operators_1}}

We consider first the space of equations such that $\dim \left( E\right)
=\dim \left( X\right) .$ For simplicity we shall identify $E$ with $X$, but
in general there is no natural identification between them. We also consider
a square operator $T_{~\beta }^{\alpha }\left( z\right) =K_{~\beta }^{\alpha
}-zB_{~\beta }^{\alpha }:X\rightarrow X$ with $z\in 
\mathbb{C}
$ and $B_{~\beta }^{\alpha }$ invertible (without right kernel). We call $%
\lambda _{i}$ $i=1,...,k$ the different eigenvalues of $\left( B^{-1}\right)
_{~\alpha }^{\gamma }K_{~\beta }^{\alpha }$;  $q_{i},$ $r_{i}$ their
respective geometric and algebraic multiplicities, and $D_{\lambda
}:=\left\{ \lambda _{i}\text{ with }i=1,...,k\right\} $.

In the following lemma, we shall use the SVD of $T_{~\beta }^{\alpha }\left(
z\right) $ and show for which $z$ the operator $T_{~\beta }^{\alpha }\left(
z\right) $ has vanishing singular values and how many there are.

\begin{lemma}
\label{Lemma_lam1}1) $T_{~\beta }^{\alpha }\left( \lambda _{i}\right) $ has
exactly $q_{i}$ null singular values. The rest of $u-q_{i}$ singular values
of $T_{~\beta }^{\alpha }\left( \lambda _{i}\right) $ are positive.

2) $\sigma _{i}\left[ T_{~\beta }^{\alpha }\left( z\right) \right] >0$ for
all singular values of $T_{~\beta }^{\alpha }\left( z\right) $ if and only
if $z\notin D_{\lambda }$

3) Consider any given subset $L\subset 
\mathbb{C}
,$ then 
\begin{equation*}
\sigma _{i}\left[ T_{~\beta }^{\alpha }\left( z\right) \right] >0\text{ }%
\forall z\in L\text{ and }\forall i=1,...,u
\end{equation*}

if and only if $D_{\lambda }\cap ~L=\phi $
\end{lemma}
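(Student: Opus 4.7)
The plan is to reduce all three items to the basic property of the SVD that the number of vanishing singular values equals the dimension of the (right) kernel, combined with the observation that the kernel of $T_{~\beta}^{\alpha}(z)$ is exactly the eigenspace of $(B^{-1})_{~\alpha}^{\gamma}K_{~\beta}^{\alpha}$ at eigenvalue $z$, whenever $z$ is an eigenvalue at all.

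\textbf{Step 1: Identify the right kernel of $T(z)$.} Since $B_{~\beta}^{\alpha}$ is invertible, we can factor
\begin{equation*}
T_{~\beta}^{\alpha}(z) \;=\; B_{~\gamma}^{\alpha}\bigl[(B^{-1})_{~\mu}^{\gamma}K_{~\beta}^{\mu} - z\,\delta_{~\beta}^{\gamma}\bigr].
\end{equation*}
Invertibility of $B$ then gives $\ker T(z) = \ker\bigl((B^{-1}K) - z\,\mathrm{Id}\bigr)$. In particular, $T(z)$ has nontrivial kernel if and only if $z\in D_\lambda$, and when $z=\lambda_i$ the kernel is the $\lambda_i$-eigenspace of $B^{-1}K$, which has dimension exactly $q_i$ by definition of the geometric multiplicity.

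\textbf{Step 2: Count zero singular values.} From the covariant SVD developed in the appendix (and used throughout the paper), applied to $T_{~\beta}^{\alpha}(z):X\rightarrow X$ with the chosen Hermitian forms $G_{1AB}$, $G_{2\alpha\beta}$, the multiplicity of the zero singular value equals $\dim \ker T(z)$. Combined with Step~1, this gives part~1: $T(\lambda_i)$ has exactly $q_i$ null singular values; the other $u-q_i$ are strictly positive.

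\textbf{Step 3: Deduce parts 2 and 3.} For part~2, all singular values of $T(z)$ are positive if and only if $\ker T(z)=\{0\}$. By Step~1 this is equivalent to $z\notin D_\lambda$. Part~3 is then immediate: $\sigma_i[T(z)]>0$ for all $z\in L$ and all $i$ is the conjunction, over $z\in L$, of the statement in part~2, which is equivalent to $L\cap D_\lambda=\emptyset$.

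\textbf{Main obstacle.} There is no serious difficulty once the SVD/kernel dictionary is in place; the only conceptual point requiring care is that the paper uses the covariant SVD built with general Hermitian forms $G_1,G_2$ rather than the standard Euclidean one. One must therefore invoke the appendix to justify that ``number of zero singular values equals dimension of the kernel'' still holds verbatim in this setting (it does, since positive-definiteness of $G_1,G_2$ ensures the defining generalized eigenvalue problem for $T^\dagger T$ is equivalent to $T\phi=0$ on the zero-eigenvalue subspace). Apart from this, the lemma reduces to elementary linear algebra on $B^{-1}K$.
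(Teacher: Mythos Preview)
Your proof is correct and follows essentially the same approach as the paper: factor $T(z)=B\bigl[(B^{-1}K)-z\,\mathrm{Id}\bigr]$ to identify $\ker T(z)$ with the $z$-eigenspace of $B^{-1}K$, then invoke the SVD/kernel correspondence (the paper cites its Corollary~\ref{Cor_ker}) to count the vanishing singular values, and finally note that parts~2 and~3 are immediate consequences. Your remark on the covariant SVD with general Hermitian forms is exactly what the paper's appendix supplies.
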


\begin{proof}
1) Notice that \ $T_{~\beta }^{\alpha }\left( z\right) =B_{~\eta }^{\alpha
}\left( B^{-1}\right) _{~\gamma }^{\eta }T_{~\beta }^{\gamma }\left(
z\right) =B_{~\eta }^{\alpha }\left( \left( B^{-1}\right) _{~\gamma }^{\eta
}K_{~\beta }^{\gamma }-z\delta _{~\beta }^{\eta }\right) .$ It is clear from
this expression that the $right\_\ker \left( T\left( z\right) \right)
=right\_\ker \left( B^{-1}K-z\delta \right) $. Therefore $T_{~\beta
}^{\alpha }\left( z\right) $ has kernel only when $z$ is equal to one
eigenvalue of $B^{-1}\circ K.$

On the other hand, the singular value decomposition of $T$ is%
\begin{equation*}
T_{~\beta }^{\alpha }\left( z\right) =U_{~i^{\prime }}^{\alpha }\left(
z\right) \Sigma _{~j^{\prime }}^{i^{\prime }}\left( z\right) \left(
V^{-1}\right) _{~\beta }^{j^{\prime }}\left( z\right) 
\end{equation*}

Now $U,$ $\Sigma ,V^{-1}$ are operators that depend on $z$, and from the
orthogonality conditions (\ref{a_lem3_a}), (\ref{a_lem3_b}) in appendix \ref%
{appendix}, $U\left( z\right) $ and $V^{-1}\left( z\right) $ are always
invertible $\forall z\in 
\mathbb{C}
.$ Thus $\Sigma \left( z\right) $ is diagonal and controls the kernel of $T$
(this argument is valid for the rectangular case too). \ Consider now the
case $z=\lambda _{i}$ we know that $\dim \left( right\_\ker T\left( \lambda
_{i}\right) \right) =q_{i}$ but from Corollary \ref{Cor_ker} in appendix %
\ref{appendix}, it is the number of vanishing singular values.

2) and 3) are particular cases of 1).
\end{proof}

The operator $\left( B^{-1}\right) _{~\alpha }^{\gamma }K_{~\beta }^{\alpha
} $ is Jordan diagonalizable when $q_{i}=r_{i}$ $\forall i$ \ and from the
previous Lemma, this is only possible if the dimension of the right kernel%
\footnote{%
Or left kernel, since for square operators the dimension of right and left
kernels are equal.} of $T_{~\beta }^{\alpha }\left( \lambda _{i}\right) $ $%
\forall i$ is maximum. We shall see under which conditions this becomes
true. But first we need a previous Lemma.

Point 1 in the following Lemma is valid for rectangular operators too. We
shall use it also in section \ref{chap_examples} to give a condition for
hyperbolicity in the general case.

\begin{lemma}
\label{lemma_f1}1) Given $P:X\rightarrow E$ a linear rectangular operator
with $\dim \left( E\right) \geq \dim \left( X\right) $. Then 
\begin{equation}
\sqrt{\det \left( P^{\ast }\circ P\right) }=\dprod\limits_{i=1}^{u}\sigma
_{i}\left[ P\right]  \label{lem_f1_e1}
\end{equation}

2) Consider the square operator $T_{~\beta }^{\alpha }\left( z\right)
=K_{~\beta }^{\alpha }-zB_{~\beta }^{\alpha }:X\rightarrow X$. Then 
\begin{equation}
\sqrt{\det \left( T^{\ast }\circ T\right) }=\dprod\limits_{i=1}^{u}\sigma
_{i}\left[ B_{~\beta }^{\alpha }\right] \left\vert \lambda _{1}-z\right\vert
^{r_{1}}...\left\vert \lambda _{k}-z\right\vert
^{r_{k}}=\dprod\limits_{i=1}^{u}\sigma _{i}\left[ T_{~\beta }^{\alpha
}\left( z\right) \right]  \label{lem_sing_eingen}
\end{equation}
\end{lemma}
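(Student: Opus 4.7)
The plan is to exploit the covariant singular value decomposition (SVD) furnished in appendix \ref{appendix}, together with the observation that the characteristic polynomial of $(B^{-1})\circ K$ factorizes through the eigenvalues $\lambda_i$ with their algebraic multiplicities $r_i$. Part 1 is a purely algebraic SVD computation, and part 2 follows from part 1 by factoring $T(z) = B\circ(B^{-1}K - z\,\delta)$.

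For part 1 the key is to write the SVD in the covariant form $P_{\,\alpha}^{A} = U_{\,i'}^{A}\,\Sigma_{\,j'}^{i'}\,(V^{-1})_{\,\alpha}^{j'}$ already used in the proof of Lemma \ref{Lemma_lam1}, where the orthogonality relations (\ref{a_lem3_a}) and (\ref{a_lem3_b}) of appendix \ref{appendix} state that $U$ and $V$ are unitary with respect to the Hermitian forms $G_{1AB}$ and $G_{2\alpha\beta}$. A direct computation then gives
\begin{equation*}
P^{\ast}\circ P \;=\; (V^{-1})^{\ast}\,\Sigma^{\ast}\,U^{\ast}\,U\,\Sigma\,V^{-1} \;=\; (V^{-1})^{\ast}\,(\Sigma^{\ast}\Sigma)\,V^{-1},
\end{equation*}
and, since $e\geq u$, $\Sigma^{\ast}\Sigma$ is a $u\times u$ diagonal matrix with entries $\sigma_i[P]^{2}$. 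Taking determinants and using $|\det V|=1$ yields $\det(P^{\ast}\circ P) = \prod_{i=1}^{u}\sigma_i[P]^{2}$, and extracting the positive square root gives (\ref{lem_f1_e1}).

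For part 2 the rightmost equality in (\ref{lem_sing_eingen}) is immediate from part 1 applied to $P=T(z)$. For the middle equality I would use the factorization
\begin{equation*}
T_{\,\beta}^{\alpha}(z) \;=\; B_{\,\eta}^{\alpha}\,\bigl((B^{-1})_{\,\gamma}^{\eta}K_{\,\beta}^{\gamma} - z\,\delta_{\,\beta}^{\eta}\bigr),
\end{equation*}
which, via multiplicativity of the determinant applied to the square adjoint product $T^{\ast}T$, gives
\begin{equation*}
\det(T^{\ast}T) \;=\; \bigl|\det(B^{-1}K - z\,\delta)\bigr|^{2}\,\det(B^{\ast}B).
\end{equation*}
The characteristic polynomial of $(B^{-1})\circ K$ factors as $\det(B^{-1}K - z\,\delta) = \prod_{j=1}^{k}(\lambda_j - z)^{r_j}$, so $|\det(B^{-1}K - z\,\delta)|^{2} = \prod_{j=1}^{k}|\lambda_j - z|^{2r_j}$. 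Combining with part 1 applied to $B$ itself (which is square and invertible, so all its singular values are positive) gives $\det(B^{\ast}B) = \prod_{i=1}^{u}\sigma_i[B]^{2}$, and extracting the positive square root produces the middle expression in (\ref{lem_sing_eingen}).

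The only delicate step is confirming that the covariant SVD of the appendix behaves like an ordinary unitary SVD under the determinant, so that the two non-diagonal factors contribute a factor of modulus one. This is exactly what the orthogonality relations (\ref{a_lem3_a})--(\ref{a_lem3_b}) are designed to guarantee, and once they are invoked no further calculation is required. Everything else is routine linear algebra.
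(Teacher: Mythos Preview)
Your proof is correct and follows essentially the same route as the paper: SVD plus orthogonality for part~1, and the factorization $T(z)=B\circ(B^{-1}K-z\,\delta)$ together with the characteristic polynomial and part~1 applied to $B$ for part~2. One small remark: the step ``$|\det V|=1$'' is not literally true in the covariant setting unless $\det G_{2}=1$; the cleaner justification (which the paper uses implicitly) is that the orthogonality relation~(\ref{a_lem3_b}) gives $(V^{-1})^{\ast}=V$, so $P^{\ast}P = V(\Sigma^{\ast}\Sigma)V^{-1}$ is a similarity and the determinants of $V$ and $V^{-1}$ cancel directly.
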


\begin{proof}
1) Consider the SVD of $P_{~\alpha }^{A}=\left( U_{P}\right) _{~i}^{A}\left(
\Sigma _{P}\right) _{~i^{\prime }}^{i}\left( V_{P}^{-1}\right) _{~\alpha
}^{i^{\prime }}$. Here $\left( U_{P}\right) _{~i}^{A}\in 
\mathbb{C}
^{e\times e}$, $\left( \Sigma _{P}\right) _{~i^{\prime }}^{i}\in 
\mathbb{R}
^{e\times u}$ and $\left( V_{P}^{-1}\right) _{~\alpha }^{i^{\prime }}\in 
\mathbb{C}
^{u\times u}$\ (see Theorem \ref{SVD_theorem}). Then 
\begin{eqnarray*}
\left( P^{\ast }\circ P\right) _{~\alpha }^{\beta } &=&G_{2}^{\beta \alpha
_{2}}\left( \bar{V}_{P}^{-1}\right) _{~\alpha _{2}}^{i_{2}^{\prime }}\left(
\Sigma _{P}\right) _{~i_{2}^{\prime }}^{j_{2}}\left( \bar{U}_{P}\right)
_{~j_{2}}^{A_{2}}G_{1A_{2}A_{1}}\left( U_{P}\right) _{~j_{1}}^{A_{1}}\left(
\Sigma _{P}\right) _{~i_{1}^{\prime }}^{j_{1}}\left( V_{P}^{-1}\right)
_{~\alpha }^{i_{1}^{\prime }} \\
&=&V_{~i_{1}^{\prime }}^{\beta }\delta _{2}^{i_{1}^{\prime }i_{2}^{\prime
}}\left( \Sigma _{P}\right) _{~i_{2}^{\prime }}^{j_{2}}\delta
_{1j_{2}j_{1}}\left( \Sigma _{P}\right) _{~i_{1}^{\prime }}^{j_{1}}\left(
V_{P}^{-1}\right) _{~\alpha }^{i_{1}^{\prime }}
\end{eqnarray*}%
where we have used the orthogonality conditions $\left( \bar{U}_{P}\right)
_{~i_{2}}^{A_{2}}G_{1A_{2}A_{1}}\left( U_{P}\right) _{~i_{1}}^{A_{1}}=\delta
_{i_{2}i_{1}}$ and\ $G_{2}^{\beta \alpha _{2}}\left( \bar{V}_{P}^{-1}\right)
_{~\alpha _{2}}^{i_{2}^{\prime }}=V_{~j_{2}^{\prime }}^{\beta }\delta
_{2}^{j_{2}^{\prime }i_{2}^{\prime }}$.

Taking determinant and square root 
\begin{eqnarray*}
\sqrt{\det \left( \left( P^{\ast }\circ P\right) _{~\alpha }^{\beta }\right) 
} &=&\sqrt{\det \left( V_{~i_{2}^{\prime }}^{\beta }\delta
_{2}^{i_{2}^{\prime }j_{2}^{\prime }}\left( \Sigma _{P}\right)
_{~j_{2}^{\prime }}^{i_{2}}\delta _{1i_{2}i_{1}}\left( \Sigma _{P}\right)
_{~j_{1}^{\prime }}^{i_{1}}\left( V_{P}^{-1}\right) _{~\alpha
}^{j_{1}^{\prime }}\right) } \\
&=&\sqrt{\det \left( \delta _{2}^{i_{2}^{\prime }j_{2}^{\prime }}\left(
\Sigma _{P}\right) _{~j_{2^{\prime }}}^{i_{2}}\delta _{i_{2}i_{1}}\left(
\Sigma _{P}\right) _{~j_{1}^{\prime }}^{i_{1}}\right) } \\
&=&\dprod\limits_{i=1}^{u}\sigma _{i}\left[ P_{~\alpha }^{A}\right]
\end{eqnarray*}

2) Similarly, taking the determinant of $T^{\ast }\circ T$ we get,%
\begin{eqnarray*}
\sqrt{\det \left( T^{\ast }\circ T\right) } &=&\sqrt{\det \left( G_{2}^{\eta
\beta }\bar{T}_{~\beta }^{\alpha }\left( z\right) G_{1\alpha \gamma
}T_{~\beta }^{\gamma }\left( z\right) \right) } \\
&=&\sqrt{\det \left( G_{2}^{\alpha \rho }\left( \left( \bar{B}^{-1}\right)
_{~\gamma }^{\mu }\bar{K}_{~\rho }^{\gamma }-\bar{z}\delta _{~\rho }^{\mu
}\right) \bar{B}_{~\mu }^{\upsilon }G_{1\upsilon \gamma }B_{~\eta }^{\gamma
}\left( \left( B^{-1}\right) _{~\gamma }^{\eta }K_{~\beta }^{\gamma
}-z\delta _{~\beta }^{\eta }\right) \right) } \\
&=&\sqrt{\det \left( G_{2}^{\alpha \mu }\bar{B}_{~\mu }^{\upsilon
}G_{1\upsilon \gamma }B_{~\eta }^{\gamma }\right) \det \left( \left( \bar{B}%
^{-1}\right) _{~\gamma }^{\mu }\bar{K}_{~\alpha }^{\gamma }-\bar{z}~\delta
_{~\alpha }^{\mu }\right) \det \left( \left( B^{-1}\right) _{~\gamma }^{\eta
}K_{~\beta }^{\gamma }-z\delta _{~\beta }^{\eta }\right) } \\
&=&\sqrt{\det \left( G_{2}^{\alpha \mu }\bar{B}_{~\mu }^{\upsilon
}G_{1\upsilon \gamma }B_{~\eta }^{\gamma }\right) }\left\vert \lambda
_{1}-z\right\vert ^{r_{1}}...\left\vert \lambda _{k}-z\right\vert ^{r_{k}} \\
&=&\sigma _{1}\left[ B\right] ...\sigma _{u}\left[ B\right] \left\vert
\lambda _{1}-z\right\vert ^{r_{1}}...\left\vert \lambda _{k}-z\right\vert
^{r_{k}}
\end{eqnarray*}%
In the fourth line we have used $\det \left( \left( B^{-1}\right) _{~\gamma
}^{\eta }K_{~\beta }^{\gamma }-z\delta _{~\beta }^{\eta }\right) =\left(
\lambda _{1}-z\right) ^{r_{1}}...\left( \lambda _{k}-z\right) ^{r_{k}}$ and
on the last line we have used the first point of the Lemma for $B.$
Therefore using it again for $T,$ we conclude 
\begin{equation*}
\left( \dprod\limits_{i=1}^{u}\sigma _{i}\left[ B\right] \right) \left\vert
\lambda _{1}-z\right\vert ^{r_{1}}...\left\vert \lambda _{k}-z\right\vert
^{r_{k}}=\sqrt{\det \left( T^{\ast }\circ T\right) }=\sigma _{1}\left[ T%
\right] ...\sigma _{u}\left[ T\right] 
\end{equation*}%
\bigskip 
\end{proof}

Notice that if in equation (\ref{lem_sing_eingen})\ we set $z=\lambda
_{1}+\varepsilon $ (with $\varepsilon $ real and small), then the product of
the singular values are order $O\left( \left\vert \varepsilon \right\vert
^{r_{1}}\right) $. Since these singular values have Taylor expansions in $%
\left\vert \varepsilon \right\vert $, if all singular values are $O\left(
\left\vert \varepsilon \right\vert ^{l}\right) $ with $l<2,$ then we need $%
r_{1}$ of them to vanish (that is $O\left( \left\vert \varepsilon
\right\vert ^{1}\right) $). Therefore by the previous Lemma $q_{1}=r_{1}.$
If this happens for all $\lambda _{i}$ then $q_{i}=r_{i}$ $\forall i$ and
the operator $\left( B^{-1}\right) _{~\alpha }^{\gamma }K_{~\beta }^{\alpha }
$ is Jordan diagonalizable.

A formalization of this idea is given in the next theorem. Notice that the
orders of the singular values are invariant under different choices of
Hermitian forms, although the singular values are not. We show this in
appendix \ref{appendix_2}.

\begin{theorem}
\label{Matrix_theorem}The following conditions are equivalent:

1) $\left( B^{-1}\right) _{~\alpha }^{\gamma }K_{~\beta }^{\alpha }$ is
Jordan diagonalizable.

2) $T_{~\beta }^{\alpha }\left( \lambda _{i}\right) =K_{~\beta }^{\alpha
}-\lambda _{i}B_{~\beta }^{\alpha }$ has $r_{i}$ vanishing singular values
for each $\lambda _{i}$.

3) For at least one fixed $\theta \in \left[ 0,2\pi \right] $ and $0\leq
\left\vert \varepsilon \right\vert <<1$ \ with $\varepsilon $\ real, the
singular values of the perturbed operators $T_{~\beta }^{\alpha }\left(
\lambda _{i}+\varepsilon e^{i\theta }\right) =T_{~\beta }^{\alpha }\left(
\lambda _{i}\right) -\varepsilon e^{i\theta }B_{~\beta }^{\alpha }$ are
either of two forms%
\begin{eqnarray}
\sigma _{j}\left[ T_{~\beta }^{\alpha }\left( \lambda _{i}+\varepsilon
e^{i\theta }\right) \right]  &=&\sigma _{j}\left[ T_{~\beta }^{\alpha
}\left( \lambda _{i}\right) \right] +\xi _{j}\varepsilon +O\left(
\varepsilon ^{2}\right) \text{ with }\sigma _{j}\left[ T_{~\beta }^{\alpha
}\left( \lambda _{i}\right) \right] \neq 0\text{ \ or}  \notag \\
\sigma _{j}\left[ T_{~\beta }^{\alpha }\left( \lambda _{i}+\varepsilon
e^{i\theta }\right) \right]  &=&\xi _{j}\left\vert \varepsilon \right\vert
+O\left( \left\vert \varepsilon \right\vert ^{2}\right) \text{ \ \ with}\
\xi _{j}\neq 0  \label{f_1}
\end{eqnarray}

\footnote{%
The orders in $\varepsilon $ are independent of $\theta .$} for all $\lambda
_{i}\in D_{\lambda }$ i.e. none of them is $\sigma \left[ T_{~\beta
}^{\alpha }\left( \lambda _{i}+\varepsilon e^{i\theta }\right) \right]
=O\left( \left\vert \varepsilon \right\vert ^{l}\right) $ with $l\geq 2.$
\end{theorem}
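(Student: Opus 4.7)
My plan is to establish the three-way equivalence as $(1) \Leftrightarrow (2) \Leftrightarrow (3)$, where $(1) \Leftrightarrow (2)$ is essentially a restatement via Lemma \ref{Lemma_lam1}, and $(2) \Leftrightarrow (3)$ combines the product formula of Lemma \ref{lemma_f1} with the Moro--S\"oderstr\"om Taylor expansion of singular values (Theorem \ref{TTor}).

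First, for $(1) \Leftrightarrow (2)$, I would observe that $(B^{-1})_{~\alpha}^{\gamma} K_{~\beta}^{\alpha}$ is Jordan diagonalizable if and only if $q_i = r_i$ for every $\lambda_i \in D_\lambda$. By point 1 of Lemma \ref{Lemma_lam1}, $T_{~\beta}^{\alpha}(\lambda_i)$ has exactly $q_i$ vanishing singular values. Hence $q_i = r_i$ for all $i$ is equivalent to saying each $T(\lambda_i)$ has $r_i$ vanishing singular values. This direction is immediate from the already-proved lemma.

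For $(2) \Rightarrow (3)$, I plug $z = \lambda_i + \varepsilon e^{i\theta}$ into the determinant identity of Lemma \ref{lemma_f1}. The right-hand side becomes
\begin{equation*}
\left(\dprod_{j=1}^{u}\sigma_j[B]\right) |\varepsilon|^{r_i} \dprod_{k \neq i} |\lambda_k - \lambda_i - \varepsilon e^{i\theta}|^{r_k},
\end{equation*}
which is $C|\varepsilon|^{r_i}$ at leading order with $C \neq 0$ for small $|\varepsilon|$. On the other hand, among the $u$ perturbed singular values of $T(\lambda_i + \varepsilon e^{i\theta})$, exactly $u - q_i$ approach strictly positive limits (hence contribute $O(1)$), while the remaining $q_i$ vanish at $\varepsilon = 0$. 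By Theorem \ref{TTor}, each vanishing singular value admits a Taylor expansion in $|\varepsilon|$, so each is of order $O(|\varepsilon|^{m_j})$ for some integer $m_j \geq 1$. Matching the leading orders of the two sides of the determinant identity forces $\sum_{j \text{ vanishing}} m_j = r_i$. Since there are $q_i = r_i$ such indices and each $m_j \geq 1$, all must satisfy $m_j = 1$, giving exactly the expansions in (\ref{f_1}).

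For $(3) \Rightarrow (2)$, I run the same order comparison in reverse: if all perturbed vanishing singular values are $O(|\varepsilon|)$ rather than $O(|\varepsilon|^l)$ with $l \geq 2$, their product is at most $O(|\varepsilon|^{q_i})$; but Lemma \ref{lemma_f1} forces this product to be $O(|\varepsilon|^{r_i})$. Combined with the trivial bound $q_i \leq r_i$, this yields $q_i = r_i$, which is (2). The fact that the orders in $\varepsilon$ are independent of $\theta$ and of the Hermitian forms used to define the singular values (needed to state the condition invariantly) is the content of the appendix material on SVD, which I would cite rather than reprove. The main delicate point---which is really the heart of the theorem---is ensuring the order-matching argument is tight: that we may identify the order of the product with the sum of the orders of the factors. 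This is precisely what the Moro--S\"oderstr\"om result guarantees, and without it one could worry about non-analytic perturbations in which individual singular values do not admit a well-defined leading exponent.
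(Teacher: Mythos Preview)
Your proposal is correct and follows essentially the same approach as the paper: the equivalence $(1)\Leftrightarrow(2)$ via Lemma~\ref{Lemma_lam1}, and the key equivalence with $(3)$ by inserting $z=\lambda_i+\varepsilon e^{i\theta}$ into the determinant identity of Lemma~\ref{lemma_f1} and matching the order $|\varepsilon|^{r_i}$ against the product of the $q_i$ perturbed vanishing singular values expanded via Theorem~\ref{TTor}. The paper organizes the second part as $(1)\Leftrightarrow(3)$ rather than $(2)\Leftrightarrow(3)$ and writes the order comparison explicitly as $|\varepsilon|^{r_i-q_i}=\xi_1\cdots\xi_{q_i}\,p(\lambda_i)+O(\varepsilon)$, but the substance is identical.
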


\begin{proof}
$1)\Longleftrightarrow 2)$ Since the geometric and algebraic multiplicities
are equal for all eigenvalues, i.e. $q_{i}=r_{i}$ $\forall i=1,...,k$.

$3)\Longleftrightarrow 1)$ Using Lemma \ref{lemma_f1} we have 
\begin{equation}
\left( \dprod\limits_{i=1}^{u}\sigma _{i}\left[ B\right] \right) \left\vert
\lambda _{1}-z\right\vert ^{r_{1}}...\left\vert \lambda _{k}-z\right\vert
^{r_{k}}=\sigma _{1}\left[ K-zB\right] ...\sigma _{u}\left[ K-zB\right]
\label{tf_5}
\end{equation}

Set $z=\lambda _{i}+\varepsilon e^{i\alpha }$ with $\varepsilon $ less than 
any distance between the eigenvalues to $\lambda _{i}$ 
\begin{equation}
\varepsilon <\min \left\{ \left\vert \lambda _{i}-\lambda _{j}\right\vert 
\text{ with }j=1,...,u\text{ \ and }i\neq j\right\}   \label{tf_2}
\end{equation}

By Lemma \ref{Lemma_lam1}, we know that $q_{i}$ singular values have to
vanish for $z=\lambda _{i}.$ Suppose they are the first $q_{i}$, we call
them $\left( \sigma _{\lambda _{i}}\right) _{j}\left[ K-zB\right] $, with $%
j=1,...,q_{i}$ then we rewrite equation (\ref{tf_5})%
\begin{equation}
\varepsilon ^{r_{i}}=\left( \sigma _{\lambda _{i}}\right) _{1}\left[ \left(
K-\lambda _{i}B\right) +\varepsilon \left( -e^{i\theta }B\right) \right]
...\left( \sigma _{\lambda _{i}}\right) _{q_{i}}\left[ \left( K-\lambda
_{i}B\right) +\varepsilon \left( -e^{i\theta }B\right) \right] ~\left.
p\left( z\right) \right\vert _{z=\lambda _{i}+\varepsilon e^{i\theta }}
\label{tf_3}
\end{equation}%
where

\small{
$\left. p\left( z\right) \right\vert _{z=\lambda _{i}+\varepsilon
e^{i\alpha }}=\left. \frac{\sigma _{q_{i}+1}\left( K-zB\right) ...\sigma
_{u}\left( K-zB\right) }{\left( \dprod\limits_{i=1}^{u}\sigma _{i}\left[ B%
\right] \right) \left\vert \lambda _{1}-\lambda _{i}-\varepsilon e^{i\alpha
}\right\vert ^{r_{1}}...\left\vert \lambda _{i-1}-\lambda _{i}-\varepsilon
e^{i\alpha }\right\vert ^{r_{i-1}}\left\vert \lambda _{i+1}-\lambda
_{i}-\varepsilon e^{i\alpha }\right\vert ^{r_{i+1}}...\left\vert \lambda
_{k}-\lambda _{i}-\varepsilon e^{i\alpha }\right\vert ^{r_{k}}}\right\vert
_{z=\lambda _{i}+\varepsilon e^{i\theta }}$}. Note that $\left. p\left(
z\right) \right\vert _{z=\lambda _{i}+\varepsilon e^{i\alpha }}$ does non
vanish for Lemma \ref{Lemma_lam1} and does not blow up for an $\varepsilon $
small enough because of equation (\ref{tf_2}).

We know for $2$ in theorem \ref{TTor}, in appendix \ref{appendix_2}, that
for $\left\vert \varepsilon \right\vert <<1$ the $\sigma ^{\prime }s$ can be
expanded as 
\begin{eqnarray*}
\left( \sigma _{\lambda _{i}}\right) _{j}\left[ K-\left( \lambda
_{i}+\varepsilon e^{i\theta }\right) B\right] &=&\left( \sigma _{\lambda
_{i}}\right) _{j}\left[ \left( K-\lambda _{i}B\right) +\varepsilon \left(
-e^{i\theta }B\right) \right] \\
&=&\left\vert \varepsilon \right\vert \xi _{j}+O\left( \varepsilon
^{2}\right)
\end{eqnarray*}

For some $\xi _{j}$ as in equation (\ref{TTor_1}). 

If we replace the last expression in (\ref{tf_3}) we obtain 
\begin{equation}
\left\vert \varepsilon \right\vert ^{r_{i}-q_{i}}=\xi _{1}...\xi
_{q_{i}}\left. p\left( z\right) \right\vert _{z=\lambda _{i}+\varepsilon
e^{i\alpha }}+O\left( \varepsilon \right)   \label{eq_proof_epsilon}
\end{equation}

Therefore:

$\blacklozenge $ $\ 3)\Longrightarrow 1)$ By hypothesis $\xi _{j}\neq 0$ for 
$j=1,...,q_{i}$; then equation (\ref{eq_proof_epsilon}) can only be valid if 
$q_{i}=r_{i}$ $\forall i=1,...,k$ (taking small enough $\varepsilon $).
Therefore $\left( B^{-1}\right) _{~\alpha }^{\gamma }K_{~\beta }^{\alpha }$
is diagonalizable.

$\blacklozenge $ $\ 1)\Longrightarrow 3)$ If $\left( B^{-1}\right) _{~\alpha
}^{\gamma }K_{~\beta }^{\alpha }$ is diagonalizable then $r_{i}=q_{i}$ and
taking $\varepsilon \rightarrow 0$ we obtain $1=\xi _{1}...\xi
_{q_{i}}\left. p\left( z\right) \right\vert _{z=\lambda _{1}}.$ Which
implies $\xi _{j}\neq 0$ for $j=1,...,q_{i}.$ Because $r_{i}=q_{i}$ \ for
all $i,$ we conclude the proof.
\end{proof}

An interpretation of condition $3$ in the above theorem is the following,
for any non Jordan diagonalizable square operator, you can always find a
right eigenvector, such that, the contraction of it with all left
eigenvectors vanishes. This is clearly impossible if the operator is
diagonalizable. We show this in the next example, consider the matrix 
\begin{equation*}
K_{~\beta }^{\alpha }=P_{~i^{\prime }}^{\alpha }\left( 
\begin{array}{cccc}
\lambda _{1} & 0 & 0 & 0 \\ 
0 & \lambda _{1} & 1 & 0 \\ 
0 & 0 & \lambda _{1} & 1 \\ 
0 & 0 & 0 & \lambda _{1}%
\end{array}%
\right) _{~j^{\prime }}^{i^{\prime }}\left( P^{-1}\right) _{~\beta
}^{j^{\prime }}
\end{equation*}

and $B_{~\beta }^{\alpha }=\delta _{~\beta }^{\alpha }$ the identity matrix.
We call $v_{1,2}$ to the right eigenvectors and $u_{1,2}$ to the left
eigenvectors%
\begin{equation*}
\begin{array}{ccc}
\left( v_{1}\right) ^{\alpha }=P_{~i^{\prime }}^{\alpha }\left( 
\begin{array}{c}
1 \\ 
0 \\ 
0 \\ 
0%
\end{array}%
\right) ^{i^{\prime }} &  & \left( v_{2}\right) ^{\alpha }=P_{~i^{\prime
}}^{\alpha }\left( 
\begin{array}{c}
0 \\ 
1 \\ 
0 \\ 
0%
\end{array}%
\right) ^{i^{\prime }}%
\end{array}%
\end{equation*}%
\begin{equation*}
\begin{array}{ccc}
\left( u_{1}\right) _{\alpha }=\left( 
\begin{array}{cccc}
1 & 0 & 0 & 0%
\end{array}%
\right) _{j^{\prime }}\left( P^{-1}\right) _{~\alpha }^{j^{\prime }} &  & 
\left( u_{2}\right) _{~\alpha }=\left( 
\begin{array}{cccc}
0 & 0 & 0 & 1%
\end{array}%
\right) _{j^{\prime }}\left( P^{-1}\right) _{~\alpha }^{j^{\prime }}%
\end{array}%
\end{equation*}

then 
\begin{equation}
\left( u_{1,2}\right) _{\alpha }\left( v_{2}\right) ^{\alpha }=0
\label{contract_ker}
\end{equation}

Theorem \ref{TTor}, in appendix \ref{appendix_2}, tells us how to calculate
the coefficients of first order perturbation of the singular values of $%
T_{~\beta }^{\alpha }\left( \lambda _{1}+\varepsilon e^{i\theta }\right)
=\left( K_{~\beta }^{\alpha }-\lambda _{1}\delta _{~\beta }^{\alpha }\right)
-\varepsilon e^{i\theta }\delta _{~\beta }^{\alpha }$. They are given by $%
\xi _{i}=\sigma _{i}\left[ L_{~k}^{j}\right] $ with $L_{~k}^{j}:=\left( 
\begin{array}{c}
\left( u_{1}\right) _{\alpha } \\ 
\left( u_{2}\right) _{\alpha }%
\end{array}%
\right) \delta _{~\beta }^{\alpha }\left( \left( v_{1}\right) ^{\beta
},\left( v_{2}\right) ^{\beta }\right) $. But because of equation (\ref%
{contract_ker}), $L_{~k}^{j}$ has kernel $\left( 
\begin{array}{c}
0 \\ 
1%
\end{array}%
\right) $ and so $\xi _{2}=0$. Thus $T_{~\beta }^{\alpha }\left( \lambda
_{1}+\varepsilon e^{i\theta }\right) $ has a singular value of order $%
O\left( \varepsilon ^{2}\right) $ and it is not diagonalizable.

Let us go back to the general case. When the eigenvalues of $\left(
B^{-1}\right) _{~\alpha }^{\gamma }K_{~\beta }^{\alpha }$ are real, then the
below Corollary follows. This is equivalent to Theorem \ref{Teo_no_jordan}.

\begin{corollary}
\label{cor_fk}The next conditions are equivalent

1) $\left( B^{-1}\right) _{~\alpha }^{\gamma }K_{~\beta }^{\alpha }$ is
Jordan diagonalizable with real eigenvalues

2) All singular values satisfy 
\begin{equation}
\sigma _{j}\left[ T_{~\beta }^{\alpha }\left( x+iy\right) \right] >0\text{
with }x,y\in 
\mathbb{R}
\text{ and }y\neq 0  \label{cor_fk_2}
\end{equation}

For at least one fixed $\theta \in \left[ 0,2\pi \right] $ and$\ 0\leq
\left\vert \varepsilon \right\vert <<1$ with $\varepsilon $ real,%
\begin{eqnarray}
\sigma _{j}\left[ T_{~\beta }^{\alpha }\left( x+i\varepsilon e^{i\theta
}\right) \right]  &=&\sigma _{j}\left[ T_{~\beta }^{\alpha }\left( x\right) %
\right] +\xi _{j}\varepsilon +O\left( \varepsilon ^{2}\right) \text{ with }%
\sigma _{j}\left[ T_{~\beta }^{\alpha }\left( x\right) \right] \neq 0\text{
\ or}  \notag \\
\sigma _{j}\left[ T_{~\beta }^{\alpha }\left( x+i\varepsilon e^{i\theta
}\right) \right]  &=&\xi _{j}\left\vert \varepsilon \right\vert +O\left(
\varepsilon ^{2}\right) \text{ \ with }\xi _{j}\neq 0  \label{cor_fk_1}
\end{eqnarray}

for any $x\in 
\mathbb{R}
$ i.e. none of them is $\sigma \left[ T_{~\beta }^{\alpha }\left(
x+i\varepsilon e^{i\theta }\right) \right] =O\left( \left\vert \varepsilon
\right\vert ^{l}\right) $ with $l\geq 2$.
\end{corollary}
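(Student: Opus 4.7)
The plan is to obtain this Corollary as a direct specialization of Theorem~\ref{Matrix_theorem} and Lemma~\ref{Lemma_lam1}(3), by splitting condition~$1$ into two separate pieces: reality of the spectrum $D_\lambda\subset\mathbb{R}$, and Jordan diagonalizability of $(B^{-1})_{~\alpha}^{\gamma}K_{~\beta}^{\alpha}$. These will correspond respectively to the positivity half of condition~$2$ and to its Taylor expansion half, connected by the observation that $ie^{i\theta}=e^{i(\theta+\pi/2)}$, so perturbing a real $x$ by $i\varepsilon e^{i\theta}$ is the same straight-line perturbation analyzed in Theorem~\ref{Matrix_theorem}, only rotated by $\pi/2$.

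For the direction $(1)\Rightarrow(2)$, assuming $D_\lambda\subset\mathbb{R}$ I would first apply Lemma~\ref{Lemma_lam1}(3) with the set $L=\{x+iy:x,y\in\mathbb{R},\ y\neq 0\}$; since $D_\lambda\cap L=\emptyset$, every $\sigma_j[T(x+iy)]$ is positive, yielding~\eqref{cor_fk_2}. For the expansion, I would distinguish two cases: if $x\notin D_\lambda$ then $T(x)$ is invertible by Lemma~\ref{Lemma_lam1}(2), and the first-order perturbation result Theorem~\ref{TTor} gives an expansion of the form $\sigma_j[T(x)]+\xi_j\varepsilon+O(\varepsilon^2)$ with a nonzero leading term; if $x=\lambda_i$, I rewrite $x+i\varepsilon e^{i\theta}=\lambda_i+\varepsilon e^{i(\theta+\pi/2)}$ and invoke condition~$3$ of Theorem~\ref{Matrix_theorem} with angle $\theta+\pi/2$, whose hypothesis holds because $(B^{-1})K$ is assumed diagonalizable.

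For the converse $(2)\Rightarrow(1)$, the positivity half of condition~$2$ together with Lemma~\ref{Lemma_lam1}(3) forces $D_\lambda\subset\mathbb{R}$, because any non-real eigenvalue $\lambda_i$ would lie in the set $L$ above and would produce a vanishing singular value of $T(\lambda_i)$ by Lemma~\ref{Lemma_lam1}(1), contradicting strict positivity. Having secured reality of the spectrum, the expansion hypothesis evaluated at each real $x=\lambda_i$ becomes, after the same rotation $\theta\mapsto\theta+\pi/2$, exactly condition~$3$ of Theorem~\ref{Matrix_theorem}; that theorem then returns Jordan diagonalizability of $(B^{-1})K$, completing condition~$1$.

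I expect the only mildly delicate point to be verifying that the rotation $\theta\mapsto\theta+\pi/2$ does not affect the conclusions of Theorem~\ref{Matrix_theorem}: this holds because that theorem is stated for any fixed $\theta\in[0,2\pi]$ and because the orders of the singular values in $|\varepsilon|$ are independent of $\theta$, as noted in the footnote attached to~\eqref{f_1} and in the discussion of Appendix~\ref{appendix_2}. Once this bookkeeping is done, no perturbation estimate beyond Theorem~\ref{TTor} is required.
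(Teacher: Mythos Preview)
Your proposal is correct and follows essentially the same route as the paper: both directions are reduced to Lemma~\ref{Lemma_lam1} (for the equivalence between strict positivity off the real axis and $D_\lambda\subset\mathbb{R}$) together with Theorem~\ref{Matrix_theorem} (for the equivalence between diagonalizability and the order-one expansion). The paper's proof is extremely terse---it simply says each direction ``follows directly'' from these results---whereas you spell out the rotation $ie^{i\theta}=e^{i(\theta+\pi/2)}$ and the case split $x\notin D_\lambda$ versus $x=\lambda_i$ explicitly; this added bookkeeping is exactly what the paper leaves implicit (and is consistent with the remark immediately following the proof, where the paper takes $\theta=\pi/2$ to connect with Kreiss's condition).
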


\begin{proof}
$1)\Longrightarrow 2).$ It follows directly from Theorem \ref{Matrix_theorem}%
.

$2)$ $\Longrightarrow 1).$ Because $\sigma _{i}\left[ T_{~\beta }^{\alpha
}\left( z\right) \right] >0$ $\forall i$ and $\forall z\in S$ with $%
S=\left\{ z\in 
\mathbb{C}
\text{ }/\text{ }\func{Im}\left( z\right) \neq 0\right\} ,$ then from Lemma %
\ref{Lemma_lam1}, $S\cap D_{\lambda }=\phi .$ Therefore the eigenvalues are
real. The second part also follows form Theorem \ref{Matrix_theorem}.
\end{proof}

An alternative proof to Theorem \ref{Matrix_theorem} can be obtained
directly showing that condition 2 in Corollary \ref{cor_fk}, with $\theta =%
\frac{\pi }{2},$ is equivalent to one of the conditions in Kreiss's Matrix
Theorem. Indeed, an alternative formulation is given by

\begin{theorem}[Part of Kreiss matrix Theorem]
\label{Kreiss_theorem}The square operator $K:V\rightarrow V$ is Jordan
diagonalizable with real eigenvalues if and only if for any $x,y\in 
\mathbb{R}
$ with $y\neq 0$  a constant $C\neq 0$ exists such that 
\begin{equation}
\left\Vert \left( K_{~\beta }^{\alpha }-\left( x+iy\right) \delta _{~\beta
}^{\alpha }\right) ^{-1}\right\Vert _{2}\leq \frac{C}{\left\vert
y\right\vert }  \label{k_1}
\end{equation}%
\bigskip 
\end{theorem}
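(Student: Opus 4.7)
The plan is to translate the Kreiss inequality into a statement about singular values via the identity $\left\|A^{-1}\right\|_{2} = 1/\sigma_{\min}[A]$, which converts (\ref{k_1}) into the equivalent lower bound $\sigma_{\min}\bigl[K_{~\beta}^{\alpha} - (x+iy)\delta_{~\beta}^{\alpha}\bigr] \geq |y|/C$ for all $x,y\in\mathbb{R}$ with $y\neq 0$. Once reformulated this way, the equivalence with the SVD criterion in Corollary \ref{cor_fk} essentially reads itself off, and both implications can be handled in parallel with the machinery already developed.

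For the forward direction, I would write $K = P\Lambda P^{-1}$ with $\Lambda$ real diagonal and use $(K - (x+iy)\delta)^{-1} = P(\Lambda - (x+iy)I)^{-1}P^{-1}$. Since each diagonal entry of $(\Lambda - (x+iy)I)^{-1}$ has modulus $1/\sqrt{(\lambda_{j}-x)^{2}+y^{2}} \leq 1/|y|$ by reality of the $\lambda_{j}$, submultiplicativity of the $2$-norm gives (\ref{k_1}) with $C = \left\|P\right\|_{2}\left\|P^{-1}\right\|_{2}$, the condition number of the diagonalizing transformation.

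For the converse, I would proceed in two stages. First, the hypothesis forces $K - (x+iy)\delta$ to be invertible whenever $y \neq 0$, so $K$ cannot admit any eigenvalue with nonzero imaginary part; hence the spectrum is real. Second, fix a real eigenvalue $\lambda$ and set $x = \lambda$, $y = \varepsilon$, playing the role of the real perturbation parameter with $\theta = 0$ in the notation of Corollary \ref{cor_fk}. If $\lambda$ admitted a Jordan block of size $l \geq 2$, then by Theorem \ref{Matrix_theorem} and the product identity (\ref{lem_sing_eingen}), at least one singular value of $T(\lambda + i\varepsilon)$ would have to vanish at order $|\varepsilon|^{l}$; this contradicts the linear lower bound $|\varepsilon|/C$ as $\varepsilon \to 0$. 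Consequently every Jordan block has size one, so $K$ is Jordan diagonalizable with real spectrum.

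The main obstacle is calibrating the exact vanishing order $|y|^{l}$ of the smallest singular value at a size-$l$ Jordan block, rather than only asserting that it is at least quadratic. This sharpness is precisely what the product formula (\ref{lem_sing_eingen}) supplies: matching the total determinantal order $r_{i}$ against the geometric multiplicity $q_{i}$ forces the vanishing singular values to split cleanly into $q_{i}$ first-order factors and $r_{i}-q_{i}$ factors of order at least two, so any mismatch $q_{i} < r_{i}$ immediately collides with the bound $C/|y|$ as $|y| \to 0$. Everything else amounts to routine bookkeeping inside the SVD framework already set up in the paper.
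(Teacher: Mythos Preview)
Your argument is correct. It is worth noting, however, that the paper does not give a self-contained proof of this statement: Theorem~\ref{Kreiss_theorem} is quoted as a known result, and the surrounding text only establishes that the resolvent bound (\ref{k_1}) is equivalent, via the identity $\left\Vert T^{-1}\right\Vert_{2}=1/\min\{\sigma_i[T]\}$, to the singular-value lower bound (\ref{kr_1}) and hence to condition (\ref{cor_fk_1}) of Corollary~\ref{cor_fk}. Read together with Corollary~\ref{cor_fk}, this \emph{does} amount to a proof of Kreiss's theorem through the paper's SVD framework, and your converse direction is exactly that argument: reality of the spectrum from invertibility off the real axis, followed by the order-of-vanishing contradiction drawn from Theorem~\ref{Matrix_theorem} and (\ref{lem_sing_eingen}).

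Where you genuinely depart from the paper is the forward direction. Instead of appealing to Corollary~\ref{cor_fk} (which would say: diagonalizable $\Rightarrow$ all singular values are $O(|\varepsilon|^{0})$ or $O(|\varepsilon|^{1})$ $\Rightarrow$ the lower bound (\ref{kr_1}) holds), you write $K=P\Lambda P^{-1}$ directly and bound the diagonal resolvent entrywise, obtaining the explicit constant $C=\left\Vert P\right\Vert_{2}\left\Vert P^{-1}\right\Vert_{2}$. This is more elementary and has the advantage of producing the sharp constant (the condition number of the eigenbasis), which the SVD route does not give so transparently.

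One small remark: your final paragraph overstates the difficulty. For the contradiction in the converse you only need \emph{some} singular value to vanish to order at least two, not to pin down the exact order $l$ of the Jordan block. The contrapositive of $3)\Rightarrow 1)$ in Theorem~\ref{Matrix_theorem} already delivers that; the sharper statement ``order exactly $l$'' is Theorem~\ref{deg_eigen} and is not needed here.
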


We now show that (\ref{k_1}) implies (\ref{kr_1}) and this is equivalent to (%
\ref{cor_fk_1}). From \cite{stewart1990matrix} 
\begin{equation*}
\left\Vert T\right\Vert _{2}=\max \left\{ \sigma _{i}\left[ T\right] \right\}
\end{equation*}

where $\max \left\{ \sigma _{i}\left[ T\right] \right\} $ is the maximum of
all singular values of $T.$

In addition 
\begin{equation*}
\left\Vert T^{-1}\right\Vert _{2}=\max \left\{ \sigma _{i}\left[ T^{-1}%
\right] \right\} =\max \left\{ \frac{1}{\sigma _{i}\left[ T\right] }\right\}
=\frac{1}{\min \left\{ \sigma _{i}\left[ T\right] \right\} }
\end{equation*}

Where we have used that the singular values of $T$ are the inverse\footnote{%
Notice that if the SVD of $T$ is $T=U\Sigma V^{-1}$ then $T^{-1}=V\Sigma
^{-1}U^{-1}$ because $V$ and $U^{-1}$ are orthogonal and $\Sigma ^{-1}$ is
diagonal, that is the SVD of $T^{-1}.$ Therefore $\sigma _{i}\left[ T^{-1}%
\right] =\frac{1}{\sigma _{i}\left[ T\right] }$} of the singular values of $%
T^{-1}$.

Now, from inequality (\ref{k_1}) \ 
\begin{equation*}
\frac{1}{\min \left\{ \sigma _{i}\left[ K_{~\beta }^{\alpha }-\left(
x+iy\right) \delta _{~\beta }^{\alpha }\right] \right\} }=\left\Vert \left(
K_{~\beta }^{\alpha }-\left( x+iy\right) \delta _{~\beta }^{\alpha }\right)
^{-1}\right\Vert _{2}\leq \frac{C}{\left\vert y\right\vert }
\end{equation*}

let $\tilde{C}:=\frac{1}{C},$ then the Kreiss's Matrix Theorem asserts that $%
K$ is Jordan diagonalizable if and only if we can find $\tilde{C}$ such that 
\begin{equation}
\tilde{C}\left\vert y\right\vert \leq \min \left\{ \sigma _{i}\left[
K_{~\beta }^{\alpha }-\left( x+iy\right) \delta _{~\beta }^{\alpha }\right]
\right\}   \label{kr_1}
\end{equation}

This equation is equivalent to condition (\ref{cor_fk_1}). Since $\tilde{C}%
\left\vert y\right\vert \leq \left\vert y\right\vert ^{l}$ with $l\geq 2$ \
in $0\leq \left\vert y\right\vert <<1$ implies that $\tilde{C}=0,$ therefore 
$\min \left\{ \sigma _{i}\left[ K_{~\beta }^{\alpha }-\left( x+iy\right)
\delta _{~\beta }^{\alpha }\right] \right\} $ must be order $O\left(
\left\vert y\right\vert ^{0}\right) $ or $O\left( \left\vert y\right\vert
^{1}\right) $. In addition, notice that $K_{~\beta }^{\alpha }-\left(
x+iy\right) \delta _{~\beta }^{\alpha }$, in (\ref{k_1}), is invertible when 
$x+iy$ is not an eigenvalue of $K_{~\beta }^{\alpha }$. Since condition (\ref%
{k_1}) applies for all $x+iy$ with $y\neq 0$ then the eigenvalues of $%
K_{~\beta }^{\alpha }$ are real, as equation (\ref{cor_fk_2}) implies.

As a side remark,\ we prove the following theorem.

\begin{theorem}
\label{deg_eigen}$T_{~\beta }^{\alpha }\left( \lambda _{i}+\varepsilon
e^{i\theta }\right) :X\rightarrow X$ has a singular value of order $O\left(
\varepsilon ^{l}\right) $ if and only if $\left( B^{-1}\right) _{~\alpha
}^{\gamma }K_{~\beta }^{\alpha }$ has a $l-$Jordan block, with eigenvalue $%
\lambda _{i}$ in its Jordan decomposition.
\end{theorem}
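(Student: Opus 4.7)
The plan is to extend Theorem \ref{Matrix_theorem} by tracking not only which singular values vanish but also the precise leading order in $|\varepsilon|$ of each vanishing one. The strategy is to reduce to a block-by-block analysis of the Jordan canonical form, exploiting two preservation principles: (a) multiplying a map by an $\varepsilon$-independent invertible operator on either side rescales each singular value by a factor lying in a compact subinterval of the positive reals, so it preserves the leading order in $|\varepsilon|$; and (b) the orders of the singular values do not depend on the choice of Hermitian forms $G_1, G_2$, as established in appendix \ref{appendix_2}.

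First I would pass to the Jordan basis: choose $P$ with $(B^{-1})^\gamma_{~\mu}K^\mu_{~\beta} = P^\gamma_{~i'}J^{i'}_{~j'}(P^{-1})^{j'}_{~\beta}$, where $J$ is in Jordan canonical form. Then
\begin{equation*}
T^\alpha_{~\beta}(\lambda_i + \varepsilon e^{i\theta}) = B^\alpha_{~\gamma}P^\gamma_{~i'}\bigl(J^{i'}_{~j'} - (\lambda_i + \varepsilon e^{i\theta})\delta^{i'}_{~j'}\bigr)(P^{-1})^{j'}_{~\beta},
\end{equation*}
and by principle (a) the singular-value orders of $T$ coincide with those of the inner factor. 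This factor is block-diagonal: blocks at eigenvalues $\mu \neq \lambda_i$ stay invertible for small $\varepsilon$ and contribute only singular values of order $O(1)$; blocks at $\lambda_i$ of size $l_j$ reduce to $N_{l_j} := J_{l_j}(0) - \varepsilon e^{i\theta}\,\mathrm{Id}$, a nilpotent matrix perturbed by a small multiple of the identity. Choosing the Hermitian form adapted to the block splitting (legitimate by principle (b)), the singular values of the direct sum become the multiset union of the singular values of the individual blocks.

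Next I would analyse a single block $N_l$. Since $N_l$ is upper triangular with diagonal entries $-\varepsilon e^{i\theta}$, $|\det N_l| = |\varepsilon|^l$, and hence by Lemma \ref{lemma_f1} $\prod_{i=1}^{l}\sigma_i[N_l] = |\varepsilon|^l$. At $\varepsilon = 0$ the matrix $N_l$ becomes $J_l(0)$, which has rank $l-1$; exactly one of its singular values vanishes there, while the other $l-1$ stay bounded away from zero for small $\varepsilon$ by continuity. The product identity then forces the one vanishing singular value to be of order exactly $|\varepsilon|^l$. Putting this together across all blocks, each Jordan block of size $l_j$ at $\lambda_i$ contributes exactly one singular value of $T(\lambda_i + \varepsilon e^{i\theta})$ of order $|\varepsilon|^{l_j}$, and no other singular values vanish as $\varepsilon \to 0$. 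Both directions of the theorem then follow: an $l$-Jordan block at $\lambda_i$ produces a singular value of order $|\varepsilon|^l$, and conversely every vanishing singular value arises in this way with order equal to the size of the block that produced it.

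The main technical obstacle is the block-diagonal step: the Hermitian forms defining the singular values are given a priori and need not make the Jordan basis orthogonal, so the singular values of the whole operator are not literally the union of those of the blocks. This is where the invariance of orders under change of Hermitian form from appendix \ref{appendix_2} becomes essential — it allows one to adapt the form to the Jordan splitting without altering the quantities being tracked. After that reduction, the rest of the argument is a counting exercise built on the determinant identity of Lemma \ref{lemma_f1} together with the Taylor expansion of singular values used in Theorem \ref{Matrix_theorem}.
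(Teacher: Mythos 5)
Your proposal is correct and follows essentially the same route as the paper's proof: pass to the Jordan basis, reduce to a block-by-block computation by invoking the invariance of singular-value orders (under invertible $\varepsilon$-independent factors and changes of Hermitian form, per appendix \ref{appendix_2}), and use the product identity of Lemma \ref{lemma_f1} together with the rank-$(l-1)$ observation to show each $l$-block contributes exactly one singular value of order $\left\vert \varepsilon \right\vert ^{l}$. You merely make explicit the steps the paper compresses into ``the calculation decouples in Jordan blocks'' and ``it is easy to see from equation (\ref{lem_sing_eingen})''.
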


\begin{proof}
We consider a basis in which $\left( B^{-1}\right) _{~\alpha }^{\gamma
}K_{~\beta }^{\alpha }$ stays in its Jordan form. In this basis we choose
the following two Hermitian forms $G_{1AB}=diag\left( 1,...,1\right) $ and $%
G_{2\alpha \beta }=diag\left( 1,...,1\right) $. Then, the calculation of $%
\sigma _{i}\left[ T_{~\beta }^{\alpha }\left( \lambda \right) \right] $ $%
i=1,..,u$ decouples in Jordan blocks. Therefore we only need to study the
singular values of an $l-$Jordan block $J_{l}\left( \lambda \right) $. It is
easy to see from equation (\ref{lem_sing_eingen}), with $z=\lambda
+e^{i\theta }\varepsilon $, that $J_{l}\left( \lambda \right) $ has a unique
singular value of order $O\left( \left\vert \varepsilon \right\vert
^{l}\right) $ and the others are order $O\left( \left\vert \varepsilon
\right\vert ^{0}\right) $. This concludes the proof.
\end{proof}

\subsection{Rectangular operators.\label{Rectangular_operators}}

In this subsection we consider the case $\dim E\geq \dim X.$ Theorem \ref%
{Matrix_theorem_2} provides a necessary condition for reducing a rectangular
operator to a square one, in such a way that the resulting operator is
Jordan diagonalizable. The proof of condition 2 in Theorem \ref{sht_Theo_1}
is a Corollary of this Theorem. A reduction is given explicitly by another
linear operator $h_{~A}^{\alpha }:E\rightarrow X$ \ (see section \ref{cap2}%
). It selects some evolution equations from the space of equations in a
physical theory. The theorem asserts under which conditions it will be
impossible to find a hyperbolizer, namely, a reduction satisfying condition (%
\ref{ker_1}) for strong hyperbolicity.

The proof of this theorem is based in the following Lemma (for a proof see 
\cite{stewart1990matrix}).

\begin{lemma}
\label{Lemma_des}Consider the linear operators $T_{~\alpha
}^{A}:X\rightarrow E$, $h_{~A}^{a}:E\rightarrow X$ and $H_{~\beta }^{\alpha
}:=h_{~A}^{\alpha }T_{~\beta }^{A}:X\rightarrow X$ then 
\begin{equation}
0\leq \sigma _{i}[H_{~\beta }^{\alpha }]\leq \sigma _{i}[T_{~\beta
}^{A}]\max \left\{ \sigma _{j}\left[ h_{~A}^{\alpha }\right] \right\} 
\label{Lemma_des_eq1}
\end{equation}

\footnote{%
The singular values are $\sigma _{i}[H_{~\beta }^{\alpha }]=\sqrt{\lambda
_{i}\left[ G_{2}^{-1}\circ \bar{H}^{\prime }\circ G_{2}\circ H\right] }$, $%
\sigma _{i}[T_{~\beta }^{A}]=\sqrt{\lambda _{i}\left[ G_{2}^{-1}\circ \bar{T}%
^{\prime }\circ G_{1}\circ T\right] }$ and $\sigma _{i}[h_{~A}^{a}]=\sqrt{%
\lambda _{i}\left[ G_{1}^{-1}\circ \bar{h}^{\prime }\circ G_{2}\circ h\right]
}.$ Where $\lambda _{i}\left[ K\right] $ mean eigenvalues of $K.$}Where the
singular values have been ordered from larger to smaller for each
operator.\bigskip 
\end{lemma}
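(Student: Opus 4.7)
The strategy is to invoke the variational (Courant--Fischer / min--max) characterization of singular values. For a linear map $M:X\to Y$ between spaces equipped with the Hermitian forms of the appendix, this characterization reads
\begin{equation*}
\sigma_i[M] \;=\; \max_{\substack{S\subset X\\ \dim S = i}}\;\min_{\substack{x\in S\\ \|x\|=1}}\|Mx\|,
\end{equation*}
where the norms on $X$ and $Y$ are those induced by $G_{2\alpha\beta}$ and $G_{1AB}$, and singular values are labelled from largest to smallest. I would first verify this formula in the covariant setting by reading it off the decomposition $M=U\Sigma V^{-1}$ and using the orthogonality relations $\bar U^{\,\prime}\circ G\circ U=\delta$, $\bar V^{\,-1\prime}\circ G\circ V^{-1}=\delta$ supplied in the appendix; these make $U$ and $V^{-1}$ into isometries, so that the singular values are literally the stretching factors of $M$ measured in the prescribed norms.

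The first technical step is the operator-norm bound for $h$. Writing the SVD $h_{~A}^{\alpha}=(U_h)_{~i}^{\alpha}(\Sigma_h)_{~j}^{i}(V_h^{-1})_{~A}^{j}$ and using the orthogonality of $V_h^{-1}$ in the $G_1$-norm together with the orthogonality of $U_h$ in the $G_2$-norm, a direct computation gives, for every $y^A\in E$,
\begin{equation*}
\|h\,y\|_{G_2}^{2} \;=\; \sum_{j}\bigl(\sigma_j[h]\bigr)^{2}\,|w^{j}|^{2} \;\leq\; \bigl(\max_{j}\sigma_j[h]\bigr)^{2}\,\|y\|_{G_1}^{2},
\end{equation*}
where $w^{j}:=(V_h^{-1})_{~A}^{j}y^{A}$ and $\sum_j |w^j|^2=\|y\|_{G_1}^{2}$ by orthogonality. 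This is the elementary pointwise estimate $\|h\,y\|_{G_2}\le \max_j\sigma_j[h]\,\|y\|_{G_1}$.

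The second step is to propagate this bound through the composition. For any $x\in X$,
\begin{equation*}
\|H\,x\|_{G_2}\;=\;\|h\,(T\,x)\|_{G_2}\;\leq\;\max_{j}\sigma_j[h]\;\|T\,x\|_{G_1}.
\end{equation*}
Plugging this into the variational formula, for every $i$-dimensional subspace $S\subset X$,
\begin{equation*}
\min_{\substack{x\in S\\\|x\|=1}}\|H\,x\|_{G_2} \;\leq\; \max_{j}\sigma_j[h]\;\min_{\substack{x\in S\\\|x\|=1}}\|T\,x\|_{G_1},
\end{equation*}
and taking the maximum over all $i$-dimensional $S$ yields $\sigma_i[H]\leq \sigma_i[T]\,\max_j\sigma_j[h]$, which is (\ref{Lemma_des_eq1}); the lower bound $\sigma_i[H]\geq 0$ is automatic because singular values are nonnegative.

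The main obstacle, and the only place where one must be careful, is bookkeeping the Hermitian forms. In the covariant SVD the singular values depend on the chosen $G_1$ on $E$ and $G_2$ on $X$; a single such choice must serve simultaneously as the \emph{target} norm for $T$, the \emph{source} norm for $h$, and the \emph{source/target} norm for $H$. Once the norms are fixed once and for all (as they are throughout the appendix), the min--max formula and the pointwise operator-norm bound above combine immediately into the claimed inequality. The rectangularity of $T$ and $h$ causes no trouble, since the common index $i$ runs only up to $u=\dim X$, which is the number of singular values of each of $T$, $h$, and $H$.
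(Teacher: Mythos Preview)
Your argument via the Courant--Fischer min--max characterization is correct and is the standard route to this inequality. Note, however, that the paper does not supply its own proof of this lemma: it simply cites \cite{stewart1990matrix} (``for a proof see \cite{stewart1990matrix}'') and uses the result as a black box. What you have written is essentially the textbook proof one would find there, adapted to the covariant SVD conventions of the appendix; your care in tracking which Hermitian form plays the role of source versus target norm at each stage is exactly the bookkeeping needed to make the standard matrix argument go through in this setting.
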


Consider for contradiction, that there exists a hyperbolizer. Namely a
surjective reduction $h_{~A}^{\alpha }:E\rightarrow X,$ \ that does not
depend on $z$, of the operator $T_{~\alpha }^{A}\left( z\right) =K_{~\alpha
}^{A}-zB_{~\alpha }^{A}:X\rightarrow E,$ in which $B_{~\alpha }^{A}$ has no
right kernel\footnote{%
Notice that as in the square case, a rectangular operator $T_{~\alpha }^{A}$
has right kernel when at least one of their singular values vanishes (see
proof of Lemma \ref{Lemma_lam1}). But this is equivalent to the vanishing of
equation (\ref{lem_f1_e1}). Therefore, $B_{~\alpha }^{A}$ has no right
kernel if and only if $\det \left( B^{\ast }\circ B\right) =\sigma _{1}\left[
B^{A}{}_{~\beta }\right] ...\sigma _{u}\left[ B_{~\beta }^{A}\right] \neq 0.$%
}, and such that $h_{~A}^{\alpha }B_{~\beta }^{A}$ is invertible.

Then the next Theorem follows

\begin{theorem}
\label{Matrix_theorem_2} Suppose that for at least one singular value of $%
T_{~\alpha }^{A}\left( \lambda +\varepsilon e^{i\theta }\right) $, with $%
\lambda \in 
\mathbb{C}
$, satisfies 
\begin{equation*}
\sigma \left[ T_{~\alpha }^{A}\left( \lambda +\varepsilon e^{i\theta
}\right) \right] =O\left( \varepsilon ^{l}\right) \text{ with }l\geq 2
\end{equation*}

then there exists at least one singular value of $h_{~A}^{\alpha }T_{~\alpha
}^{A}\left( \lambda +\varepsilon e^{i\theta }\right) $ such that 
\begin{equation*}
\sigma \left[ h_{~A}^{\alpha }T_{~\beta }^{A}\left( \lambda +\varepsilon
e^{i\theta }\right) \right] =O\left( \varepsilon ^{m}\right) \text{ \ \ \
with }m\geq l\geq 2
\end{equation*}

Thus $\left( \left( h_{~C}^{\alpha }B_{~\gamma }^{C}\right) ^{-1}\right)
_{~\gamma }^{\alpha }h_{~A}^{\gamma }K_{~\beta }^{A}$ is non-diagonalizable,
in particular there does not exists any hyperbolizer and system (\ref{sht_1}%
) is not strongly hyperbolic.
\end{theorem}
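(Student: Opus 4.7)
The plan is to combine Lemma~\ref{Lemma_des} (which bounds the singular values of a composition) with Theorem~\ref{Matrix_theorem} applied to the square reduced operator. Concretely, set $H_{~\beta}^{\alpha}(z):=h_{~A}^{\alpha}T_{~\beta}^{A}(z)=h_{~A}^{\alpha}K_{~\beta}^{A}-z\,h_{~A}^{\alpha}B_{~\beta}^{A}$, which is $u\times u$ and of the same linear-in-$z$ form to which Theorem~\ref{Matrix_theorem} applies, with the role of $B$ played by $hB$ (invertible by hypothesis).

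The first step is to transfer the singular-value decay rate from $T$ to $H$. Since $h_{~A}^{\alpha}$ is a fixed operator that does not depend on $\varepsilon$, its singular values are constants and $M:=\max_{j}\sigma_{j}[h_{~A}^{\alpha}]$ is a finite positive number. Lemma~\ref{Lemma_des}, applied at $z=\lambda+\varepsilon e^{i\theta}$, gives the index-by-index bound
\begin{equation*}
0\leq \sigma_{i}\bigl[H_{~\beta}^{\alpha}(\lambda+\varepsilon e^{i\theta})\bigr]\leq M\,\sigma_{i}\bigl[T_{~\beta}^{A}(\lambda+\varepsilon e^{i\theta})\bigr].
\end{equation*}
In particular, the index $i$ for which $\sigma_{i}[T(\lambda+\varepsilon e^{i\theta})]=O(\varepsilon^{l})$ gives $\sigma_{i}[H(\lambda+\varepsilon e^{i\theta})]\leq M\cdot O(\varepsilon^{l})$. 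Invoking Theorem~\ref{TTor} (Moro--S\"{o}derstr\"{o}m), each singular value of $H(\lambda+\varepsilon e^{i\theta})$ admits a Taylor/Puiseux-type expansion whose leading order is a non-negative integer; the above upper bound then forces the leading order $m$ of that singular value of $H$ to satisfy $m\geq l\geq 2$.

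The second step exhibits $\lambda$ as an eigenvalue of $(hB)^{-1}(hK)$. The hypothesis $\sigma_{i}[T(\lambda)]=0$ means $T_{~\beta}^{A}(\lambda)$ has a nontrivial right kernel vector $v^{\beta}$; composing with $h$ yields $H_{~\beta}^{\alpha}(\lambda)v^{\beta}=h_{~A}^{\alpha}T_{~\beta}^{A}(\lambda)v^{\beta}=0$, so $v$ lies in $\ker(hK-\lambda\,hB)$, which (since $hB$ is invertible) makes $\lambda$ an eigenvalue of $(hB)^{-1}(hK)$. With this in hand, condition~3 of Theorem~\ref{Matrix_theorem} applied to $H$ is violated: its perturbed operator $H(\lambda+\varepsilon e^{i\theta})$ has a singular value of order $O(\varepsilon^{m})$ with $m\geq 2$. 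Hence $(hB)^{-1}(hK)$ is not Jordan diagonalizable; by Theorem~\ref{deg_eigen} it in fact carries a Jordan block of size $\geq l$ attached to $\lambda$. Because the bound in Step~1 used only the fixed operator $h$ and the intrinsic decay rate of $T$, the same argument rules out \emph{every} reduction as a hyperbolizer: no $h_{~A}^{\alpha}$ can produce a reduced principal symbol satisfying condition~(\ref{ker_1}), so the system cannot be strongly hyperbolic.

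The main obstacle is the correct bookkeeping between the \emph{ordered} singular values of $T$ and $H$ and their Taylor orders in $\varepsilon$. Lemma~\ref{Lemma_des} only provides a numerical inequality, so one must separately argue, via the Moro--S\"{o}derstr\"{o}m expansion, that a bound $\sigma_{i}[H]=O(\varepsilon^{l})$ with $l\geq 2$ actually implies that the leading Puiseux exponent of $\sigma_{i}[H]$ is $\geq 2$ rather than some intermediate fractional value; this is the point that makes Step~1 rigorous and closes the proof.
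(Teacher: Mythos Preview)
Your proof is correct and follows essentially the same route as the paper: apply Lemma~\ref{Lemma_des} to push the $O(\varepsilon^{l})$ bound from $T$ down to $H=hT$, then invoke Theorem~\ref{Matrix_theorem} on the square reduced pencil to conclude non-diagonalizability. Your explicit verification that $\lambda$ is actually an eigenvalue of $(hB)^{-1}(hK)$ (so that condition~3 of Theorem~\ref{Matrix_theorem} is being tested at a point of $D_{\lambda}$) and the appeal to Theorem~\ref{deg_eigen} for the Jordan-block size are details the paper leaves implicit, but the core argument is the same.
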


\begin{proof}
We use lemma \ref{Lemma_des} for $T_{~\beta }^{A}\left( \lambda +\varepsilon
e^{i\theta }\right) $ and $h_{~A}^{\alpha }T_{~\beta }^{A}\left( \lambda
+\varepsilon e^{i\theta }\right) $, and let $\frac{1}{C}:=\max \left\{
\sigma _{j}\left[ h_{~A}^{\alpha }\right] \right\} $ (It does not vanish
since $h_{~A}^{\alpha }\neq 0$ and does not depend on $\lambda $), then for
equation (\ref{Lemma_des_eq1}) 
\begin{equation*}
0\leq \sigma _{i}[h_{~A}^{\alpha }T_{~\beta }^{A}\left( \lambda +\varepsilon
e^{i\theta }\right) ]\leq \sigma _{i}[T_{~\beta }^{A}\left( \lambda
+\varepsilon e^{i\theta }\right) ]\frac{1}{C}
\end{equation*}%
But for some $i$ , $\sigma _{i}[T_{~\beta }^{A}\left( \lambda +\varepsilon
e^{i\theta }\right) ]=O\left( \varepsilon ^{l}\right) $ with $l\geq 2.$
Therefore $\sigma _{i}[h_{~A}^{\alpha }T_{~\beta }^{A}\left( \lambda
+\varepsilon e^{i\theta }\right) ]=O\left( \varepsilon ^{m}\right) $ with $%
m\geq l\geq 2$. Since $C\varepsilon ^{m}\leq \varepsilon ^{l}$ for $0\leq
\varepsilon <<1$ is only possible if $m\geq l.$

Applying Theorem \ref{Matrix_theorem} to 
\begin{equation*}
\tilde{T}_{~\beta }^{\alpha }=h_{~A}^{\alpha }K_{~\beta }^{A}-z\left(
h_{~A}^{\alpha }B_{~\beta }^{A}\right) 
\end{equation*}%
and recalling that $h_{~A}^{\alpha }B_{~\beta }^{A}$ is invertible by
hypothesis, we conclude that $\left( \left( h_{~C}^{\alpha }B_{~\gamma
}^{C}\right) ^{-1}\right) h_{~A}^{\gamma }K_{~\beta }^{A}$ is not
diagonalizable. \ Therefore it is not a hyperbolizer and we reach a
contradiction.

This result considers perturbation of the singular values around their
vanishing values. As it has been shown in appendix \ref{appendix_2}, the
orders of perturbations are invariant under any choice of these Hermitian
forms. Thus the result does not depend on the particularities of the SVD.
\end{proof}

\bigskip


\section{APPLICATIONS AND EXAMPLES\label{chap_examples}.}

In this section we shall show how to check conditions 1 and 2 in theorems %
\ref{Teo_no_jordan} and \ref{sht_Theo_1}. They are very simple to verify in
examples.

Condition 1: We shall assume that there exists $\omega _{a}$ such that $%
\mathfrak{N}_{~\gamma }^{Bb}\omega _{b}$ has no right kernel.

As we mentioned before, $\mathfrak{N}_{~\gamma }^{Bb}n\left( \lambda \right)
_{b}$ with $n\left( \lambda \right) _{b}\in S_{\omega _{a}}^{%
\mathbb{C}
}$, has right kernel when at least one of its singular values vanishes. This
happens if and only if, given any positive definite Hermitian forms $G_{1}$
and $G_{2}$, 
\begin{equation}
\sqrt{\det \left( G_{2}^{\alpha \gamma }\overline{\mathfrak{N}_{~\gamma
}^{Aa}n\left( \lambda \right) _{a}}G_{1AB}\mathfrak{N}_{~\beta }^{Bb}n\left(
\lambda \right) _{b}\right) }=0,  \label{sht_6}
\end{equation}

as it has been proved in Lemma \ref{lemma_f1}. Therefore, the system is
hyperbolic if and only if all roots $\lambda _{k}$ of this equation are
real. In addition, for any line $n\left( \lambda \right) _{b}$ we call
characteristic eigenvalues to their corresponding $\left\{ \lambda
_{k}\right\} $.

Condition 2: In general it is not an easy task to calculate the singular
values and their orders in parameter $\varepsilon $. Fortunately theorem \ref%
{Conjetura_final} below allows for a simpler calculation, showing when the
coefficient of zero and first order of the singular values vanish.

Assuming that condition 1 has been checked for $\omega _{a}$. Consider the
line $n_{a}\left( \lambda \right) =-\lambda \omega _{a}+\beta _{a}$
belonging to $S_{\omega _{a}}$, with $\beta _{a}$ not proportional to $%
\omega _{a}$ and $\lambda $ real. Let $\left\{ \lambda _{k}\right\} $ be the
characteristic eigenvalues of $n_{a}\left( \lambda \right) $. Then, the
principal symbol $\mathfrak{N}_{~\gamma }^{Aa}n_{a}\left( \lambda
_{k}\right) $ has right and left kernel. We call $W_{~i}^{\gamma }\left(
\lambda _{k}\right) $ and $U_{A}^{j}\left( \lambda _{k}\right) $ with $%
i=1,...,\dim \left( left\_\ker \left( \mathfrak{N}_{~\gamma
}^{Aa}n_{a}\left( \lambda _{k}\right) \right) \right) $ and\ $\ j=1,...,\dim
\left( right\_\ker \left( \mathfrak{N}_{~\gamma }^{Aa}n_{a}\left( \lambda
_{k}\right) \right) \right) $ to any basis of these spaces respectively,
namely they are linearly independent sets of vectors such that $\mathfrak{N}%
_{~\gamma }^{Aa}n_{a}\left( \lambda _{k}\right) W_{~i}^{\gamma }\left(
\lambda _{k}\right) =0$ and $U_{A}^{j}\left( \lambda _{k}\right) \mathfrak{N}%
_{~\gamma }^{Aa}n_{a}\left( \lambda _{k}\right) =0$.

Now consider a perturbation of these covectors $n_{\varepsilon ,\theta
}\left( \lambda _{i}\right) _{a}=-\varepsilon e^{i\theta }\omega
_{a}+n_{a}\left( \lambda _{i}\right) $ with $0\leq \left\vert \varepsilon
\right\vert <<1$ and $\varepsilon $ real, and any fixed $\theta \in \left[
0,2\pi \right] $.

\begin{theorem}
\label{Conjetura_final} A necessary condition for system (\ref{sht_1}) to be
strongly hyperbolic is: The following operator 
\begin{equation}
L_{~i}^{j}\left( \lambda _{k}\right) :=U_{A}^{j}\left( \lambda _{k}\right)
\left( \mathfrak{N}_{~\gamma }^{Aa}\omega _{a}\right) W_{~i}^{\gamma }\left(
\lambda _{k}\right)   \label{conj_L}
\end{equation}

has no right kernel\footnote{$L_{~i}^{j}$ has no kernel if and only if\
given any positive define Hermitian forms $G_{3}$ and $G_{4}$ then 
\begin{equation*}
\det \left( G_{3}^{ji_{1}}\overline{L_{~i_{1}}^{j_{1}}}%
G_{4j_{1}j_{2}}L_{~i}^{j_{2}}\right) \neq 0
\end{equation*}%
\par
{}}.
\end{theorem}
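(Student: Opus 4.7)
The plan is to reduce the statement to Theorem \ref{Matrix_theorem_2} via the perturbation formula for singular values that the authors have already invoked (Theorem \ref{TTor} in the appendix, together with the explicit first-order expression illustrated in the worked example just after Theorem \ref{Matrix_theorem}). The strategy is contrapositive: I will assume $L_{~i}^{j}(\lambda_{k})$ has a non-trivial right kernel and deduce that the perturbed principal symbol must have a singular value of order $O(\varepsilon^{l})$ with $l\geq 2$, so by Theorem \ref{Matrix_theorem_2} no hyperbolizer can exist.

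First I would set $T_{~\alpha}^{A}(\lambda):=\mathfrak{N}_{~\alpha}^{Aa}n_{a}(\lambda)$ and consider the one-parameter perturbation $T_{~\alpha}^{A}(\lambda_{k}+\varepsilon e^{i\theta})=T_{~\alpha}^{A}(\lambda_{k})-\varepsilon e^{i\theta}\mathfrak{N}_{~\alpha}^{Aa}\omega_{a}$, exactly of the form covered by Theorem \ref{TTor}. By Lemma \ref{Lemma_lam1} and its proof (which extends to rectangular operators through the SVD argument used there), $T_{~\alpha}^{A}(\lambda_{k})$ has exactly as many vanishing singular values as $\dim\bigl(\mathrm{right\_ker}\,T(\lambda_{k})\bigr)$. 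The perturbation result then asserts that each of these vanishing singular values expands as
\begin{equation*}
\sigma\bigl[T_{~\alpha}^{A}(\lambda_{k}+\varepsilon e^{i\theta})\bigr]=\xi_{j}|\varepsilon|+O(\varepsilon^{2}),
\end{equation*}
and the key input I would take from Theorem \ref{TTor} (and from the computation displayed after Theorem \ref{Matrix_theorem}) is the identification of the coefficients: the collection $\{\xi_{j}\}$ coincides with the singular values of precisely the matrix
\begin{equation*}
L_{~i}^{j}(\lambda_{k})=U_{A}^{j}(\lambda_{k})\bigl(\mathfrak{N}_{~\gamma}^{Aa}\omega_{a}\bigr)W_{~i}^{\gamma}(\lambda_{k})
\end{equation*}
built from any chosen bases of the left and right kernels at $\lambda_{k}$. (This identification is invariant under the choice of bases and Hermitian forms, as shown in the appendix.)

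Now I would run the contrapositive argument. If $L_{~i}^{j}(\lambda_{k})$ has a non-trivial right kernel, then at least one of its singular values vanishes, and hence at least one $\xi_{j}=0$. For that index, the expansion above forces
\begin{equation*}
\sigma_{j}\bigl[T_{~\alpha}^{A}(\lambda_{k}+\varepsilon e^{i\theta})\bigr]=O(\varepsilon^{l})\quad\text{with }l\geq 2.
\end{equation*}
Theorem \ref{Matrix_theorem_2} then applies directly: any candidate reduction $h_{~A}^{\alpha}$ that is independent of $\lambda$ and makes $h_{~A}^{\alpha}B_{~\beta}^{A}=h_{~A}^{\alpha}\mathfrak{N}_{~\beta}^{Aa}\omega_{a}$ invertible would inherit a singular value of order $O(\varepsilon^{m})$ with $m\geq l\geq 2$ for the reduced operator, which by Theorem \ref{Matrix_theorem} prevents $\bigl(h\,\mathfrak{N}\omega\bigr)^{-1}\bigl(h\,\mathfrak{N}\beta\bigr)$ from being diagonalizable. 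Therefore no hyperbolizer exists and the system fails to be strongly hyperbolic, contradicting the hypothesis. This establishes that absence of right kernel of $L_{~i}^{j}(\lambda_{k})$ is necessary for every characteristic eigenvalue $\lambda_{k}$.

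The main obstacle, and essentially the only non-routine step, is making the identification $\xi_{j}=\sigma_{j}[L_{~i}^{j}]$ rigorous and covariant: one must check that the perturbation formula of Moro--Burke--Overton/S\"oderstr\"om, which is stated for matrices equipped with the standard inner products, matches the invariant construction of $L$ used here, and that the result (first-order coefficients, hence their non-vanishing) is insensitive to the choice of $G_{1},G_{2}$ and of the bases $U_{A}^{j},W_{~i}^{\gamma}$. The preceding section already records this invariance (appendix \ref{appendix_2}), so the argument closes cleanly; the rest of the proof is a short application of results already proved.
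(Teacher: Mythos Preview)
Your proposal is correct and follows essentially the same route as the paper: identify $L_{~i}^{j}(\lambda_k)$ with the first-order coefficient matrix $\tilde{L}_{~i}^{j}$ of Theorem~\ref{TTor} (up to a basis change), so that a right kernel of $L$ forces a vanishing $\xi_j$ and hence a singular value of order $O(\varepsilon^{l})$ with $l\geq 2$, and then invoke Theorem~\ref{Matrix_theorem_2} to rule out any hyperbolizer. The paper's argument is terser but uses exactly this chain of implications, including the appeal to Appendix~\ref{appendix_2} for invariance under the choice of bases and Hermitian forms.
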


Definition of $L_{~i}^{j}\left( \lambda _{k}\right) $ is equivalent to $%
\tilde{L}_{~i}^{j}=\delta _{1}^{jj}\left( 0,\bar{U}_{1},\bar{U}_{3}\right)
_{~j}^{C}\delta _{1CD}B_{~\alpha }^{D}\left( 0,V_{1}\right) _{~i}^{\alpha }$%
, in equation \ref{TTor_1} in appendix \ref{appendix_2}, under a basis
transformation. If $\tilde{L}_{~i}^{j}$ has right kernel then it has a
singular value which vanishes, and then at least one perturbed singular
value $\sigma \left( \mathfrak{N}_{~\gamma }^{Aa}n_{\varepsilon ,\theta
}\left( \lambda _{i}\right) _{a}\right) $ is order $O\left( \left\vert
\varepsilon \right\vert ^{l}\right) $ with $l\geq 2$.

As we have shown in theorem \ref{Teo_no_jordan} for the square case, this is
also a sufficient condition.

Now, using the tools developed, we show how to apply these results in some
examples.

\subsection{Matrix example 1. \label{Example_matrix}}

Consider the matrix 
\begin{equation}
T\left( x\right) =K-zB:=\left( 
\begin{array}{cc}
\lambda _{1} & \kappa \\ 
0 & \lambda _{2}%
\end{array}%
\right) -z\left( 
\begin{array}{cc}
1 & 0 \\ 
0 & 1%
\end{array}%
\right) \in 
\mathbb{C}
^{2\times 2}  \label{me_1}
\end{equation}%
in which $\lambda _{1},\lambda _{2},\kappa $ are constants. Consider the
scalar products $G_{1,2}=\delta _{1,2}\,=diag\left( 1,1\right) ,\ $then the
singular values of $T\left( z\right) $ are 
\begin{eqnarray*}
\sigma _{1}\left[ T\left( z\right) \right] &=&\sqrt{\omega \left( z\right) +%
\sqrt{\omega ^{2}\left( x\right) -\left\vert z-\lambda _{1}\right\vert
^{2}\left\vert z-\lambda _{2}\right\vert ^{2}}} \\
\sigma _{2}\left[ T\left( z\right) \right] &=&\sqrt{\omega \left( z\right) -%
\sqrt{\omega ^{2}\left( x\right) -\left\vert z-\lambda _{1}\right\vert
^{2}\left\vert z-\lambda _{2}\right\vert ^{2}}}
\end{eqnarray*}

with $\ $ 
\begin{equation*}
\omega \left( z\right) =\frac{1}{2}\left( \left\vert z-\lambda
_{1}\right\vert ^{2}+\left\vert z-\lambda _{2}\right\vert ^{2}+\left\vert
\kappa \right\vert ^{2}\right)
\end{equation*}

a non-negative function of $z.$ Notice that $\sigma _{1}\left[ T\left(
z\right) \right] $ can be vanished only when $\lambda _{1}=\lambda _{2}$ and 
$\kappa =0.$

The Taylor expansion of $\sigma _{2}$ centered in $\lambda _{1,2}$ is%
\footnote{%
For calculate the Taylor expansion we use the identity$\ \ X-\sqrt{%
X^{2}-Y^{2}}=\frac{1}{2}\left( \sqrt{X+Y}-\sqrt{X-Y}\right) ^{2}$ with real $%
X,Y$ and\ \ $X+Y,X-Y>0$}:%
\begin{equation*}
\sigma _{2}\left( \lambda _{1,2}+\varepsilon \right) \approx 0+\frac{%
\left\vert \lambda _{1}-\lambda _{2}\right\vert }{\sqrt{\left\vert \lambda
_{1}-\lambda _{2}\right\vert ^{2}+\left\vert \kappa \right\vert ^{2}}}%
\left\vert \varepsilon \right\vert +O\left( \varepsilon ^{2}\right) 
\end{equation*}

As in theorem \ref{Teo_no_jordan}

\begin{itemize}
\item $K$ is not diagonalizable when $\lambda _{1}=\lambda _{2}$ and $\kappa
\neq 0$. In that case $\sigma _{2}\left( \lambda _{1,2}+\varepsilon \right)
=O\left( \left\vert \varepsilon \right\vert ^{2}\right) $.

\item $K$ is diagonalizable for any other case and the singular values
remain of order $O\left( \left\vert \varepsilon \right\vert \right) $ or $%
O\left( \varepsilon ^{0}\right) .$
\end{itemize}

\subsection{Force-Free Electrodynamics in Euler potential description. \label%
{Euler_potenciales}}

In this subsection we study a description of the Force-Free Electrodynamics
system based on Euler's potentials \cite{hawking1979general}, \cite%
{uchida1997theory}. When it is written as a first-order system, this is a
constrained system and we shall show that it is only weakly hyperbolic. It
is important to mention that Reula and Rubio \cite{reula2016ill} reached the
same conclusion by another method. They used the potentials as fields
obtaining a second-order system in derivatives, which then led to a
pseudodifferential first order system without constraints and finally tested
the failure of strong hyperbolicity using Kreiss criteria \cite%
{kreiss2004initial}. The advantage of our technique is that we use the
gradients of the potentials as fields, obtaining directly a first-order
system in partial derivatives but with constraints. Then, proving that
condition 2 in Theorem \ref{sht_Theo_1} fails, we conclude that there does
not exist any hyperbolizer.

In this system the electromagnetic tensor $F_{ab}$ is degenerated $%
F_{ab}j^{b}=0$ and magnetic dominated $F:=F_{ab}F^{ab}>0.$ These conditions
allows us to decompose $F_{ab}=l_{1[a}l_{2b]}$, (see, \cite%
{penrose1984spinors}, \cite{gralla2014spacetime}) in terms of space-like
1-forms $l_{ia}$ with $i=1,2$. For more detailed works on Force-Free
electrodynamics see \cite{komissarov2002time}, \cite%
{blandford1977electromagnetic}, \cite{carrasco2016covariant}, \cite%
{gralla2014spacetime}.

In addition, Carter 1979 \cite{hawking1979general} and Uchida 1997 \cite%
{uchida1997theory} proved that there exist two Euler potentials $\phi _{1}$
and $\phi _{2}$ such that $l_{ia}=\nabla _{a}\phi _{i}.$

With this ansatz, the Force Free equations in the (gradient) Euler's
potentials version are 
\begin{eqnarray*}
l_{ka}\nabla _{b}\left( l_{i}^{a}l_{j}^{b}\varepsilon ^{ij}\right) &=&0 \\
\nabla _{\lbrack a}l_{\left\vert i\right\vert b]} &=&0
\end{eqnarray*}

with background metric $g_{ab}.$ Taking a linearized version at a given
point and background solution, we get the following principal symbol 
\begin{equation*}
\mathfrak{N}_{~\alpha }^{Aa}\left( x,\phi \right) n_{a}\delta \phi ^{\alpha
}=\left( 
\begin{array}{cc}
\left( l_{1a}\left( l_{2}.n\right) -\left( l_{1.}l_{2}\right) n_{a}\right) & 
\left( \left( l_{1}.l_{1}\right) n_{a}-l_{1a}\left( l_{1}.n\right) \right)
\\ 
\left( l_{2a}\left( l_{2}.n\right) -\left( l_{2.}l_{2}\right) n_{a}\right) & 
\left( \left( l_{1.}l_{2}\right) n_{a}-l_{2a}\left( l_{1}.n\right) \right)
\\ 
n^{[b}\delta _{a}^{c]} & 0 \\ 
0 & n^{[b}\delta _{a}^{c]}%
\end{array}%
\right) \left( 
\begin{array}{c}
\delta l_{1}^{a} \\ 
\delta l_{2}^{a}%
\end{array}%
\right)
\end{equation*}

The solution space $\delta \phi ^{\alpha }=\left( 
\begin{array}{c}
\delta l_{1}^{a} \\ 
\delta l_{2}^{a}%
\end{array}%
\right) $ is $8$-dimensional and the associated space of equations is $14$%
-dimensional $\delta X_{A}=\left( \delta W,\delta X,\delta Y_{bc},\delta
Z_{bc}\right) $ where $\delta Y_{bc}=\delta Y_{[bc]}$ and $\delta
Z_{bc}=\delta Z_{[bc]}$.

1) We shall check that the system is hyperbolic: Consider $\omega _{a}$
timelike and normalized $\omega _{a}\omega ^{a}=-1$, since $l_{ia}$ can be
chosen orthogonal (via a gauge transformation), we define an orthonormal
frame $\left\{ e_{ia}\text{ }i=0,1,2,3\right\} $ with $e_{0a}=\omega _{a}$
and $l_{ia}=l_{i}e_{ia}$ with $i=1,2$ such that $g_{ab}=\left(
-1,1,1,1\right) $. Consider now the plane $n_{a}\left( \lambda \right)
=-n_{0}\omega _{a}+\beta _{a}\in S_{\omega _{a}}^{%
\mathbb{C}
}$ with $n_{0}\in 
\mathbb{C}
$, $\beta _{a}=n_{i}e_{ia}$ for $i=1,2,3$, $n_{i}$ real and let $%
G_{1AB}=diag\left( 1,...,1\right) $ and $G_{2}^{\alpha \beta }=diag\left(
1,...,1\right) $, then by (\ref{sht_6}) the characteristic surfaces of the
principal symbol are (notice that $\mathfrak{N}_{~\alpha }^{Ab}$ is real) 
\begin{eqnarray}
0 &=&\sqrt{\det \left( G_{2}^{\alpha \gamma }\mathfrak{N}_{~\gamma }^{Aa}%
\bar{n}_{a}\left( \lambda \right) G_{1AB}\mathfrak{N}_{~\beta
}^{Bb}n_{b}\left( \lambda \right) \right) }  \notag \\
&=&\left( \left\vert n_{0}\right\vert
^{2}+n_{1}^{2}+n_{2}^{2}+n_{3}^{2}\right) ^{2}\left\vert \left(
-n_{0}^{2}+n_{3}^{2}\right) \right\vert \allowbreak \left\vert
n_{a}g^{ab}n_{b}\right\vert l_{1}^{2}l_{2}^{2}.  \label{FF_m_1}
\end{eqnarray}

It means that the characteristic surfaces are given in terms of two
symmetric tensors, the background metric, and $g_{1}^{ab}=diag\left(
-1,0,0,1\right) $ i.e.%
\begin{equation}
\begin{array}{ccccc}
0=n_{a}g^{ab}n_{b} &  & \text{and} &  & 0=n_{a}g_{1}^{ab}n_{b}%
\end{array}
\label{ff_m_2}
\end{equation}
The first one corresponds to the electromagnetic waves and the second one to
the Alfven waves. Because the characteristic eigenvalues are real, thus the
system is hyperbolic.

Note that the introduction of two unnatural scalar products lead us to a
preferred Euclidean metric $g_{2}^{ab}n_{a}n_{b}=\left\vert n_{0}\right\vert
^{2}+n_{1}^{2}+n_{2}^{2}+n_{3}^{2}.$

2) We shall check that condition 2 in Theorem \ref{sht_Theo_1} fails: For
this system, it is possible to calculate the singular values. We only show
the relevant one (with $n_{0}$ real)

\begin{eqnarray*}
\sigma \left[ \mathfrak{N}_{~\beta }^{Bb}n_{b}\right] &:&=\frac{1}{\sqrt{2}}%
\left\vert \sqrt{N+2\left( n_{0}^{2}-n_{3}^{2}\right) l_{2}^{2}}-\sqrt{%
N-2\left( n_{0}^{2}-n_{3}^{2}\right) l_{2}^{2}}\right\vert \\
N &:&=n_{1}^{2}+n_{2}^{2}+\left( n_{0}^{2}+n_{3}^{2}\right) \left(
1+l_{2}^{4}\right)
\end{eqnarray*}

Notice that it vanishes when $n_{0}^{2}-n_{3}^{2}=0.$

Consider now the line $n_{a}\left( \lambda \right) =-\lambda \omega
_{a}+\beta _{a}\in S_{\omega _{a}}$ with $\lambda $ real, $\beta
_{a}=n_{1}e_{1a}$ and the characteristic eigenvalue $\lambda =0$, i.e.$%
\left. n_{a}\left( \lambda \right) g_{1}^{ab}n_{b}\left( \lambda \right)
\right\vert _{\lambda =0}=0$. Perturbing this singular value in a
neighborhood of this point $n_{\varepsilon ,\theta }\left( \lambda =0\right)
_{a}=\left. -\varepsilon e^{i\theta }\omega _{a}-\lambda \omega _{a}+\beta
_{a}\right\vert _{\substack{ \lambda =0  \\ \theta =0}}$, we obtain 
\begin{equation*}
\sigma _{1}\left( \varepsilon \right) \approx \frac{1}{\sqrt{2}}\varepsilon
^{2}\left( \frac{\left( 1+3l_{2}^{4}\right) }{\sqrt{n_{1}^{2}+\varepsilon
^{2}\left( 1+3l_{2}^{4}\right) }}-\frac{\left( 1-l_{2}^{4}\right) }{\sqrt{%
n_{1}^{2}+\varepsilon ^{2}\left( 1-l_{2}^{4}\right) }}\right)
\end{equation*}

It is order $O\left( \left\vert \varepsilon \right\vert ^{2}\right) $ and by
Theorem \ref{sht_Theo_1} there does not exist any hyperbolizer and the
system is weakly hyperbolic.

In general, explicit calculations of the singular values can not be done.
Because of that, we shall show how to reach the same conclusion using
theorem \ref{Conjetura_final}.

Consider the line $n_{a}\left( \lambda \right) $ as before, then we get the
following principal symbol%
\begin{eqnarray*}
\mathfrak{N}_{~\alpha }^{Aa}n_{a}\left( \lambda \right) &=&-\lambda 
\mathfrak{N}_{~\alpha }^{Aa}\omega _{a}+\mathfrak{N}_{~\alpha }^{Aa}\beta
_{a} \\
&=&-\lambda \left( 
\begin{array}{cc}
0 & l_{1}^{2}e_{0a} \\ 
-l_{2}^{2}e_{0a} & 0 \\ 
e_{0}^{[b}\delta _{a}^{c]} & 0 \\ 
0 & e_{0}^{[b}\delta _{a}^{c]}%
\end{array}%
\right) +\left( 
\begin{array}{cc}
0 & 0 \\ 
l_{2}^{2}n_{1}e_{1a} & -l_{1}l_{2}n_{1}e_{2a} \\ 
n_{1}e_{1}^{[b}\delta _{a}^{c]} & 0 \\ 
0 & n_{1}e_{1}^{[b}\delta _{a}^{c]}%
\end{array}%
\right)
\end{eqnarray*}

To define $L_{~i}^{j}$ as (\ref{conj_L}), we need to calculate left and
right kernel basis of $\mathfrak{N}_{~\alpha }^{Aa}n_{a}\left( \lambda
=0\right) $. They are 
\begin{equation*}
\left( 
\begin{array}{c}
\delta W \\ 
\delta X \\ 
\delta Y_{bc} \\ 
\delta Z_{bc}%
\end{array}%
\right) =\left\langle 
\begin{array}{c}
\left( 
\begin{array}{c}
1 \\ 
0 \\ 
0 \\ 
0%
\end{array}%
\right) ,\left( 
\begin{array}{c}
0 \\ 
0 \\ 
e_{0[b}e_{2a]} \\ 
0%
\end{array}%
\right) ,\left( 
\begin{array}{c}
0 \\ 
0 \\ 
e_{0[b}e_{3a]} \\ 
0%
\end{array}%
\right) ,\left( 
\begin{array}{c}
0 \\ 
0 \\ 
e_{2[b}e_{3a]} \\ 
0%
\end{array}%
\right)  \\ 
,\left( 
\begin{array}{c}
0 \\ 
0 \\ 
0 \\ 
e_{0[b}e_{2a]}%
\end{array}%
\right) ,\left( 
\begin{array}{c}
0 \\ 
0 \\ 
0 \\ 
e_{0[b}e_{3a]}%
\end{array}%
\right) ,\left( 
\begin{array}{c}
0 \\ 
0 \\ 
0 \\ 
e_{2[b}e_{3a]}%
\end{array}%
\right) 
\end{array}%
\right\rangle 
\end{equation*}%
\begin{equation*}
\left( 
\begin{array}{c}
\delta l_{1}^{a} \\ 
\delta l_{2}^{a}%
\end{array}%
\right) =\left\langle \left( 
\begin{array}{c}
0 \\ 
e_{1}^{a}%
\end{array}%
\right) \right\rangle 
\end{equation*}

We conclude that%
\begin{equation}
L_{~i}^{j}=\left( 
\begin{array}{c}
0 \\ 
0 \\ 
0 \\ 
0 \\ 
0 \\ 
0 \\ 
0%
\end{array}%
\right)  \label{L_1}
\end{equation}

which trivially vanishes and then it has right kernel. Thus, as we discussed
before, there is a singular value that goes to zero at least quadratically
in the perturbation and the system can not be strongly hyperbolic.

If we take $\omega _{a}$ outside the light cone, then there will be complex
characteristic eigenvalues (so the system would not be hyperbolic along
those lines), so those cases are trivially not strongly hyperbolic.

\subsection{Charged fluids with finite conductivity. \label{Example_fluids}}

In this subsection, we present the charged fluid with finite conductivity in a
first order in derivative formulation, in which the relevant block of the
principal part has no constraints. We shall prove that the system is weakly
hyperbolic while it has finite conductivity and, of course, strongly
hyperbolic with vanishing conductivity. This result is in concordance with 
\cite{choquet2009general} chapter IX.

The system is 
\begin{eqnarray*}
u^{m}\nabla _{m}n+n\nabla _{m}u^{m} &=&0 \\
u^{a}\nabla _{a}\rho +\left( \rho +p\right) \nabla _{a}u^{a}
&=&u_{b}J^{a}F_{~a}^{b} \\
\left( \rho +p\right) u^{a}\nabla _{a}u^{b}+D^{b}p
&=&-h_{c}^{b}J^{a}F_{~a}^{c} \\
u^{m}\nabla _{m}q+q\nabla _{m}u^{m}+\sigma F_{a}^{~m}\nabla _{m}u^{a}
&=&\sigma u^{a}J_{a} \\
\nabla _{a}F^{ab} &=&J^{b} \\
\nabla _{a}F^{\ast ab} &=&0 \\
J^{a} &=&qu^{a}+\sigma u_{b}F^{ba}
\end{eqnarray*}

with background metric $g_{ab}$, $h_{~c}^{b}:=\left( \delta
_{c}^{b}+u^{b}u_{c}\right) ,$ $\ u^{a}u_{a}=-1,$ $\ D^{b}:=h^{bc}\nabla _{a}$
and $p=p\left( n,\rho \right) .$ Here $\rho $ is the proper total energy
density, $n$ the proper mass density, $u^{a}$ the four-velocity, $q$ the
proper charge density, $p$ the pressure of the fluid and $\sigma $ the
conductivity. For examples of this type of systems see \cite%
{yamada2010magnetic}, \cite{palenzuela2013modelling}, \cite%
{palenzuela2009beyond}.

The variables are $\left( n,\rho ,u^{a},q,F^{ab}\right) .$ As before, taking
the linearized version at a given point and background solution of these
equations, the principal symbol is given by%
\begin{eqnarray*}
\left( \mathfrak{N}_{fluid}\right) _{~\alpha }^{Aa}\left( x,\phi \right)
n_{a}\delta \phi ^{\alpha } &=&\left( 
\begin{array}{cccc}
u.n & 0 & n\text{ }n_{b} & 0 \\ 
0 & u.n & \left( p+\rho \right) n_{b} & 0 \\ 
p_{n}h^{am}n_{m} & p_{\rho }h^{am}n_{m} & \left( \rho +p\right) \delta
_{~b}^{a}\left( u.n\right) & 0 \\ 
0 & 0 & \left( q\delta _{b}^{m}+\sigma F_{b}^{~m}\right) n_{m} & u.n%
\end{array}%
\right) \left( 
\begin{array}{c}
\delta n \\ 
\delta \rho \\ 
\delta u^{b} \\ 
\delta q \\ 
\delta F^{ab}%
\end{array}%
\right) =0 \\
\left( \mathfrak{N}_{Electro}\right) _{~\alpha }^{Aa}\left( x,\phi \right)
n_{a}\delta \phi ^{\alpha } &=&\left( 
\begin{array}{c}
n_{a} \\ 
n_{c}\varepsilon _{~dab}^{c}%
\end{array}%
\right) \delta F^{ab}
\end{eqnarray*}

with $u_{a}\delta u^{a}=0.$ Notice that the fluid-current part decouples of
the electrodynamics part. We shall only study this fluid-current part
because there is where the lack of strong hyperbolicity appears. This part
of the system has no constraints.

The solution space $\delta \phi ^{\alpha }=\left( 
\begin{array}{c}
\delta n \\ 
\delta \rho \\ 
\delta u^{b} \\ 
\delta q%
\end{array}%
\right) $ is $6$-dimensional and the equation space

$\delta X_{A}=\left( 
\begin{array}{cccc}
\delta W & \delta X & \delta Y_{a} & \delta Z%
\end{array}%
\right) $, with $\delta Y_{a}u^{a}=0$, is $6$-dimensional too. \ 

The characteristic surfaces of the fluids-current part are 
\begin{equation}
\det \left( \mathfrak{N}_{fluid~\alpha }^{Aa}n_{a}\right) =-\left( \rho
+p\right) ^{4}\left( n_{a}u^{a}\right) ^{4}g_{1}^{ab}n_{a}n_{b}=0
\label{det_fluid}
\end{equation}

This means that 
\begin{equation*}
\begin{array}{ccccc}
\left( n_{a}u^{a}\right) =0 &  & \text{and} &  & g_{1}^{ab}n_{a}n_{b}=0%
\end{array}%
\end{equation*}%
with $g_{1}^{ab}:=\left( \frac{n}{\left( \rho +p\right) }p_{n}+p_{\rho
}\right) h^{ab}-u^{a}u^{b}$ (It is a Lorentzian metric if $\frac{n}{\left(
\rho +p\right) }p_{n}+p_{\rho }>0$). In addition the characteristic surfaces
of the electrodynamics part are 
\begin{equation*}
g^{ab}n_{a}n_{b}=0
\end{equation*}

The $n_{a}u^{a}=0$ correspond to the material waves, $g_{1}^{ab}n_{a}n_{b}=0$
to the acoustic waves and $g^{ab}n_{a}n_{b}=0$ to the electromagnetic waves.

1) We shall check condition 1 in theorem \ref{Teo_no_jordan}: Consider now
the line $n_{a}\left( \lambda \right) =-\lambda \omega _{a}+\beta _{a}\in
S_{\omega _{a}}$ with $\omega _{a}=u_{a}$ and $\beta _{a}$ spacelike and
such that $\beta _{a}u^{a}=0.$ We notice from (\ref{det_fluid}) that $%
\mathfrak{N}_{fluid~\alpha }^{Aa}\omega _{a}$ has no right kernel if $\left(
\rho +p\right) \neq 0$ and the system is hyperbolic for this $\omega _{a}$
if $\frac{n}{\left( \rho +p\right) }p_{n}+p_{\rho }\geq 0$. It means that
the velocity of the acoustic wave $v:=\sqrt{\frac{n}{\left( \rho +p\right) }%
p_{n}+p_{\rho }}$ is real.

2) Condition 2 in theorem \ref{Teo_no_jordan} fails: This line has the
characteristic eigenvalue $\lambda =0$, since $\left. u^{a}n_{a}\left(
\lambda \right) \right\vert _{\lambda =0}=0$. \ We choose an orthonormal
frame $\left\{ e_{ia}\text{ \ }i=0,1,2,3\right\} $ such that $e_{0a}=u_{a}$, 
$\ e_{1a}=\frac{1}{\sqrt{\beta ^{a}\beta _{a}}}\beta _{a}$ with $e_{2a}$ and 
$e_{3a}$ space-like.$\ $In this frame the background metric looks like $%
g_{ab}=diag\left( -1,1,1,1\right) .$

The principal symbol along this line is

\begin{eqnarray*}
\mathfrak{N}_{~\alpha }^{Aa}n_{a}\left( \lambda \right) &=&-\lambda 
\mathfrak{N}_{~\alpha }^{Aa}\omega _{a}+\mathfrak{N}_{~\alpha }^{Aa}\beta
_{a} \\
&=&-\lambda \left( 
\begin{array}{cccc}
-1 & 0 & nu_{b} & 0 \\ 
0 & -1 & \left( p+\rho \right) u_{b} & 0 \\ 
0 & 0 & -\left( \rho +p\right) g_{~~b}^{a} & 0 \\ 
0 & 0 & qu_{b}+\sigma F_{b}^{~m}u_{m} & -1%
\end{array}%
\right) +\sqrt{\beta .\beta }\left( 
\begin{array}{cccc}
0 & 0 & ne_{1b} & 0 \\ 
0 & 0 & \left( p+\rho \right) e_{1b} & 0 \\ 
p_{n}e_{1}^{a} & p_{\rho }e_{1}^{a} & 0 & 0 \\ 
0 & 0 & qe_{1b}+\sigma F_{b}^{~m}e_{1m} & 0%
\end{array}%
\right)
\end{eqnarray*}

In order to find $L_{~i}^{j}$ the basis of the left and right kernel of $%
\mathfrak{N}_{~\alpha }^{Aa}n_{a}\left( \lambda =0\right) $ are%
\begin{equation*}
\left( 
\begin{array}{c}
\delta W \\ 
\delta X \\ 
\delta Y_{a} \\ 
\delta Z%
\end{array}%
\right) =\left\langle \left( 
\begin{array}{c}
0 \\ 
0 \\ 
e_{2a} \\ 
0%
\end{array}%
\right) ,\left( 
\begin{array}{c}
0 \\ 
0 \\ 
e_{3a} \\ 
0%
\end{array}%
\right) ,\left( 
\begin{array}{c}
-\left( p+\rho \right) \\ 
n \\ 
0 \\ 
0%
\end{array}%
\right) \right\rangle
\end{equation*}%
\begin{equation*}
\left( 
\begin{array}{c}
\delta n \\ 
\delta \rho \\ 
\delta u^{b} \\ 
\delta q%
\end{array}%
\right) =\left\langle \left( 
\begin{array}{c}
0 \\ 
0 \\ 
0 \\ 
1%
\end{array}%
\right) ,\left( 
\begin{array}{c}
-p_{\rho } \\ 
p_{n} \\ 
0 \\ 
0%
\end{array}%
\right) ,\left( 
\begin{array}{c}
0 \\ 
0 \\ 
\delta u^{b}\text{ } \\ 
0%
\end{array}%
\right) \right\rangle
\end{equation*}

with $e_{1b}\delta u^{b}=0$ and $\ \delta u^{b}F_{b}^{~m}e_{1m}=0$. \ Thus
following equation (\ref{conj_L}) 
\begin{equation*}
L_{~i}^{j}=\left( 
\begin{array}{ccc}
0 & 0 & -\left( \rho +p\right) \delta u^{a}e_{2a} \\ 
0 & 0 & -\left( \rho +p\right) \delta u^{a}e_{3a} \\ 
0 & -\left( rp_{\rho }+np_{n}\right)  & 0%
\end{array}%
\right) 
\end{equation*}

Clearly $L_{~i}^{j}\left( 
\begin{array}{c}
1 \\ 
0 \\ 
0%
\end{array}%
\right) =0$. \ Therefore the system is weakly hyperbolic and there is no
hyperbolizer.

Notice that we chose a particular $\omega _{a}$ $=u_{a}$. It is easy to show
that condition 2\ still fails for any timelike $\omega _{a}$ in both metrics 
$g^{ab}\omega _{a}\omega _{a}<0$ and $g_{1}^{ab}\omega _{a}\omega _{a}<0$.
In addition, when choosing $\omega _{a}$ outside of these cones, complex
characteristic eigenvalues appear. Thus, in both cases no hyperbolizer
exists.

\subsubsection{Vanish conductivity $\protect\sigma =0.$}

Finally, we notice that if the conductivity goes to zero $\sigma =0$ the
kernels change and the system becomes strongly hyperbolic.

Consider the above line $n_{a}\left( \lambda \right) $ in $\lambda =0$, thus
we shall prove that the new $L_{~i}^{j}$ has no right kernel. The new left
and right kernel basis \ are

\begin{equation*}
\left( 
\begin{array}{c}
\delta W \\ 
\delta X \\ 
\delta Y_{a} \\ 
\delta Z%
\end{array}%
\right) =\left\langle \left( 
\begin{array}{c}
0 \\ 
0 \\ 
e_{2a} \\ 
0%
\end{array}%
\right) ,\left( 
\begin{array}{c}
0 \\ 
0 \\ 
e_{3a} \\ 
0%
\end{array}%
\right) ,\left( 
\begin{array}{c}
-\left( p+\rho \right) \\ 
n \\ 
0 \\ 
0%
\end{array}%
\right) ,\left( 
\begin{array}{c}
0 \\ 
q \\ 
0 \\ 
-\left( p+\rho \right)%
\end{array}%
\right) \right\rangle
\end{equation*}%
\begin{equation*}
\text{\ \ \ \ }\left( 
\begin{array}{c}
\delta n \\ 
\delta \rho \\ 
\delta u^{b} \\ 
\delta q%
\end{array}%
\right) =\left\langle \left( 
\begin{array}{c}
0 \\ 
0 \\ 
0 \\ 
1%
\end{array}%
\right) ,\left( 
\begin{array}{c}
-p_{\rho } \\ 
p_{n} \\ 
0 \\ 
0%
\end{array}%
\right) ,\left( 
\begin{array}{c}
0 \\ 
0 \\ 
e_{2}^{a}\text{ } \\ 
0%
\end{array}%
\right) ,\left( 
\begin{array}{c}
0 \\ 
0 \\ 
e_{3}^{a}\text{ } \\ 
0%
\end{array}%
\right) \right\rangle
\end{equation*}

Thus%
\begin{equation*}
L_{~i}^{j}=\left( 
\begin{array}{cccc}
0 & 0 & -\left( \rho +p\right) & 0 \\ 
0 & 0 & 0 & -\left( \rho +p\right) \\ 
0 & -\left( \left( p+\rho \right) p_{\rho }+np_{n}\right) & 0\text{ } & 0%
\text{ } \\ 
-\left( p+\rho \right) & -p_{n}q & 0 & 0%
\end{array}%
\right)
\end{equation*}

This operator has no kernel if the determinant is different from zero 
\begin{equation*}
\det L_{~i}^{j}=-\left( p+\rho \right) ^{4}\left( \frac{n}{\left( \rho
+p\right) }p_{n}+p_{\rho }\right) \neq 0
\end{equation*}

Therefore $p+\rho \neq 0$ and $\frac{n}{\left( \rho +p\right) }p_{n}+p_{\rho
}\neq 0.$ As we explained, the first condition is necessary in order for $%
\mathfrak{N}_{fluid~\alpha }^{Aa}\omega _{a}$ not to have right kernel and
the second condition limits the possibility of the velocity of the acoustic
waves to vanish. We conclude that the system with $\sigma =0$ is strongly
hyperbolic.

\bigskip

\section{CONCLUSIONS.}

In this article we studied the covariant theory of strong hyperbolicity for
systems with constraints and found a necessary condition for the systems to
admit a hyperbolizer. If this condition, which is easy to check, is not
satisfied, then there is no subset of evolution equations of the strongly
hyperbolic type in the usual sense.

To find this condition we introduce the singular value decomposition of one
parameter families (pencils) of principal symbols and study perturbations
around the points where they have kernel. We proved that if the
perturbations of a singular value, which vanishes at that point, are order $2
$ or larger then the system is not strongly hyperbolic, theorem \ref%
{sht_Theo_1}.

For systems with constraints, the rectangular case, is only a necessary
condition, but in the case without constraints, namely square case, this
condition becomes also a sufficient condition too, theorem \ref%
{Teo_no_jordan}. In this case the condition is equivalent to the ones in
Kreiss's Matrix Theorem.

As an extra result, we showed that a perturbed Matrix has an $l$-Jordan
Block if and only if it has a singular value of order $O\left( \varepsilon
^{l}\right) .$

Although the SVD depends on the scalar products used to define the adjoint
operators, we found that the asymptotic orders of the singular values are
independent of them.

When the systems have constraints, their principal symbols are rectangular
operators and there is not a simple way to find their characteristic
surfaces. We proposed a way to calculate them, by connecting the kernel of
an operator with the vanishing of any of their singular values (see equation
(\ref{sht_6})).

We applied these theorems to some examples of physical interest:

A simple matrix example of $2\times 2$ in which we study its Jordan form
using the perturbed SVD.

The second example is Force Free electrodynamic in the Euler potentials
form, that when written in first order form has constraints. Using our
result we checked that there is no hyperbolizer, being the system only
weakly hyperbolic. We did this in two alternative ways. First by computing
the singular values and showing that at some point one of them is order $%
O\left( \varepsilon ^{2}\right) $. Second by computing the first order
leading term\ using right and left kernels as in theorem \ref%
{Conjetura_final}. In general, it is not an easy task to calculate the
singular values, therefore, the second way simplifies the study of strong
hyperbolicity.

The last example is a charged fluid with finite conductivity. For this case,
it is enough to consider the fluid-current part that decouples (at the level
of the principal symbol) from the electromagnetic part. We showed how the
introduction of finite conductivity, hampers the possibility of a
hyperbolizer.


\section*{ACKNOWLEDGMENTS}

I would like to specially thank Oscar Reula for the uncountable discussions,
his indispensable help through this work, and for his predisposition to read
and correct the article. I further would like to thank Federico Carrasco,
Marcelo Rubio, Barbara Sbarato and Miguel Megevand for the meetings in which
we have been able to exchange a set of interesting ideas. This work was
supported by SeCyT-UNC, CONICET and FONCYT.

\appendix

\section{Singular value decomposition. \label{appendix}}

In this appendix, the Singular Value Decomposition (SVD) is defined for
linear operators that map between two vector spaces of finite dimension. For
introduction on topic see \cite{golub2012matrix}, \cite%
{stewart1973introduction}, \cite{stewart1990matrix}. \cite%
{watkins2004fundamentals}. One of the most significant properties of the SVD
is that it allows us to characterize the image and the kernel of the
operator through real quantities called singular values. They, as we showed
in section \ref{singular_value_}, provide data about the Jordan form of
square matrices.

One problem about singular values decomposition, is that, it is necessary to
introduce extra structure to the problem, namely, scalar products. When they
are used in vector spaces over a manifold, they might introduce
non-covariant expressions. These scalar products are two positive definite,
tensorial Hermitian forms\footnote{%
Consider the $%
\mathbb{C}
-$vectorial space $V$ of finite dimension. A Hermitian form on $V$ is a map $%
G:V\times V\rightarrow 
\mathbb{C}
$ such that $G\left( av,bu\right) =\bar{a}bG\left( v,u\right) =\bar{a}%
b\left( \bar{v}^{a}G_{ab}u^{b}\right) $ and $G_{ab}=\bar{G}_{ab}$ (the bar
means conjugation)$.$ In addition there exists a symmetric real bilinear
form $g_{1ab}$ and an antisymmetric bilinear form $g_{2ab}$ such that $%
G_{ab}=g_{1ab}+ig_{2ab}$. When the hermite form is positive definite ($%
g_{1ab}$ is positive define), $G$ defines a complex inner dot product of $V$.%
}, in the input and output spaces of the operator, respectively.

In this appendix, we use the notation of section \ref{singular_value_} with $%
K_{~\alpha }^{A}:X\rightarrow E$ and assume that $e=\dim E\geq \dim X=u.$

Consider the two positive definite Hermitian forms $G_{1AB}$ and $G_{2\alpha
\beta }$ in the spaces $E$ and $X$ respectively. This allows us to define 
\textbf{the adjoint operator} 
\begin{eqnarray*}
K^{\ast } &=&G_{2}^{-1}\circ \bar{K}^{\prime }\circ G_{1}:E\rightarrow X \\
\left( K^{\ast }\right) _{~C}^{\alpha } &=&\left( G_{2}^{-1}\circ \bar{K}%
^{\prime }\circ G_{1}\right) _{~C}^{\alpha }=G_{2}^{\alpha \beta }\bar{K}%
_{~\beta }^{B}G_{1BC}
\end{eqnarray*}

Where $G_{1}^{AB}G_{1BC}=\delta _{1C}^{A}$ and $G_{2}^{\alpha \gamma
}G_{2\gamma \nu }=\delta _{~\nu }^{\alpha }$ are the identity operators in $%
E $ and $X$ respectively, and $\bar{K}^{\prime }$ is the dual complex
operator of $K.$

With this operator we can define, 
\begin{eqnarray*}
K^{\ast }\circ K &=&G_{2}^{-1}\circ K^{\prime }\circ G_{1}\circ
K:X\rightarrow X \\
\phi ^{\alpha } &\rightarrow &G_{2}^{\alpha \beta }\bar{K}_{~\beta
}^{B}G_{1BC}K_{~\gamma }^{C}\phi ^{\gamma } \\
K\circ K^{\ast } &=&K\circ G_{2}^{-1}\circ K^{\prime }\circ
G_{1}:E\rightarrow E \\
l^{A} &\rightarrow &K_{~\alpha }^{A}G_{2}^{\alpha \beta }\bar{K}_{~\beta
}^{B}G_{1BC}l^{C}
\end{eqnarray*}

Since $G_{2}\circ K^{\ast }\circ K$ and $G_{1}\circ K\circ K^{\ast }$ are
semi-positive define Hermitian forms, $K^{\ast }\circ K$ and $K\circ K^{\ast
}$ are diagonalizable \ with real and semi-positive eigenvalues. Also, the
square roots of these eigenvalues are the singular values of $K$ and $%
K^{\ast }$.

With these definitions, we assert the singular value decomposition in the
form of a Theorem. From now on Latin indices $i,j,k$ go from $1$ to $e$ and
primes Latin indices $i^{\prime },j^{\prime },k^{\prime }$ from $\ 1$ to $u$%
, unless explicitly stated. These indices indicate the different
eigenvectors.

\begin{theorem}
\label{SVD_theorem}Consider\ $K,K^{\ast },G_{1}$and $G_{2}$ as previously
defined and $e\geq u$. Suppose that \ \thinspace $rank\left( K\right) =r$
and $\dim \left( \ker \_right\left( K\right) \right) =u-r$ , then $K$ can be
decomposed as 
\begin{equation*}
K_{~\alpha }^{A}=U_{~i}^{A}\Sigma _{~j^{\prime }}^{i}\left( V^{-1}\right)
_{~\alpha }^{j^{\prime }}
\end{equation*}

where $\Sigma _{~j^{\prime }}^{i}=\left( 
\begin{array}{cc}
\Sigma _{+} & 0 \\ 
0 & 0 \\ 
0 & 0%
\end{array}%
\right) _{~j^{\prime }}^{i}$ of size $e\times u,$ with $\ \Sigma
_{+}=diag\left( \sigma _{1},...\sigma _{r}\right) $, $\sigma _{1}\geq
...\geq \sigma _{l}>0$ $\ l=1,...,r$ real, and $\sigma _{r+1}=...=\sigma
_{u}=0$. The $\sigma _{l}$ are called singular values of $K$ and they are
the square root of the eigenvalues of $K^{\ast }\circ K.$

In addition, the columns of $U_{~i}^{A}$ and $V_{~i^{\prime }}^{\gamma }$
are eigenbasis of $K\circ K^{\ast }$ and $K^{\ast }\circ K$ \ respectively,
such that they are orthogonal%
\begin{eqnarray}
\bar{U}_{~i}^{C}G_{1CD}U_{~j}^{D} &=&\delta _{1ij}:=diag\left(
1,,...,1,1,...,1\right)  \label{a_lem3_a} \\
\bar{V}_{~i^{\prime }}^{\gamma }G_{2\gamma \eta }V_{~j^{\prime }}^{\eta }
&=&\delta _{2i^{\prime }j^{\prime }}:=diag\left( 1,...,1\right)
\label{a_lem3_b}
\end{eqnarray}

with $C,D,i,j=1,..,e$ \ and $\alpha ,\beta \,i^{\prime },j^{\prime
}=1,...,u. $

We are going to discriminate $V_{~i^{\prime }}^{\alpha }=\left(
V_{2},V_{1}\right) _{~i^{\prime }}^{\alpha }$ and $U_{~i}^{A}=\left(
U_{2},U_{1},U_{3}\right) _{~i}^{A},$ where $V_{2}$ are the first $r$ columns
and $V_{1}$ the $u-r$ left\ of $V;$ $U_{2}$ are the first $r$ columns, $%
U_{1} $ the following $u-r$ and $U_{3}$ the remaining $e-u$ of $U.$
\end{theorem}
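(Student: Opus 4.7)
The plan is to build the decomposition by spectral-analyzing the self-adjoint, positive semi-definite operator $K^{\ast}\circ K:X\to X$, and then transporting its eigenstructure through $K$ to produce the matching orthonormal system on $E$. Because $G_{2}\circ K^{\ast}\circ K$ is a positive semi-definite Hermitian form (for every $\phi^{\alpha}$ one has $\bar{\phi}^{\beta}G_{2\beta\alpha}(K^{\ast}K)^{\alpha}_{~\gamma}\phi^{\gamma}=\overline{K\phi}{}^{A}G_{1AB}K^{B}_{~\gamma}\phi^{\gamma}\ge 0$), standard linear algebra guarantees that $K^{\ast}\circ K$ is diagonalizable with real non-negative eigenvalues $\sigma_{1}^{2}\ge\cdots\ge\sigma_{u}^{2}\ge 0$, and that one can choose an associated eigenbasis $V^{\alpha}_{~i'}$ that is orthonormal with respect to $G_{2}$, i.e.\ satisfying (\ref{a_lem3_b}). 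This gives the right-hand factor $V^{-1}$ of the decomposition, and also the diagonal $\Sigma$, once one defines $\sigma_{l}:=\sqrt{\sigma_{l}^{2}}$.

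The next step is to construct the left-hand factor $U^{A}_{~i}$. The natural definition is $U^{A}_{~i}:=\sigma_{i}^{-1}K^{A}_{~\alpha}V^{\alpha}_{~i}$ for those indices $i=1,\dots,r$ with $\sigma_{i}>0$; these are the columns of $U_{2}$. A direct computation using $V^{\ast}\circ V= \mathrm{Id}$ and the eigenvalue relation $K^{\ast}K\,V_{j}=\sigma_{j}^{2}V_{j}$ shows
\begin{equation*}
\bar{U}^{C}_{~i}G_{1CD}U^{D}_{~j}=\frac{1}{\sigma_{i}\sigma_{j}}\bar{V}^{\beta}_{~i}\bar{K}^{C}_{~\beta}G_{1CD}K^{D}_{~\alpha}V^{\alpha}_{~j}=\frac{1}{\sigma_{i}\sigma_{j}}\bar{V}^{\beta}_{~i}G_{2\beta\alpha}(K^{\ast}K)^{\alpha}_{~\gamma}V^{\gamma}_{~j}=\delta_{ij},
\end{equation*}
so the $U_{2}$ columns form a $G_{1}$-orthonormal set inside $E$. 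To complete to a full basis, I would use the Gram-Schmidt process with respect to $G_{1}$: pick any complement and orthonormalize it, splitting the resulting extra vectors into the $u-r$ columns called $U_{1}$ and the remaining $e-u$ columns called $U_{3}$. By construction these are $G_{1}$-orthogonal to the image of $K$, which will make the corresponding rows of $\Sigma$ vanish.

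It then remains to verify the identity $K^{A}_{~\alpha}=U^{A}_{~i}\Sigma^{i}_{~j'}(V^{-1})^{j'}_{~\alpha}$. Since $\{V_{i'}\}$ is a basis of $X$, it suffices to check both sides agree when applied to each $V^{\alpha}_{~j'}$. For $j'\le r$, $V_{j'}$ has $G_{2}$-components $(V^{-1}V)_{j'}=\mathbf{e}_{j'}$, so the right-hand side gives $U^{A}_{~j'}\sigma_{j'}=K^{A}_{~\alpha}V^{\alpha}_{~j'}$ by the very definition of $U_{2}$. For $j'>r$, $\sigma_{j'}=0$, and on the other hand $V^{\alpha}_{~j'}$ lies in the kernel of $K^{\ast}\circ K$, which coincides with the kernel of $K$ (since $\overline{KV_{j'}}{}^{A}G_{1AB}(KV_{j'})^{B}=0$ and $G_{1}$ is positive definite), so $KV_{j'}=0$ matches the right-hand side. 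This also yields $\mathrm{rank}(K)=r$ and $\dim(\ker_{\mathrm{right}}K)=u-r$ as promised.

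The main obstacle, and really the only subtle point, is ensuring that $\ker(K^{\ast}\circ K)=\ker(K)$ — which is exactly where positive definiteness of $G_{1}$ is essential — and that the extension of the $U_{2}$ columns to a full $G_{1}$-orthonormal basis of $E$ can be carried out; this needs positive definiteness of $G_{1}$ so that Gram-Schmidt actually produces unit vectors. Everything else is a direct computation. No uniqueness claim is made, which spares us the discussion of eigenvalue degeneracies; the theorem only asserts existence of the decomposition together with the orthogonality relations (\ref{a_lem3_a})-(\ref{a_lem3_b}).
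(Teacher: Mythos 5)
Your proof is correct. The paper itself states Theorem \ref{SVD_theorem} without proof, deferring to the standard references on the SVD, and what you give is precisely the standard existence construction, correctly adapted to the general positive definite Hermitian forms $G_{1}$ and $G_{2}$: diagonalize the $G_{2}$-self-adjoint, positive semi-definite operator $K^{\ast }\circ K$ to obtain the $G_{2}$-orthonormal eigenbasis $V$ and the singular values, set $U_{2}:=KV_{2}\Sigma _{+}^{-1}$, check its $G_{1}$-orthonormality via the eigenvalue relation, complete to a full $G_{1}$-orthonormal basis of $E$, and verify $K=U\Sigma V^{-1}$ by evaluating on the basis $\left\{ V_{j^{\prime }}\right\} $ using $\ker \left( K^{\ast }\circ K\right) =\ker \left( K\right) $, which is indeed where positive definiteness of $G_{1}$ enters. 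The one clause of the statement you do not explicitly verify is that the completed columns $U_{1},U_{3}$ are also eigenvectors of $K\circ K^{\ast }$ (with eigenvalue zero); this follows in one line, since a vector that is $G_{1}$-orthogonal to the image of $K$ is annihilated by $K^{\ast }$, so it deserves a sentence but is not a gap.
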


Recalling that the eigenbasis that are chosen in the Jordan decomposition
are not unique we realize that neither they are the orthogonal factors in
the SVD. For fixed $G_{1,2}$ we can select different\ orthogonal basis of
the eigenspaces, associated to some singular value, and obtain different $%
U,V $. Nevertheless the singular values remain invariant as long as $G_{1,2}$
remain fixed.

This decomposition allows us to control right and left kernels and images of
any linear operator, as we show in the next Corollary.

\begin{corollary}
\label{Cor_ker}Consider $V=\left( V_{2},V_{1}\right) _{~i^{\prime }}^{\alpha
}$ and $U=\left( U_{2},U_{1},U_{3}\right) _{~i}^{A}$ as in the previous
theorem, then the orthogonal conditions (\ref{a_lem3_a}) and (\ref{a_lem3_b}%
) are%
\begin{equation}
\left( \bar{V}_{i}\right) _{~s}^{\gamma }G_{2\gamma \eta }\left(
V_{j}\right) _{~r}^{\eta }=0\text{ with }i,j=1,2\text{ and }i\neq j
\label{cor_e_1}
\end{equation}%
\begin{equation}
\left( \bar{U}_{i}\right) _{~s}^{C}G_{1CD}\left( U_{j}\right) _{r}^{D}=0%
\text{ with }i,j=1,2,3\text{ and }i\neq j  \label{cor_e_2}
\end{equation}

In addition 
\begin{eqnarray*}
\dim \left( right\_\ker \left( K\right) \right) &=&u-r \\
\dim \left( left\_\ker \left( K\right) \right) &=&e-r \\
\dim \left( rank\_Columns\left( K\right) \right) &=&\dim \left(
rank\_Rows\left( K\right) \right) =r
\end{eqnarray*}

And the explicit right and left kernels of $K$ are%
\begin{eqnarray}
K_{~\beta }^{A}V_{1s}^{\beta } &=&0\text{ \ \ with }s=1,..,u-r
\label{Cor_ker_1a} \\
\left( \delta _{1}^{lm}\bar{U}_{1m}^{C}G_{1CB}\right) K_{~\beta }^{B} &=&0%
\text{ \ \ with }m,l=1,..,u-r  \label{Cor_ker_2a} \\
\left( \delta _{1}^{lm}\bar{U}_{3m}^{C}G_{1CB}\right) K_{~\beta }^{B} &=&0%
\text{ \ \ with }m,l=1,..,e-u  \label{Cor_ker_3a}
\end{eqnarray}

where $\delta _{1}^{lm}$ is the inverse of $\delta _{1lm}$
\end{corollary}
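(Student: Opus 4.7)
The plan is to deduce everything from the SVD factorization $K^A_{~\alpha} = U^A_{~i}\Sigma^i_{~j'}(V^{-1})^{j'}_{~\alpha}$ of Theorem \ref{SVD_theorem}, combined with the orthogonality identities (\ref{a_lem3_a}) and (\ref{a_lem3_b}). These identities already imply that the square matrices $U\in\mathbb{C}^{e\times e}$ and $V\in\mathbb{C}^{u\times u}$ are invertible, with explicit inverses $(U^{-1})^i_{~A} = \delta_1^{ij}\bar U^C_{~j}G_{1CA}$ and $(V^{-1})^{i'}_{~\alpha} = \delta_2^{i'j'}\bar V^{\gamma}_{~j'}G_{2\gamma\alpha}$, obtained by solving (\ref{a_lem3_a})--(\ref{a_lem3_b}) for a left inverse of each.

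First, for the orthogonality relations (\ref{cor_e_1}) and (\ref{cor_e_2}), I would simply decompose the index $i'$ in (\ref{a_lem3_b}) into the two blocks of sizes $r$ and $u-r$ that define $V_2$ and $V_1$, and split the index $i$ in (\ref{a_lem3_a}) into the three blocks of sizes $r$, $u-r$, $e-u$ defining $U_2, U_1, U_3$. The diagonal blocks reproduce the Hermitian orthonormality of the columns of each $V_j$ and $U_j$ separately, while the off-diagonal blocks of $\delta_1$ and $\delta_2$ are zero; these off-diagonal entries are exactly the content of (\ref{cor_e_1}) and (\ref{cor_e_2}).

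For the dimension counts, since $U$ and $V$ are invertible and $K = U\Sigma V^{-1}$, the rank of $K$ equals the rank of the diagonal matrix $\Sigma$, which has exactly $r$ nonzero entries on its diagonal by the hypothesis of the theorem. Rank--nullity then yields $\dim(\text{right}\_\ker K) = u - r$ and $\dim(\text{left}\_\ker K) = e - r$, and of course the row and column ranks both equal $r$.

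Finally, for the explicit kernel bases, I would evaluate directly. For the right kernel, compute $K^A_{~\beta}V_{1s}^{\beta} = U^A_{~i}\Sigma^i_{~j'}(V^{-1}V_1)^{j'}_{~s}$; since $V^{-1}V = \mathrm{id}$, the column $V_{1s}$ is sent to the $(r+s)$-th standard basis vector, which $\Sigma$ annihilates because $\Sigma_+$ occupies only the first $r$ columns. This gives (\ref{Cor_ker_1a}), and by the dimension count these $u-r$ vectors span the right kernel. For the left kernel, I premultiply the SVD by the explicit $U^{-1}$ above to obtain $U^{-1}K = \Sigma V^{-1}$; the rows of $\Sigma$ with index $r+1,\dots,u$ (corresponding to the $U_1$ block of $U^{-1}$) and with index $u+1,\dots,e$ (the $U_3$ block) vanish identically, which is precisely what (\ref{Cor_ker_2a}) and (\ref{Cor_ker_3a}) assert, and the dimension count confirms completeness. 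The only delicate point --- and the main place where one must be careful --- is the bookkeeping between the Hermitian forms $\delta_{1ij},\delta_{2i'j'}$ (used here to raise the discrete SVD indices when inverting $U$ and $V$) and their appearances as block identities in the orthogonality relations; once this index convention is kept consistent, every clause of the corollary follows without further computation.
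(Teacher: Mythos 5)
Your proposal is correct: block-splitting the orthogonality relations (\ref{a_lem3_a})--(\ref{a_lem3_b}), reading the rank of $K$ off the invertible factorization $K=U\Sigma V^{-1}$, and evaluating $KV_{1}$ and $U^{-1}K=\Sigma V^{-1}$ to exhibit the right and left kernels (with completeness by the rank--nullity count) is exactly the intended derivation; the paper states the corollary without proof, treating it as an immediate consequence of Theorem \ref{SVD_theorem}, and your argument fills that in along the same lines with the index bookkeeping handled consistently.
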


SVD is similar to Jordan decomposition for square operators. In particular,
they coincide when the operators are diagonalizable with real and
semipositive eigenvalues and particular $G_{1,2}$ are used.

Suppose $A_{~\beta }^{\alpha }:X\rightarrow X$ \ can be decomposed as\ $%
A_{~\beta }^{\alpha }=P_{~i^{\prime }}^{\alpha }\Lambda _{~j^{\prime
}}^{i^{\prime }}\left( P^{-1}\right) _{~\beta }^{j^{\prime }}$ with $\Lambda
_{~j^{\prime }}^{i^{\prime }}=diag\left( \lambda _{1},...,\lambda
_{u}\right) $ and $\lambda _{i}$ real and semipositive. If we choose $%
G_{1\alpha \beta }=G_{2\alpha \beta }=\overline{\left( P^{-1}\right)
_{~\alpha }^{i^{\prime }}}\delta _{i^{\prime }j^{\prime }}\left(
P^{-1}\right) _{~\beta }^{j^{\prime }}$ then 
\begin{eqnarray*}
\left( A^{\ast }\circ A\right) _{~\beta }^{\alpha } &=&G_{2}^{\alpha \gamma }%
\overline{A_{~\gamma }^{\eta }}G_{1\eta \varphi }A_{~\beta }^{\varphi } \\
&=&\left( P_{~i_{1}^{\prime }}^{\alpha }\delta ^{i_{1}^{\prime
}j_{1}^{\prime }}\bar{P}_{~j_{1}^{\prime }}^{\gamma }\right) \left( \bar{P}%
_{~i_{2}^{\prime }}^{\eta }~\bar{\Lambda}_{~j_{2}^{\prime }}^{i_{2}^{\prime
}}~\left( \bar{P}^{-1}\right) _{~\gamma }^{j_{2}^{\prime }}\right) \left(
\left( \bar{P}^{-1}\right) _{~\eta }^{i_{3}^{\prime }}\delta _{i_{3}^{\prime
}j_{3}^{\prime }}\left( P^{-1}\right) _{~\varphi }^{j_{3}^{\prime }}\right)
\left( P_{~i^{\prime }}^{\varphi }\Lambda _{~j^{\prime }}^{i^{\prime
}}\left( P^{-1}\right) _{~\beta }^{j^{\prime }}\right) \\
&=&P_{~i^{\prime }}^{\alpha }\left( \delta ^{i^{\prime }j^{\prime }}%
\overline{\Lambda _{~j^{\prime }}^{k^{\prime }}}\delta _{k^{\prime
}i_{1}^{\prime }}\Lambda _{~j_{2}^{\prime }}^{i_{1}^{\prime }}\right) \left(
P^{-1}\right) _{~\beta }^{j_{2}^{\prime }} \\
&=&P_{~i^{\prime }}^{\alpha }\left( \Lambda ^{2}\right) _{~j^{\prime
}}^{i^{\prime }}\left( P^{-1}\right) _{~\beta }^{j^{\prime }}
\end{eqnarray*}

Therefore the eigenvalues of $\left( A^{\ast }\circ A\right) $ are $\lambda
_{i}^{2},$ thus the singular values of $A$ are $\lambda _{i}$. In addition $%
U_{~i^{\prime }}^{\alpha }=V_{~i^{\prime }}^{\alpha }=P_{~i^{\prime
}}^{\alpha }$ and the orthogonal condition is $\overline{P_{~i^{\prime
}}^{\gamma }}G_{1,2\gamma \eta }P_{~j^{\prime }}^{\eta }=\delta _{i^{\prime
}j^{\prime }}$.\footnote{%
The standard result in textbook is when $G_{1\alpha \beta }=G_{2\alpha \beta
}=\delta _{\alpha \beta }=\overline{\left( P^{-1}\right) _{~\alpha
}^{i^{\prime }}}\delta _{i^{\prime }j^{\prime }}\left( P^{-1}\right)
_{~\beta }^{j^{\prime }}$ it means that $\left( P\right) _{~i^{\prime
}}^{\alpha }$ is orthogonal, and the matrix $A_{~\beta }^{\alpha }$ is
"symmetric".}

Notice that, in the deduction, we used $\left( \delta ^{i^{\prime }j^{\prime
}}\overline{\Lambda _{~j^{\prime }}^{k^{\prime }}}\delta _{k^{\prime
}i_{1}^{\prime }}\Lambda _{~j_{2}^{\prime }}^{i_{1}^{\prime }}\right)
=\left( \Lambda ^{2}\right) _{~j^{\prime }}^{i^{\prime }}$ because $%
A_{~\beta }^{\alpha }$ is diagonalizable, with real eigenvalues. But if $%
\Lambda _{~j^{\prime }}^{i^{\prime }}$\ is a non trivial Jordan form, then
the singular values are the square roots of the eigenvalues of $\delta
^{i^{\prime }j^{\prime }}\overline{\Lambda _{~j^{\prime }}^{k^{\prime }}}%
\delta _{k^{\prime }i_{1}^{\prime }}\Lambda _{~j_{2}^{\prime
}}^{i_{1}^{\prime }}$, the explicit calculation becomes hard even for simple
examples. In chapter \ref{chap_examples} we present an analysis of the $%
2\times 2$ matrix case.

\bigskip

\section{Invariant orders of singular values.   \label{appendix_2}}

In this article, we study perturbed singular values in terms of some
parameter $\varepsilon .$ We use the orders in this parameter, to decide
when a system is strongly hyperbolic or not. But as we showed in appendix %
\ref{appendix}, the singular values depend on two Hermitian forms, therefore
we need to show that these orders remain invariant when we select different
Hermitian forms. Thus we shall prove it in Lemma \ref{Lemma_metricas}. Also,
in theorem \ref{TTor}, we shall show explicit expressions for the first
order, when particular bases are chosen.

Consider two pairs of positive definite Hermitian forms $G_{1AB}$, $%
G_{2\alpha \beta }$ and $\widehat{G}_{1CD}$, $\widehat{G}_{2\alpha \beta }$
in the spaces $E$ and $X$. Since they are positive define, they are
equivalent, i.e. 
\begin{eqnarray}
\bar{U}_{~A}^{C}\widehat{G}_{1CD}U_{~B}^{D} &=&G_{1AB}  \label{1} \\
\bar{V}_{~\alpha }^{\gamma }\widehat{G}_{2\gamma \eta }V_{~\beta }^{\eta }
&=&G_{2\alpha \beta }  \label{2}
\end{eqnarray}

with some $U:E\rightarrow E$ and $V:X\rightarrow X$ invertible.

Consider now the linear operator $T_{~\alpha }^{A}:X\rightarrow E$ with $%
\dim \left( X\right) \leq \dim \left( E\right) .$ We call $\hat{\sigma}_{i}%
\left[ T_{~\alpha }^{A}\right] $ the singular values of $T_{~\alpha }^{A}$
defined using $\widehat{G}_{1CD}$, $\widehat{G}_{2\alpha \beta }$ and $%
\sigma _{i}\left[ T_{~\alpha }^{A}\right] $ using $G_{1AB}$, $G_{2\alpha
\beta }$.

\begin{lemma}
\label{Lemma_metricas} The operator $T_{~\alpha }^{A}\left( \varepsilon
\right) =K_{~\alpha }^{A}+\varepsilon e^{i\theta }B_{~\alpha
}^{A}:X\rightarrow E,$ has singular values $\sigma _{i}[T_{~\alpha
}^{A}]=O\left( \varepsilon ^{l_{i}}\right) $ with $i=1,...,\dim \left(
X\right) $ for some $l_{i}$ if and only if $\hat{\sigma}_{i}[T_{~\alpha
}^{A}]=O\left( \varepsilon ^{l_{i}}\right) $
\end{lemma}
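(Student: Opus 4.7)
The plan is to establish that the singular values of any operator $T: X \to E$ with respect to two different pairs of positive-definite Hermitian forms are comparable up to multiplicative constants that depend only on the forms, not on $T$. More precisely, I aim to show that there exist positive constants $c, C$, determined by $G_1, G_2, \widehat{G}_1, \widehat{G}_2$, such that $c\,\sigma_i[T] \leq \hat{\sigma}_i[T] \leq C\,\sigma_i[T]$ for every $i$. Once this is in hand, the lemma follows immediately: since $c, C$ are independent of the parameter $\varepsilon$ in $T(\varepsilon) = K + \varepsilon e^{i\theta} B$, the rates of decay in $\varepsilon$ are preserved, so $\sigma_i[T(\varepsilon)] = O(\varepsilon^{l_i})$ if and only if $\hat{\sigma}_i[T(\varepsilon)] = O(\varepsilon^{l_i})$.

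To produce the two-sided bound I would invoke the min-max characterization of singular values in the two-form SVD setting,
\begin{equation*}
\sigma_k[T; G_1, G_2] \;=\; \max_{\dim W = k}\; \min_{x \in W \setminus \{0\}}\, \frac{\|T x\|_{G_1}}{\|x\|_{G_2}},
\end{equation*}
which can be read off from Theorem \ref{SVD_theorem} by transferring the problem to the self-adjoint (with respect to $G_2$) operator $T^{\ast}\circ T$, whose eigenvalues are the $\sigma_i^2$. Equations (1) and (2) of the lemma statement immediately give $\|y\|_{G_1} = \|U y\|_{\widehat{G}_1}$ and $\|x\|_{G_2} = \|V x\|_{\widehat{G}_2}$, so the two norms on $E$ (respectively $X$) are equivalent with comparison constants determined by the extreme singular values of $U$ (respectively $V$). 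Specifically, there exist $a, A, b, B > 0$, depending only on $U$ and $V$, such that
\begin{equation*}
a\|y\|_{G_1} \leq \|y\|_{\widehat{G}_1} \leq A\|y\|_{G_1}, \qquad b\|x\|_{G_2} \leq \|x\|_{\widehat{G}_2} \leq B\|x\|_{G_2}.
\end{equation*}
Substituting these into the min-max formula yields $(a/B)\,\sigma_k[T] \leq \hat{\sigma}_k[T] \leq (A/b)\,\sigma_k[T]$, which is the desired comparison.

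The main obstacle is the covariant set-up of the min-max theorem, since the classical Courant-Fischer statement lives on a single inner product space whereas here the source and target spaces carry different Hermitian forms. The cleanest route is to apply Courant-Fischer to the self-adjoint positive-semidefinite operator $T^{\ast}\circ T$ on $(X, G_2)$, whose spectrum consists of the $\sigma_i^2$, and to use the identity $\|T x\|_{G_1}^2 = \bar{x}^{\alpha} G_{2\alpha\beta}(T^{\ast}\circ T)^{\beta}{}_{\gamma}\, x^{\gamma}$ to translate back to the ratio displayed above. Once that is in place, the rest of the argument is a direct substitution, and the conceptual point of the lemma, namely that the matrices $U$ and $V$ in equations (1)-(2) are fixed and do not depend on $\varepsilon$, is precisely what makes the asymptotic orders in $\varepsilon$ invariant under a change of Hermitian forms.
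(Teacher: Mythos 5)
Your proposal is correct and in substance follows the same route as the paper: both arguments reduce to a two-sided multiplicative comparison $c\,\sigma_i[T]\leq\hat{\sigma}_i[T]\leq C\,\sigma_i[T]$ with constants determined by the fixed invertible operators $U$ and $V$ relating the two pairs of Hermitian forms, hence independent of $\varepsilon$. The only difference is presentational: the paper rewrites $\hat{\sigma}_i[T]$ as $\sigma_i[U^{-1}\circ T\circ V]$ and cites the product inequality $\sigma_e[U^{-1}]\sigma_u[V]\,\sigma_i[T]\leq\sigma_i[U^{-1}\circ T\circ V]\leq\sigma_1[U^{-1}]\sigma_1[V]\,\sigma_i[T]$ from Stewart, whereas you derive the equivalent comparison self-containedly from the Courant--Fischer characterization applied to $T^{\ast}\circ T$ together with the norm equivalences induced by $U$ and $V$.
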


\begin{proof}
If we call $\lambda _{i}[T^{\ast }\circ T]$ to the eigenvalues of $T^{\ast
}\circ T$ then%
\begin{equation}
\hat{\sigma}_{i}[T]=\sqrt{\lambda _{i}[T^{\ast }\circ T]}=\sqrt{\lambda
_{i}[V^{-1}\circ T^{\ast }\circ T\circ V]}=\sigma _{i}[U^{-1}\circ T\circ V]
\label{3}
\end{equation}%
The last equality is easy to prove.

Considering $T_{~\alpha }^{A}$ $:X\rightarrow E$ with rank $r$, we recall
that from definition of SVD 
\begin{equation*}
\sigma _{1}[T_{~\alpha }^{A}]\geq \sigma _{2}[T_{~\alpha }^{A}]\geq ...\geq
\sigma _{r}[T_{~\alpha }^{A}]>0=\sigma _{r+1}[T_{~\alpha }^{A}]=...=\sigma
_{\dim \left( X\right) }[T_{~\alpha }^{A}]
\end{equation*}

In \cite{stewart1990matrix} it is proved that 
\begin{equation*}
\sigma _{e}\left[ U^{-1}\right] \sigma _{u}\left[ V\right] \sigma
_{i}[T]\leq \sigma _{i}[U^{-1}\circ T\circ V]\leq \sigma _{i}[T]\sigma _{1}%
\left[ U^{-1}\right] \sigma _{1}\left[ V\right] \text{ \ }\forall i=1,...,u
\end{equation*}%
with $\dim X=u$, $\dim E=e$ and $\sigma _{e}\left[ U^{-1}\right] \sigma _{u}%
\left[ V\right] \neq 0$ \ since $U$ and $V$ are invertible. By this
expression, we see that if $\sigma _{i}[T]=O\left( \varepsilon
^{l_{i}}\right) $ then $\sigma _{i}[U^{-1}\circ T\circ V]=O\left(
\varepsilon ^{l_{i}}\right) $ $\forall i.$ Thus, due to equation (\ref{3}) $%
\hat{\sigma}_{i}[T]$ this is exactly $\sigma _{i}[U^{-1}\circ T\circ V]$ and
we conclude the proof.
\end{proof}

\begin{theorem}
\label{TTor}Let the operator be $T_{~\alpha }^{A}=K_{~\alpha
}^{A}+\varepsilon e^{i\theta }B_{~\alpha }^{A}:X\rightarrow E$ where $%
\varepsilon e^{i\theta }B_{~\alpha }^{A}$ \ represents a perturbation of $K$
with any $\theta \in \left[ 0,2\pi \right] $, $\varepsilon $ real, $0\leq
\left\vert \varepsilon \right\vert <<1$ and $K$ has rank $r$. Consider $%
T_{~\alpha }^{A}$ in basis in which $G_{1AB}=diag\left( 1,...,1\right) $ and 
$G_{2\alpha \beta }=diag\left( 1,...,1\right) $ and the SVD of $T$ is 
\begin{eqnarray*}
T_{~\beta }^{A} &=&\left( U_{2},U_{1},U_{3}\right) _{~i}^{A}\left( 
\begin{array}{cc}
\Sigma _{+} & 0 \\ 
0 & 0 \\ 
0 & 0%
\end{array}%
\right) _{~i^{\prime }}^{i}\delta _{2}^{i^{\prime }j^{\prime }}\left( \bar{V}%
_{2},\bar{V}_{1}\right) _{~j^{\prime }}^{\tau }\delta _{2\tau \beta } \\
&=&\left( U_{2},0,0\right) _{~i}^{A}\left( 
\begin{array}{cc}
\Sigma _{+} & 0 \\ 
0 & 0 \\ 
0 & 0%
\end{array}%
\right) _{~i^{\prime }}^{i}\delta _{2}^{i^{\prime }j^{\prime }}\left( \bar{V}%
_{2},0\right) _{~j^{\prime }}^{\tau }\delta _{2\tau \beta }
\end{eqnarray*}

Therefore

1)If $\sigma _{i}\left[ K_{~\beta }^{A}\right] >0$ \ \ $i=1,...,r$ \ are the
singular values of $K,$ then the singular value of $T_{~\alpha }^{A}$ can be
expanded as 
\begin{equation*}
\sigma _{i}\left[ T_{~\alpha }^{A}\right] =\sigma _{i}\left[ K_{~\beta }^{A}%
\right] +\left\vert \varepsilon \right\vert \xi _{i}+O\left( \varepsilon
^{2}\right) \text{ \ \ \ \ with }i=1,...,r
\end{equation*}

for some $\xi _{i}$ (see \cite{soderstrom1999perturbation} for explicit
formulas)

2) When $K_{~\beta }^{A}$ has $\sigma _{i}[K_{~\beta }^{A}]=0$ for $%
i=r+1,...,u$ \ then the corresponding $u-r$ singular values of $T_{~\alpha
}^{A}$ are 
\begin{eqnarray}
\sigma _{i}\left[ K_{~\alpha }^{A}+\varepsilon e^{i\theta }B_{~\alpha }^{A}%
\right] &=&0+\left\vert \varepsilon \right\vert ~\sigma _{i}\left[ \delta
_{1}^{lj}\left( 0,\bar{U}_{1},\bar{U}_{3}\right) _{~j}^{C}\delta
_{1CD}B_{~\alpha }^{D}\left( 0,V_{1}\right) _{~m}^{\alpha }\right] +O\left(
\varepsilon ^{2}\right)  \label{TTor_1} \\
\text{with }j &=&1,..,e;\text{ }l=r+1,...,e\text{, }m=r+1,...,u\text{ and }%
i=r+1,...,u  \notag
\end{eqnarray}

\footnote{\label{footnote_deriv} These singular values are not
differentiable respect to $\varepsilon $ in $\varepsilon =0,$ due to the
presence of the module $\left\vert \varepsilon \right\vert $. However, it is
possible differentiate respect to the module.} Notice that equation (\ref%
{TTor_1}) does not depend on $\theta $.
\end{theorem}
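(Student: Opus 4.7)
The plan is to reduce everything to a block--matrix perturbation computation by working in the orthonormal bases $(V_{2},V_{1})$ and $(U_{2},U_{1},U_{3})$ supplied by the SVD of $K$ in Theorem~\ref{SVD_theorem}. In those bases $K$ becomes block-diagonal with $\Sigma_{+}$ in the upper-left corner and zeros elsewhere, while the perturbation decomposes into four blocks
$B=\left(\begin{smallmatrix}B_{11}&B_{12}\\B_{21}&B_{22}\end{smallmatrix}\right)$;
the $(e-r)\times(u-r)$ block $B_{22}$ is precisely the covariant expression $\delta_{1}^{lj}(0,\bar{U}_{1},\bar{U}_{3})_{j}{}^{C}\delta_{1CD}B^{D}{}_{\alpha}(0,V_{1})_{m}{}^{\alpha}$ appearing in~(\ref{TTor_1}). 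Lemma~\ref{Lemma_metricas} already guarantees that the orders in $\varepsilon$ of the singular values do not depend on the choice of Hermitian forms, so I may use $\delta_{1},\delta_{2}$ throughout this reduction.

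Next I would compute $T^{\ast}\circ T$ block by block. The upper-left $r\times r$ block is $\Sigma_{+}^{2}+O(\varepsilon)$; the upper-right block equals $\varepsilon e^{i\theta}\Sigma_{+}B_{12}+O(\varepsilon^{2})$; and the lower-right $(u-r)\times(u-r)$ block equals the clean quantity $\varepsilon^{2}(B_{12}^{\ast}B_{12}+B_{22}^{\ast}B_{22})$, with no $\theta$-dependence because $|e^{i\theta}|=1$. Since $\Sigma_{+}^{2}$ is bounded away from zero, for small $|\varepsilon|$ the spectrum of the Hermitian matrix $T^{\ast}\circ T$ splits into a cluster of $r$ eigenvalues near $\sigma_{i}[K]^{2}$ and a cluster of $u-r$ eigenvalues of order $\varepsilon^{2}$. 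The small cluster is read off from the Schur complement $D-C^{\ast}A^{-1}C$, in which the coupling contribution $\varepsilon^{2}(e^{-i\theta}B_{12}^{\ast}\Sigma_{+})\Sigma_{+}^{-2}(e^{i\theta}\Sigma_{+}B_{12})=\varepsilon^{2}B_{12}^{\ast}B_{12}$ cancels the corresponding term in $D$, leaving $\varepsilon^{2}B_{22}^{\ast}B_{22}+O(\varepsilon^{3})$. Taking positive square roots produces the $u-r$ vanishing singular values in the form $|\varepsilon|\,\sigma_{i}[B_{22}]+O(\varepsilon^{2})$, which is exactly~(\ref{TTor_1}); the $\theta$-independence of the leading term is automatic from the cancellation.

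For part~(1), the $r$ large eigenvalues of $T^{\ast}\circ T$ near $\sigma_{i}[K]^{2}>0$ stay bounded away from zero for small $|\varepsilon|$, so standard analytic perturbation theory (Rellich--Kato) gives them smooth Taylor expansions; their positive square roots, being compositions with a smooth function on a positive set, inherit the same smoothness and yield $\sigma_{i}[T]=\sigma_{i}[K]+\xi_{i}\varepsilon+O(\varepsilon^{2})$, with explicit formulas for $\xi_{i}$ available in~\cite{soderstrom1999perturbation,moro2002first}.

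The main obstacle will be rigorously controlling the Schur-complement step, namely showing that the small eigenvalues of the full $u\times u$ matrix $T^{\ast}\circ T$ really agree with those of $\varepsilon^{2}B_{22}^{\ast}B_{22}$ up to $O(\varepsilon^{3})$ rather than merely $O(\varepsilon^{2})$. The cleanest way is to use the factorization $\det(T^{\ast}T-\lambda I)=\det(A-\lambda I)\det\bigl(D-\lambda I-C^{\ast}(A-\lambda I)^{-1}C\bigr)$ and exploit $\lambda=O(\varepsilon^{2})$ together with the invertibility of $A\approx\Sigma_{+}^{2}$, or equivalently to exhibit a block-diagonalizing similarity $I+\varepsilon R+O(\varepsilon^{2})$ where $R$ solves a Sylvester equation governed by $\Sigma_{+}^{2}$, which is uniquely solvable precisely because $\Sigma_{+}$ has no zero singular value. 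Once that is in place the remaining orders follow by routine bookkeeping, and Lemma~\ref{Lemma_metricas} transports the final expansion back to arbitrary positive definite $G_{1},G_{2}$.
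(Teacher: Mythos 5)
The paper never proves Theorem \ref{TTor}; it imports it from S\"oderstr\"om \cite{soderstrom1999perturbation} and Moro et al. \cite{moro2002first}, whose arguments run through eigenvalue perturbation of the Hermitian dilation $\bigl(\begin{smallmatrix}0 & T\\ T^{\ast} & 0\end{smallmatrix}\bigr)$ (Rellich analyticity, Newton diagram). Your block computation on $T^{\ast}\circ T$ is therefore a genuinely independent, self-contained route, and its core is correct: in the SVD bases of Theorem \ref{SVD_theorem} the coupling term $C^{\ast}(A-\lambda I)^{-1}C=\varepsilon^{2}B_{12}^{\ast}B_{12}+O(\varepsilon^{3})$ does cancel the $B_{12}^{\ast}B_{12}$ part of $D$, the surviving block is exactly the operator appearing in (\ref{TTor_1}), the $\theta$-independence of the leading term is automatic, and Lemma \ref{Lemma_metricas} legitimately transports the orders back to arbitrary positive definite $G_{1},G_{2}$. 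The determinant factorization or the $I+\varepsilon R$ block-diagonalization you propose for making the cluster separation rigorous is standard and would work.

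The one place the argument falls short of the statement is the remainder in part (2) when some $\sigma_{i}[B_{22}]=0$. Your Schur complement controls the small eigenvalues of $T^{\ast}T$ only to $\varepsilon^{2}B_{22}^{\ast}B_{22}+O(\varepsilon^{3})$, so after taking square roots a branch whose first-order coefficient vanishes is bounded only by $O(|\varepsilon|^{3/2})$, not the claimed $O(\varepsilon^{2})$; squaring intrinsically costs half an order exactly in the degenerate case. That case is not marginal here: it is the one the paper actually invokes (Theorems \ref{Matrix_theorem}, \ref{deg_eigen} and \ref{Conjetura_final} all argue from ``a singular value of order $O(|\varepsilon|^{l})$ with $l\geq 2$''), even if for the weak-hyperbolicity conclusions an exponent $3/2$ would still suffice. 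To recover the integer orders and the genuine expansion in $|\varepsilon|$ to second order, replace $T^{\ast}T$ by the Hermitian dilation: it depends affinely on $\varepsilon$ for fixed $\theta$, its eigenvalues are $\pm\sigma_{i}$ together with $e-u$ zeros, Rellich gives analytic branches $\mu_{j}(\varepsilon)$, and $\sigma_{i}=|\mu_{j}(\varepsilon)|$ then expands in $|\varepsilon|$ with the first-order coefficients identified, via your own block computation, as the singular values of $B_{22}$. With that substitution the proof is complete.
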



\bibliographystyle{unsrt}
\bibliography{paper2}

\end{document}


\bibliographystyle{unsrtnat} 
\bibliography{paper2}

\end{document}